\documentclass[english,11pt]{article}
\providecommand{\keywords}[1]{\textbf{\textit{Key words---}} #1}
\usepackage{hyperref}
\usepackage{algpseudocode}
\usepackage{algorithmicx}
\usepackage{caption}
\usepackage{mathbbol}
\usepackage[english]{babel}
\usepackage{hyperref}

\usepackage[T1]{fontenc}
\usepackage{graphicx}
\usepackage{subfigure}
\usepackage[utf8]{inputenc}
\usepackage[a4paper]{geometry}
\usepackage{babel}
\usepackage{algorithm}
\usepackage{bbm}
\usepackage{color}
\usepackage{bm}
\usepackage{appendix}
\usepackage{textcomp}
\usepackage{amsthm}
\usepackage{amsmath}
\usepackage{amssymb}
\usepackage{esint}
\usepackage{enumerate}
\usepackage{cite}
\usepackage{pifont}
\usepackage{dsfont}

\newcommand{\mC}{\mathbb{C}}

\allowdisplaybreaks[4] 

\makeatletter

\newtheorem{thm}{Theorem}[section]
\newtheorem{lemma}[thm]{Lemma}
\newtheorem{pro}[thm]{Proposition}

\newtheorem{definition}[thm]{Definition}

\newtheorem{rem}{Remark}

\allowdisplaybreaks[4]

\ifx\proof\undefined
\newenvironment{proof}[1][\protect\proofname]{\par
\normalfont\topsep6\p@\@plus6\p@\relax
\operatorname{tr}ivlist
\itemindent\parindent
\item[\hskip\labelsep\scshape #1]\ignorespaces
}{%
\endtrivlist\@endpefalse
}
\providecommand{\proofname}{Proof}
\fi
\theoremstyle{plain}

\theoremstyle{plain}

\allowdisplaybreaks
\numberwithin{thm}{section}

\makeatother

\providecommand{\lemmaname}{Lemma}
\providecommand{\propositionname}{Proposition}

\begin{document}
\title{Weighted Riemannian Optimization for Solving Quadratic Equations from Gaussian Magnitude  Measurements}
\date{}
\author{Jian-Feng Cai\thanks{J.-F. Cai is with the Department of Mathematics, Hong Kong University of Science and Technology, Clear Water Bay, Kowloon, Hong Kong. Email: jfcai@ust.hk. The work is partially supported by Hong Kong Research Grant Council (HKRGC) GRFs 16307325, 16306124, and 16307023. This work is also supported by Zhejiang Provincial Natural Science Foundation of China under Grant No. LMS25A010008 and NSFC of China under grant number 12271133.},
Huiping Li %
\thanks{H.-P. Li is with the School of Mathematics, Hangzhou Normal University, huipingli@hznu.edu.cn. She is the corresponding author.%
} , Jiayi Li \thanks{J.-Y. Li is with the Department of Mathematics, Hong Kong University of Science and Technology, jligo@connect.ust.hk.
}
}

\maketitle
\begin{abstract}
This paper explores the problem of generalized phase retrieval, which involves reconstructing a length-$n$ signal $\bm{x}$ from its $m$ phaseless samples $y_k = \left|\langle \bm{a}_k,\bm{x}\rangle\right|^2$, where $k = 1,2,...,m$, and $\bm{a}_k$ are the measurement vectors. This problem can be reformulated into recovering a positive semidefinite rank-$1$ matrix $\bm{X}=\bm{x}\bm{x}^*$ from linear samples $\bm{y}=\mathcal{A}(\bm{X})\in\mathbb{R}^m$, thereby requiring us to find a rank-$1$ solution of the linear equations. We demonstrate that several existing phase retrieval algorithms, including Wirtinger Flow (WF) and the canonical Riemannian gradient descent (RGD), actually solve the least-squares fitting of this linear equation on the Riemannian manifold of rank-$1$ matrices, but utilize different metrics on this manifold. Nevertheless, these metrics only allow for a stable and far-apart-from-isometric embedding of rank-$1$ matrices to $\mathbb{R}^m$ by $\mathcal{A}$, resulting in a linear convergence with a considerably large convergence factor. To expedite the convergence, we establish a new metric on the rank-$1$ matrix manifold that facilitates the nearly isometric embedding of rank-$1$ matrices into $\mathbb{R}^m$ through $\mathcal{A}$. A RGD algorithm under this new metric, termed Weighted RGD (WRGD), is proposed to tackle the phase retrieval problem. Owing to the near isometry, we prove that our WRGD algorithm, initialized by spectral methods, can linearly converge to the underlying signal $\bm{x}$ with a small convergence factor. Empirical experiments strongly validate the efficiency and resilience of our algorithms compared to the truncated Wirtinger Flow (TWF) algorithm and the canonical RGD algorithm.
\end{abstract}
\keywords{
  Riemannian optimization, weighted Riemannian gradient descent, phase retrieval, sample complexity}  
\section{Introduction}
In this paper, we are interested in recovering a complex signal $\bm{x} \in \mathbb{C}^n$ from the following system of phaseless equations:
\begin{equation}\label{eq:pf}
y_k=\left|\left\langle \bm{a}_k, \bm{x}\right\rangle\right|^2, \quad k=1, \cdots, m,
\end{equation}
where the data $\bm{y}=\left[y_k\right]_{1 \leq k \leq m}\in\mathbb{R}^n_+$ and the measurement vectors $\bm{a}_k \in \mathbb{C}^n$, $k=1,\ldots,m$, are all known. It is clear that the phase information is missing in \eqref{eq:pf} while only the amplitude is observed due to some physical limitations of optical sensors. This kind of problem is commonly known as the phase retrieval problem, which has been widely applied in many fields, including X-ray crystallography\cite{RefWorks:RefID:7-harrison1993phase,RefWorks:RefID:8-millane1990phase}, diffraction imaging\cite{RefWorks:RefID:3-bandeira2014phase}, microscopy\cite{RefWorks:RefID:6-1987array}, and even quantum mechanics\cite{RefWorks:RefID:5-corbett2006pauli}. 

Significant advancements have been achieved in the theories and algorithms of phase retrieval. In addition to classical algorithms \cite{Fienup:78,Fienup:82,Gerchberg1972APA,Luke:2002}, recent algorithms that provably solve \eqref{eq:pf} have attracted considerable attention \cite{pmlr-v54-bahmani17a,Goldstein:2016,jeong2017convergence,Ma_2019,Netrapalli:2015,tan2018phase,Waldspurger:2015,wang2017solving,Cai_2023,RefWorks:RefID:55-gao2017phaseless}. The PhaseLift method by Candès et al. \cite{RefWorks:RefID:36-candes2013phase,RefWorks:RefID:11-candes2013phaselift}, in particular, has gained significant interest. It involves lifting the unknown vector $\bm{x}$ in $\mathbb{C}^n$ to a rank-$1$ positive semidefinite matrix $\bm{X}:=\bm{x}\bm{x}^*$, transforming the phase retrieval problem \eqref{eq:pf} into a system of linear equations for the matrix $\bm{X}$, as shown below:
\begin{equation}\label{eq:pfmatrix}
y_k=\langle \bm{a}_k\bm{a}_k^*,\bm{X}\rangle, \quad k=1, \cdots, m.
\end{equation}
Let $\mathbb{H}^{n \times n}$ denote the space of all $n \times n$ Hermitian matrices. For $\bm{A}, \bm{B} \in \mathbb{H}^{n \times n}$, define the inner product $\langle \bm{A}, \bm{B} \rangle = \operatorname{tr}(\bm{A}\bm{B})$, which always yields a real value.
With the help of a linear operator defined as $\mathcal{A}~:~\mathbb{H}^{n\times n}\mapsto\mathbb{R}^m$, we can reformulate \eqref{eq:pfmatrix} as
\begin{equation}\label{eq:defA1}
\mathcal{A}(\bm{Z})=\Bigg[\langle \bm{A}_1, \bm{Z} \rangle ,\langle \bm{A}_2, \bm{Z} \rangle ,\ldots,\langle \bm{A}_m, \bm{Z} \rangle \Bigg]^{\top},\quad\forall~\bm{Z}\in\mathbb{H}^{n\times n},
\end{equation}
where $\bm{A}_k:=\bm{a}_k\bm{a}_k^*\in\mathbb{H}^{n\times n}$, $k = 1, 2, \dots, m$, are measurement matrices. PhaseLift then applies convex nuclear norm minimization and positive semidefinite condition to find a positive semidefinite and rank-$1$ solution to satisfy \eqref{eq:pfmatrix}. Namely, 
\[
\min _{\bm{Z} \in \mathbb{H}^{n\times n}}\|\bm{Z}\|_{\ast},\quad \text{s.t.}\quad\mathcal{A}(\bm{Z})=\bm{y},~~\bm{Z}\succeq0.    
\]
Although customized algorithms, e.g., SDP methods, can solve this nuclear norm minimization problem, they suffer from lower computational efficiency due to the increase in the number of unknowns from $n$ to $n^2$.

To enhance computational efficiency, researchers have developed provable non-convex phase retrieval algorithms. One such category of algorithms seeks a solution to the lifted problem \eqref{eq:pfmatrix} within the smooth Riemannian manifold $\mathcal{M}_1$, which comprises all Hermitian rank-$1$ matrices embedded in $\mathbb{H}^{n\times n}$ over $\mathbb{R}$, namely, 
\begin{equation}\label{rm11}
 \min_{\bm{Z} \in \mathcal{M}_1} \frac{1}{2m} \| \mathcal{A} (\bm{Z})- \bm{y} \|_2^2.
\end{equation}
Then a canonical RGD algorithm is applied to solve $\eqref{rm11}$, which is the gradient descent with respect to the canonical metric on the embedded Riemannian manifold $\mathcal{M}_1$. Using the canonical metric from the ambient space $\mathbb{H}^{n\times n}$ over $\mathbb{R}$, the canonical RGD and RCG have been proposed and studied \cite{RefWorks:RefID:34-cai2018solving,RefWorks:RefID:33-li2021riemannian} for phase retrieval problems. By exploiting the special structure of $\mathcal{M}_1$ with the canonical metric, the computational cost of canonical RGD per iteration is significantly reduced --- there is no large-scale singular value decomposition (SVD) computation involved. Meanwhile, when initialized by the spectral method, it is theoretically proved that canonical RGD converges linearly to the underlying positive semidefinite and rank-1 matrix $\bm{X}$ and the convergence factor is dependent on the condition number of the sampling operator $\mathcal{A}$ when restricted on the tangent space of the manifold $\mathcal{M}_1$. All those make canonical RGD one of the most efficient algorithms for solving the phase retrieval problem. Besides, the performance of canonical RGD can be further improved by Riemannian conjugate gradient descent (RCG) algorithms. Another category of non-convex algorithms aims to minimize loss functions, e.g.
\begin{equation}\label{leastsquare2}\nonumber
 \min_{\bm{z} \in \mathbb{C}^{n}}\frac{1}{2m}\sum_{k=1}^{m}( y_k-|\langle\bm{a}_k,\bm{z}\rangle|^2)^2,   
\end{equation}
in the vector space for \eqref{eq:pf}, exemplified by Wirtinger Flow (WF) \cite{RefWorks:RefID:14-candes2015phase} and Truncated Wirtinger Flow (TWF) \cite{RefWorks:RefID:12-chen2017solving,RefWorks:RefID:41-li2022sampling}. Although these methods optimize the vector $\bm{x}$, they can be viewed as non-convex approaches to find a positive semidefinite and rank-$1$ solution of the lifted problem \eqref{eq:pfmatrix}, by factorizing the unknown matrix $\bm{X}$ as $\bm{x}\bm{x}^*$.

Actually, these two categories of non-convex algorithms can be reconciled through the Riemannian optimization methods. We have demonstrated that gradient-type algorithms, including the canonical RGD, a manifold-based gradient descent algorithm, and the WF, a factorization-based gradient descent algorithm, are gradient descent algorithms on the manifold $\mathcal{M}_1$ with different Riemannian metrics and retraction operators, see \cite{RefWorks:RefID:24-jiayili2024phase} for details. We then found that the convergence rate of Riemannian gradient descent algorithms is significantly influenced by the condition number of the sampling operator $\mathcal{A}$ when restricted to the tangent space of the manifold $\mathcal{M}_1$. 
Nonetheless, neither the canonical RGD nor WF employs a metric that optimizes the condition number, resulting in a relatively slow linear convergence rate.  Thus, a natural question is:

\begin{center}
\emph{Can we find a more suitable metric on $\mathcal{M}_1$ that leads to faster convergence of general RGD?}
\end{center}

We answer this question in the affirmative by constructing a new Riemannian metric through analyzing $\mathbb{E}\big[\langle \mathcal{A}(\bm{W}_1), \mathcal{A}(\bm{W}_2) \rangle\big]$ for any $\bm{W}_1,~\bm{W}_2 \in\mathbb{T}_{\bm{Z}}$ and setting up the new metric for the tangent space $\mathbb{T}_{\bm{Z}}$ of $\mathcal{M}_1$ at each point $\bm{Z}$. The resulting algorithm, which is RGD with the new metric, is called Weighted RGD (WRGD). We demonstrate through numerical experiments that WRGD is significantly more efficient and faster than RGD (with canonical metric), TAF, and TWF. On the theoretical side, we also prove that WRGD converges linearly to the underlying rank-1 matrix $\bm{X}$ under random complex Gaussian measurements, when initialized by truncated spectral methods. Under the new metric, the condition numbers remain consistently close to 1 in different models. Crucially, this metric directly forces the contracting factor toward 0 - a property that can be rigorously proven, resulting in accelerated convergence and improved linear convergence rates.

The main contributions of this paper can be summarized as follows: 
\begin{itemize}
\item \emph{Optimal metric.}  We derive a simple, computationally efficient Riemannian metric $\mathfrak{g}$ from the sampling operator $\mathcal{A}$. For each tangent space $\mathbb{T}_{\bm{Z}}$ of the smooth manifold $\mathcal{M}_1$, we define $\mathfrak{g}$ through the expectation:
\[
\mathbb{E}\big[\langle \mathcal{A}(\bm{W}_1), \mathcal{A}(\bm{W}_2) \rangle\big] = \langle \bm{W}_1, \bm{W}_2 \rangle_{\mathfrak{g}} \quad \text{for any} \quad \bm{W}_1, \bm{W}_2 \in\mathbb{T}_{\bm{Z}}.
\]
This weighted metric ensures that when $\langle \mathcal{A}(\bm{W}_1), \mathcal{A}(\bm{W}_2) \rangle$ concentrates strongly around its expectation, we have $ \|\mathcal{A}(\bm{ W})\|_2^2\approx \|\bm{W}\|_{\mathfrak{g}}^2$, for any $\bm{W} \in\mathbb{T}_{\bm{Z}}$, thus the condition number of $\mathcal{A}$ approaches 1. This will directly enable faster convergence. Under this metric, we develop one novel algorithm for phase retrieval: Weighted Riemannian Gradient Descent (WRGD), derived by applying the Riemannian gradient descent (RGD) method.

\item \emph{Recovery guarantee.}  We prove the local convergence of WRGD in terms of the restricted near-isometric property of the sensing operator $\mathcal{A}$. Consequently, WRGD converges swiftly to the rank-1 and positive semidefinite matrix related to the original phase retrieval solution. This is supported by our theoretical and experimental results. Theoretically, we prove that despite the non-convex nature of the algorithm, the proposed WRGD converges linearly to the global minimum with a convergence factor close to $0$, given $m=O(n)$ random complex Gaussian samples. These sample complexities are optimal for the phase retrieval problem. 
\item \emph{Empirical experiments.} We evaluate the performance of the WRGD compared to canonical RGD, WF-typed, and TAF algorithms. Our numerical results demonstrate that WRGD outperforms the canonical RGD, WF, and TAF algorithms in terms of both iteration counts and computational times. As a result, WRGD is much more efficient than RGD with the canonical metric.
\end{itemize}

The paper is organized as follows. In Section~\ref{sec2}, we unify two distinct approaches --- the manifold-based canonical RGD algorithm and the factorization-based WF method --- within a common Riemannian gradient descent framework, and compare their convergence.
In Section \ref{sec3}, we present the truncated WRGD algorithm, along with exact recovery guarantees for the truncated WRGD algorithm. We prove the exact recovery of our algorithms in Sections \ref{sec4} and \ref{sec5}. Section \ref{sec4} presents the necessary technical lemmas, and Section \ref{sec5} contains the proofs of the main results. Section \ref{sec6} is dedicated to a comparative analysis of our algorithms with TAF, TWF, and the canonical Riemannian gradient descent algorithm through a series of numerical experiments. Finally, in Section \ref{sec7}, we conclude the paper and discuss potential future directions.

To state and prove the main results of this paper, we shall introduce some basic notations. Let $\bm{y}^*$ be the conjugate transpose of a vector $\bm{y}\in\mC^n$. We denote the space of all Hermitian matrices by $\mathbb{H}^{n\times n}$. The inner product $\langle\cdot,\cdot \rangle$ denotes the standard inner product, and the $\operatorname{tr}(\cdot)$ indicates the trace of a matrix. For the matrix-valued operator $\mathcal{M}$, the operator norm corresponding to the Frobenius norm becomes
\begin{equation} \label{eq:13}
\|\mathcal{M}\|=\sup _{\bm{W} \in \mathbb{H}^{n \times n} }\frac{|\langle\bm{W},\mathcal{M} (\bm{W})\rangle|}{\|\bm{W}\|_F^2}.
\end{equation}
Similarly, the operator norm according to the newly defined metric $\mathfrak{g}$ becomes
\begin{equation}\label{eq:14}
 \|\mathcal{M}\|_{o p}^{\mathfrak{g}}=\sup _{\bm{W} \in \mathbb{H}^{n \times n} }\frac{|\langle \mathcal{M} (\bm{W}), \bm{W} \rangle_{\mathfrak{g}}|}{\|\bm{W}\|_{\mathfrak{g}}^2}.   
\end{equation}
We denote $\mathcal{I}$ as the identity operator from $\mathbb{C}^{n\times n}$ to $\mathbb{C}^{n\times n}$. We exploit $``\geq,~ \leq"$ and $``\succeq,~\preceq"$ to show the positive or negative semi-definiteness of matrices and matrix-valued operators, respectively. $A\lesssim B$ denotes that $A$ is approximately equal to $B$. $\{\bm{e}_j\}_{1\leq j\leq n}$ of $\mathbb{C}^n$ are the standard orthogonal basis. Denote $\|\cdot\|_{\ast},~\|\cdot\|_{F},~\|\cdot\|$ as the nuclear norm, Frobenius norm, and operator norm of matrices. Also, we denote $\|\cdot\|_{p}$ as the $\ell_{p}$-norm of vectors. 

\section{General Riemannian Gradient Descent Algorithms}\label{sec2}
In this section, we develop a general Riemannian gradient descent method unifying manifold-based and factorization-based approaches for solving the phase retrieval problem. To begin with, we review the Riemannian gradient descent method for solving general optimization problems on the Riemannian manifold. Then, by selecting distinct metrics and retraction operators, we found that both the canonical RGD and the WF algorithms can be unified as the Riemannian gradient descent algorithm. Finally, we compare their convergence speed using different metrics on the Riemannian manifold $\mathcal{M}_1$, which comprises all Hermitian rank-$1$ matrices embedded in $\mathbb{H}^{n\times n}$ over $\mathbb{R}$.

We present that a general optimization on the Riemannian manifold  $\mathcal{M}$ embedded in a linear space is
\begin{equation}\label{GM-model1}
\min_{\bm{Z} \in \mathcal{M}} F(\bm{Z}),
\end{equation}
where $F:\mathbb{E}\mapsto\mathbb{R}$ is an objective function. The Riemannian gradient descent (RGD) \cite{boumal2023introduction} is a popular first-order algorithm for solving \eqref{GM-model1}. For general RGD method, starting from an initial guess $\bm{Z}_0$, RGD generates a sequence of iterations by the following update rule:
\begin{equation}\label{eq:RGDgeneral}
\bm{Z}_{t+1} = \mathcal{R}_{\bm{Z}_t} \left( \bm{Z}_t - \alpha_t \nabla^{(\mathfrak{g})}_{\mathcal{M}} F(\bm{Z}_t) \right),\quad t=0,1,2,\ldots,
\end{equation}
where $\alpha_t > 0$ is the step size, $\nabla^{(\mathfrak{g})}_{\mathcal{M}} F(\bm{Z}_t)$ is the gradient of $F$ at $\bm{Z}_t$ with respect to the Riemannian metric $\mathfrak{g}$ of the manifold $\mathcal{M}$, and $\mathcal{R}_{\bm{Z}_t}$ is a retraction operator. At each iteration, RGD first performs one step of the standard gradient descent in the tangent space of $\mathcal{M}$ at $\bm{Z}_t$, and then it retracts the iterate from the tangent space back to $\mathcal{M}$ by the retraction operator $\mathcal{R}_{\bm{Z}_t}$. Here, $\mathcal{R}_{\bm{Z}_t}$ is a retraction operator that maps a point on the tangent space $\mathbb{T}_{\bm{Z}}\mathcal{M}$ of $\mathcal{M}$ at $\bm{Z}_t$, back to a point on $\mathcal{M}$. We present the formal definition of \textit{retraction} below.
\begin{definition}[{\cite[Chapter 4.1]{absil2008optimization}}]\label{defre}
Given $\bm{Z} \in \mathcal{M}$, a retraction is a smooth mapping $\mathcal{R}_{\bm{Z}}:\mathbb{T}_{\bm{Z}}\mathcal{M} \rightarrow \mathcal{M}$ with 
\begin{enumerate}
    \item $\mathcal{R}_{\bm{Z}}(\bm{Z})=\bm{Z}$, and
    \item $\mathcal{J}_{\mathcal{R}_{\bm{Z}}}(\bm{Z})= \mathcal{I} $ , where $ \mathcal{J}_{\mathcal{R}_{\bm{Z}}}(\bm{Z})$ denotes the differential of $\mathcal{R}_{\bm{Z}}$ at the point $\bm{Z}$.
\end{enumerate}
\end{definition}

\subsection{Canonical Riemannian Gradient Descent}
Based on the lifting technique and the linear operator $\mathcal{A}$, the phase retrieval problem \eqref{eq:pf} can be regarded as seeking a solution to the lifted problem within the smooth manifold $\mathcal{M}_1$, namely, 
\begin{equation}\label{eq:pfmatrix1}
\min_{\bm{Z} \in \mathcal{M}_1} \frac{1}{2m} \| \mathcal{A} (\bm{Z})- \bm{y} \|_2^2, 
\end{equation}
where $\mathcal{M}_1$ comprises all Hermitian rank-$1$ matrices embedded in $\mathbb{H}^{n\times n}$ over $\mathbb{R}$ and thus forms a smooth manifold. Equipping $\mathcal{M}_1$ with the canonical ambient metric from $\mathbb{H}^{n\times n}$ and choosing a retraction operator $\mathcal{R}_{\bm{Z}}$, we implement \eqref{eq:RGDgeneral} to obtain the canonical RGD algorithm, as introduced in \cite{RefWorks:RefID:34-cai2018solving,RefWorks:RefID:33-li2021riemannian} for solving \eqref{eq:pfmatrix1}. 
\begin{itemize}
    \item \textit{Riemannian metric and Riemannian gradient.} For $\bm{Z}\in {\mathcal{M}_1}$, the tangent space of $\mathcal{M}_1$ at any rank-$1$ matrix $\bm{Z} = \beta\bm{u}\bm{u}^*$ with $\bm{u}\in \mathbb{C}^n$, $\|\bm{u}\|_2=1$, and $\beta\in\mathbb{R}\setminus\{0\}$ is
\begin{equation}\label{tangent_space11}
\mathbb{T}_{\bm{Z}} = \{\bm{u}\bm{w}^*+\bm{w}\bm{u}^*~|~\bm{w}\in\mathbb{C}^{n}\}.
\end{equation}
Then we define the inner product on ${\mathbb{T}_{\bm{Z}}}$ to align with the canonical inner product of the ambient space $\mathbb{H}^{n\times n}$ over $\mathbb{R}$, i.e.,
\begin{equation}\label{Inner_product}
\langle\bm{ A}, \bm{B}\rangle = \operatorname{tr}(\bm{A}^*\bm{B}),\qquad\forall~\bm{A},\bm{B}\in\mathbb{T}_{\bm{Z}}, 
\end{equation}
which is always real.
The Riemannian gradient of the function $F(\bm{Z}):=\frac{1}{2m} \| \mathcal{A} (\bm{Z})- \bm{y} \|_2^2$ at $\bm{Y}\in\mathbb{T}_{\bm{Z}}$ is thus given by
\begin{equation}\label{M1}
\nabla_{\mathcal{M}_1} F(\bm{Y})=\mathcal{P}_{\mathbb{T}_{\bm{Z}}} \nabla F(\bm{Y}),
\end{equation}
where $\nabla F$ is the Euclidean gradient of $F:\mathbb{H}^{n\times n}\mapsto\mathbb{R}$ and $\mathcal{P}_{\mathbb{T}_{\bm{Z}}}:\mathbb{H}^{n\times n}\mapsto\mathbb{T}_{\bm{Z}}$ represents the orthogonal projector, both with respect to the canonical inner product in $\mathbb{H}^{n\times n}$ over $\mathbb{R}$. Utilizing the definition of the tangent space $\mathbb{T}_{\bm{Z}}$ as given in $(\ref{tangent_space11})$, the orthogonal projection onto $\mathbb{T}_{\bm{Z}}$ can be represented by
\begin{equation}\label{ortho projector}
\mathcal{P}_{\mathbb{T}_{\bm{Z}}}(\bm{W})= \bm{uu}^{*}\bm{W}+\bm{W}\bm{uu}^{*}-\bm{uu}^{*}\bm{W}\bm{uu}^{*}, ~~\forall~\bm{W}\in \mathbb{H}^{n\times n}.
\end{equation}
\item \textit{Retraction.} For any $\bm{Z}_t$, we select $\mathcal{R}_{\bm{Z}_t}$ to be the $1$-truncated SVD operator $\mathcal{H}_1$, which finds the best rank-$1$ approximation and acts as the projection onto $\mathcal{M}_1$ under the canonical metric. This retraction ensures that a point from the tangent space $\mathbb{T}_{\bm{Z}_t}$ is retracted back to the Riemannian manifold $\mathcal{M}_1$. It can be proved that $\mathcal{H}_1$ satisfies all the conditions of Definition $\ref{defre}$.
\end{itemize}
Consequently, \eqref{eq:RGDgeneral} is formulated as the following, which is the canonical RGD
$$
\begin{aligned}
\bm{Z}_{t+1} & =\mathcal{H}_1\big(\bm{Z}_t-\alpha_t\mathcal{P}_{\mathbb{T}_{\bm{Z}_t}} \nabla F\left(\bm{Z}_t\right)\big) \\
& =\mathcal{H}_1 \big(\mathcal{P}_{\mathbb{T}_{\bm{Z}_t}}(\bm{Z}_t-\frac{\alpha_t}{m} \mathcal{A}^*\left(\mathcal{A} (\bm{Z}_t)-\bm{y})\right) \big)
\end{aligned}
$$
Considering the specific probabilistic models for the sampling operator $\mathcal{A}$, we replace the full Euclidean gradient $\nabla F\left(\bm{Z}_t\right)=\frac{1}{m}\sum_{k=1}^{m}\left(\langle\bm{a}_k\bm{a}_k^*,\bm{Z}_t\rangle-y_k\right)\bm{a}_k\bm{a}_k^*$ with its truncated counterpart $\nabla F_t\left(\bm{Z}_t\right)=\frac{1}{m}\sum_{k\in \mathbb{I}_t}\left(\langle\bm{a}_k\bm{a}_k^*,\bm{Z}_t\rangle-y_k\right)\bm{a}_k\bm{a}_k^*$, where $\mathbb{I}_t\subset\{1,\ldots,m\}$ is determined by certain truncation rules in accordance with the probabilistic model of $\mathcal{A}$. This leads to the algorithms proposed in \cite{RefWorks:RefID:34-cai2018solving,RefWorks:RefID:33-li2021riemannian}. We refer to them as the canonical RGD algorithm, as they employ the canonical metric derived from the ambient space of $\mathcal{M}_1$. Moreover, a similar derivation for the canonical RCG algorithm can be found in \cite{RefWorks:RefID:33-li2021riemannian}.

\subsection{Wirtinger Flow Algorithm as Riemannian Gradient Descent}
Different from the canonical RGD, the Wirtinger flow (WF) algorithm \cite{RefWorks:RefID:14-candes2015phase,RefWorks:RefID:12-chen2017solving}  solves the phase retrieval problem in the vector space $\mathbb{C}^n$. Specifically, WF finds the solution through the following least squares problem in $\mathbb{C}^n$ 
\begin{equation} \nonumber
\min_{\bm{z} \in \mathbb{C}^{n}}\frac{1}{2m}\sum_{k=1}^{m}( y_k-|\langle\bm{a}_k,\bm{z}\rangle|^2)^2
\end{equation}
by Wirtinger gradient flow
\begin{equation}\label{eq:wf2}
\begin{split}
\bm{z}_{t+1}=\bm{z}_t-\frac{\alpha_t}{\left\|\bm{z}_t\right\|^2_2} \nabla f\left(\bm{z}_t\right),
\end{split}
\end{equation} 
where $\alpha_t$ is the stepsize and $\nabla f$ is the Wirtinger gradient of $f(\bm{z}):=\frac{1}{2m}\sum_{k=1}^{m}(y_k-|\langle\bm{a}_k,\bm{z}\rangle|^2)^2$.

Although the WF algorithm is formulated to solve the least squares problem in vector space, it implicitly operates on the matrix space in the manifold $\mathcal{M}_1$. In fact, the algorithm can be interpreted as a special case of Riemannian gradient descent for phase retrieval, by choosing a proper Riemannian metric on the smooth manifold $\mathcal{M}_1$ and a retraction operator. 

To demonstrate this, we reformulate \eqref{eq:wf2} in matrix space. Define $\bm{Z}_{t} = \bm{z}_{t}\bm{z}_{t}^*$. Then the update rule becomes:
\begin{equation}\label{wf2}
\begin{split}
\bm{Z}_{t+1} &= \bm{z}_{t+1}\bm{z}_{t+1}^* \\
&= \Big(\bm{z}_t - \frac{\alpha_t}{\|\bm{z}_t\|^2_2} \nabla f(\bm{z}_t)\Big)\Big(\bm{z}_t - \frac{\alpha_t}{\|\bm{z}_t\|^2_2} \nabla f(\bm{z}_t)\Big)^* \\
&= \bm{Z}_t - \frac{\alpha_t}{\|\bm{z}_t\|^2_2} \left(\nabla f(\bm{z}_t)\bm{z}_t^* + \bm{z}_t\nabla f(\bm{z}_t)^*\right) + \frac{\alpha_t^2}{\|\bm{z}_t\|_2^4}\nabla f(\bm{z}_t)\nabla f(\bm{z}_t)^* \\
&= \bm{Z}_t - 2\alpha_t \left(\nabla F(\bm{Z}_t)\bm{u}_t\bm{u}_t^* + \bm{u}_t\bm{u}_t^*\nabla F(\bm{Z}_t)\right) + \frac{\alpha_t^2}{\|\bm{z}_t\|_2^4}\nabla f(\bm{z}_t)\nabla f(\bm{z}_t)^*,
\end{split}
\end{equation}
where $\bm{u}_t := \bm{z}_t/\|\bm{z}_t\|_2$ is the normalized direction of $\bm{z}_t$, and the final equality follows from the identity $\nabla f(\bm{z}_t) = 2\nabla F(\bm{Z}_t)\bm{z}_t$.
\begin{itemize}
\item \textit{Riemannian metric and Riemannian gradient.}  We define a linear weighting operator $\mathcal{W}: \mathbb{H}^{n \times n} \mapsto \mathbb{H}^{n \times n}$ by
    \[
    \mathcal{W}(\bm{Z}) = \bm{Z} - \frac{1}{2}\langle \bm{Z}, \bm{I}_n \rangle \bm{I}_n, \quad \forall~ \bm{Z} \in \mathbb{H}^{n \times n},
    \]
    where $\bm{I}_n$ denotes the $n$-dimensional identity matrix.
Then for the ambient space $\mathbb{H}^{n\times n}$ over $\mathbb{R}$, we can define a pseudo-inner product as following
\begin{equation}\label{metric2}
\langle \bm{A}, \bm{B} \rangle_\mathfrak{w}= \langle \mathcal{W}(\bm{A}),\bm{B} \rangle = \langle \bm{A},\bm{B}\rangle - \frac{1}{2}\operatorname{tr}(\bm{A})\operatorname{tr}(\bm{B}),  \quad \forall~ \bm{A}, \bm{B} \in \mathbb{H}^{n\times n}.  
\end{equation}
Note that $\langle \cdot, \cdot \rangle_{\mathfrak{w}}$ does not define an inner product on the ambient space $\mathbb{H}^{n\times n}$, since $\langle \bm{A}, \bm{A} \rangle_{\mathfrak{w}}$ can be negative. However, when restricted to the tangent space $\mathbb{T}_{\bm{Z}}$ (for any $\bm{Z} \in \mathcal{M}_1$), it becomes non-negative and thereby induces a Riemannian metric on $\mathcal{M}_1$.
Equipped with this Riemannian metric, $\mathcal{M}_1$ is a smooth Riemmanian manifold embedded in $\mathbb{H}^{n\times n}$ over $\mathbb{R}$. 

Now let us calculate the Riemannian gradient $\nabla_{\mathcal{M}_1}^{(\mathfrak{w})}F$ of the function $F$ on $\mathcal{M}_1$ under $\langle\cdot,\cdot\rangle_\mathfrak{w}$. For this purpose, we first introduce a lemma.
\begin{lemma}\label{lemma:trace}
Let $\bm{Z} = \beta\bm{u}\bm{u}^* \in \mathcal{M}_1$ where $\beta \in \mathbb{R}\setminus\{0\}$ and $\bm{u} \in \mathbb{C}^n$ with $\|\bm{u}\|_2 = 1$.
Then, for any $\bm{Y} \in \mathbb{T}_{\bm{Z}}$, we have
\[
\operatorname{tr}(\bm{Y}) = \bm{u}^*\bm{Y}\bm{u}.
\]
\end{lemma}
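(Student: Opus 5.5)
The plan is to argue directly from the explicit description \eqref{tangent_space11} of the tangent space. No earlier result is needed beyond this parametrization and the normalization $\|\bm{u}\|_2=1$. Every $\bm{Y}\in\mathbb{T}_{\bm{Z}}$ can be written as $\bm{Y}=\bm{u}\bm{w}^*+\bm{w}\bm{u}^*$ for some $\bm{w}\in\mathbb{C}^n$. So it suffices to compute both $\operatorname{tr}(\bm{Y})$ and $\bm{u}^*\bm{Y}\bm{u}$ in terms of $\bm{u}$ and $\bm{w}$ and check that they agree. The factor $\beta$ plays no role, since the tangent space depends only on $\bm{u}$.

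First, by linearity of the trace and the cyclic identity $\operatorname{tr}(\bm{a}\bm{b}^*)=\bm{b}^*\bm{a}$, I compute
\[
\operatorname{tr}(\bm{Y})=\operatorname{tr}(\bm{u}\bm{w}^*)+\operatorname{tr}(\bm{w}\bm{u}^*)=\bm{w}^*\bm{u}+\bm{u}^*\bm{w}.
\]
Second, I expand the quadratic form and use $\bm{u}^*\bm{u}=\|\bm{u}\|_2^2=1$:
\[
\bm{u}^*\bm{Y}\bm{u}=(\bm{u}^*\bm{u})(\bm{w}^*\bm{u})+(\bm{u}^*\bm{w})(\bm{u}^*\bm{u})=\bm{w}^*\bm{u}+\bm{u}^*\bm{w}.
\]
Comparing the two displays gives the claim. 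As a consistency check, both quantities equal $2\,\mathrm{Re}(\bm{u}^*\bm{w})$ and are therefore real, as they must be for Hermitian $\bm{Y}$.

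There is no genuine obstacle here. The only point needing care is that the unit-norm assumption on $\bm{u}$ is exactly what makes the factors $\bm{u}^*\bm{u}$ disappear in the second computation. This identity is presumably what will later make the pseudo-inner product $\langle\cdot,\cdot\rangle_{\mathfrak{w}}$ nonnegative on $\mathbb{T}_{\bm{Z}}$. The reason is that $|\operatorname{tr}(\bm{Y})|=|\bm{u}^*\bm{Y}\bm{u}|\le\|\bm{Y}\|_F$, which bounds the subtracted term $\tfrac12\operatorname{tr}(\bm{Y})^2$ by $\tfrac12\|\bm{Y}\|_F^2$.
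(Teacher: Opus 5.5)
Your proof is correct and is essentially the paper's argument: the paper first notes that every tangent vector satisfies $\langle \bm{Y}, \bm{I}_n - \bm{u}\bm{u}^*\rangle = 0$ (from the parametrization \eqref{tangent_space11}) and then splits $\operatorname{tr}(\bm{Y}) = \langle \bm{Y}, \bm{u}\bm{u}^*\rangle + \langle \bm{Y}, \bm{I}_n - \bm{u}\bm{u}^*\rangle$, whereas you expand both sides directly in terms of $\bm{w}$. Your side remark is also correct: $|\operatorname{tr}(\bm{Y})|\le\|\bm{Y}\|_F$ gives $\langle \bm{Y},\bm{Y}\rangle_{\mathfrak{w}}\ge\tfrac12\|\bm{Y}\|_F^2$ on $\mathbb{T}_{\bm{Z}}$, which is a bit stronger than the nonnegativity the paper asserts without proof.
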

\begin{proof}
Since $\bm{Y}\in\mathbb{T}_{\bm{Z}}$, it follows from \eqref{tangent_space11} that $\langle \bm{Y},\bm{I}_n - \bm{uu}^*\rangle=0$. With it, 
a direct calculation gives $\operatorname{tr}(\bm{Y}) = \langle \bm{Y}, \bm{I}_n \rangle 
= \langle \bm{Y}, \bm{uu}^* + (\bm{I}_n - \bm{uu}^*) \rangle 
= \langle \bm{Y}, \bm{uu}^* \rangle = \bm{u}^*\bm{Y}\bm{u}$.
\end{proof}
Consider a smooth curve $\bm{Z}(s) \in \mathcal{M}_1$ for $s \in \mathbb{R}$ satisfying $\bm{Z}(0) = \bm{Z}=\beta\bm{u}\bm{u}^*$. 
Differentiating $F(\bm{Z}(s))$ with respect to $s$ at $s=0$ gives
\begin{equation}\label{differential}
\begin{split}
&\left.\frac{d}{ds} F(\bm{Z}(s))\right|_{s=0} = \langle \dot{\bm{Z}}(0), \nabla F(\bm{Z}) \rangle\\
&= \langle \dot{\bm{Z}}(0), \bm{uu}^*\nabla F(\bm{Z}) + \nabla F(\bm{Z})\bm{uu}^* \rangle 
 -\tfrac{1}{2}\langle \dot{\bm{Z}}(0), \bm{uu}^*(\bm{uu}^*\nabla F(\bm{Z}) + \nabla F(\bm{Z})\bm{uu}^*)\bm{uu}^* \rangle \\
&= \langle \dot{\bm{Z}}(0), \bm{uu}^*\nabla F(\bm{Z}) + \nabla F(\bm{Z})\bm{uu}^* \rangle -\tfrac{1}{2}\big(\bm{u}^*\dot{\bm{Z}}(0)\bm{u}\big)\cdot\big(\bm{u}^*(\bm{uu}^*\nabla F(\bm{Z}) + \nabla F(\bm{Z})\bm{uu}^*) \bm{u}\big) \\
&= \langle \dot{\bm{Z}}(0), \bm{uu}^*\nabla F(\bm{Z}) + \nabla F(\bm{Z})\bm{uu}^* \rangle  -\tfrac{1}{2}\operatorname{tr}(\dot{\bm{Z}}(0))\cdot\operatorname{tr}(\bm{uu}^*\nabla F(\bm{Z}) + \nabla F(\bm{Z})\bm{uu}^*) \\
&= \langle \dot{\bm{Z}}(0), \bm{uu}^*\nabla F(\bm{Z}) + \nabla F(\bm{Z})\bm{uu}^*\rangle_{\mathfrak{w}},
\end{split}
\end{equation}
where the second equality follows from $\langle \dot{\bm{Z}}(0), \bm{I}_n-\bm{uu}^*\rangle =0$ as $\dot{\bm{Z}}(0) \in \mathbb{T}_{\bm{Z}}$, 
and the fourth equality is due to Lemma \ref{lemma:trace}.
Since $\bm{uu}^*\nabla F(\bm{Z}) + \nabla F(\bm{Z})\bm{uu}^*\in\mathbb{T}_{\bm{Z}}$, \eqref{differential} implies 
\begin{equation}\label{gradient:wf}
\nabla_{\mathcal{M}_1}^{(\mathfrak{w})} F(\bm{Z})=\bm{uu}^*\nabla F(\bm{Z})+ \nabla F(\bm{Z})\bm{uu}^*.    
\end{equation}

\item\textit{Retraction.} For any $\bm{Z}=\sigma\bm{z}\bm{z}^*\in\mathcal{M}_1$ with $\bm{z}\in\mathbb{C}^n$ and $\sigma =\pm1$, we first define an operator $\mathcal{S}_{\bm{Z}}:\mathbb{T}_{\bm{Z}}\mapsto\mathcal{M}_1$ as follows. For any $\bm{\Xi} \in \mathbb{T}_{\bm{Z}}$, when it can be parametrized as
\begin{equation}\label{eq:repXiw}
    \bm{\Xi} = \bm{Z}+\bm{z}\bm{w}^* + \bm{w}\bm{z}^*, \quad\mbox{~for~some~}\bm{w} \in \mathbb{C}^n \mbox{~s.t.~}\bm{z}^*\bm{w}\in\mathbb{R},
\end{equation}
we define
\begin{equation}\label{retraction2}
\mathcal{S}_{\bm{Z}}(\bm{\Xi}) = \sigma(\bm{z} + \sigma\bm{w})(\bm{z} + \sigma\bm{w})^*, \quad \forall~\bm{\Xi}\in \mathbb{T}_{\bm{Z}}.
\end{equation}
\begin{thm}
The operator $\mathcal{S}_{\bm{Z}}:\mathbb{T}_{\bm{Z}}\mapsto\mathcal{M}_1$
is well-defined and a retraction operator according to Definition \ref{defre}.
\end{thm}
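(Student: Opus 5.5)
Two things must be shown. First, every $\bm{\Xi}\in\mathbb{T}_{\bm{Z}}$ admits a representation \eqref{eq:repXiw} with $\bm{z}^*\bm{w}\in\mathbb{R}$, and this $\bm{w}$ is unique, so that $\mathcal{S}_{\bm{Z}}$ is well defined. Second, $\mathcal{S}_{\bm{Z}}$ is smooth and satisfies the two conditions of Definition \ref{defre}.

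Here we read $\bm{\Xi}$ as an element of the affine tangent space $\bm{Z}+\mathbb{T}_{\bm{Z}}$, as in \eqref{eq:RGDgeneral}. Since $\bm{Z}=\sigma\bm{z}\bm{z}^*$, we have $\bm{z}=\|\bm{z}\|_2\bm{u}$ up to phase, and \eqref{tangent_space11} says every tangent vector has the form $\bm{u}\bm{v}^*+\bm{v}\bm{u}^*$. Absorbing the scalar, we may write $\bm{\Xi}-\bm{Z}=\bm{z}\bm{w}^*+\bm{w}\bm{z}^*$ for some $\bm{w}\in\mathbb{C}^n$.

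\textbf{Existence of a normalized $\bm{w}$.} The representation is not unique. Replacing $\bm{w}$ by $\bm{w}+ic\bm{z}$ with $c\in\mathbb{R}$ leaves $\bm{z}\bm{w}^*+\bm{w}\bm{z}^*$ unchanged, because the two added terms $-ic\bm{z}\bm{z}^*$ and $ic\bm{z}\bm{z}^*$ cancel. It changes $\bm{z}^*\bm{w}$ by $ic\|\bm{z}\|_2^2$. Choosing $c=-\operatorname{Im}(\bm{z}^*\bm{w})/\|\bm{z}\|_2^2$ therefore makes $\bm{z}^*\bm{w}$ real.

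\textbf{Uniqueness.} Suppose $\bm{z}\bm{d}^*+\bm{d}\bm{z}^*=0$ with $\bm{z}^*\bm{d}\in\mathbb{R}$.
\begin{itemize}
\item Multiplying on the right by $\bm{z}$ gives $(\bm{d}^*\bm{z})\bm{z}+\|\bm{z}\|_2^2\bm{d}=0$, so $\bm{d}=-\big((\bm{d}^*\bm{z})/\|\bm{z}\|_2^2\big)\bm{z}$, i.e.\ $\bm{d}=\lambda\bm{z}$ for some scalar $\lambda$.
\item Substituting back gives $(\bar\lambda+\lambda)\bm{z}\bm{z}^*=0$, so $\operatorname{Re}\lambda=0$.
\item The constraint $\bm{z}^*\bm{d}=\lambda\|\bm{z}\|_2^2\in\mathbb{R}$ forces $\operatorname{Im}\lambda=0$.
\end{itemize}
Hence $\bm{d}=0$. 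The map $\bm{\Xi}-\bm{Z}\mapsto\bm{w}$ is thus a real-linear bijection from $\mathbb{T}_{\bm{Z}}$ onto the real subspace $\{\bm{w}:\bm{z}^*\bm{w}\in\mathbb{R}\}$. The dimensions agree: $2n-1$ on both sides.

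\textbf{Image and smoothness.} $\mathcal{S}_{\bm{Z}}(\bm{\Xi})$ is $\pm$ an outer product $\bm{v}\bm{v}^*$ with $\bm{v}=\bm{z}+\sigma\bm{w}$. We need $\bm{v}\neq0$ for the image to lie in $\mathcal{M}_1$. Since $\bm{z}^*\bm{v}=\|\bm{z}\|_2^2+\sigma\bm{z}^*\bm{w}$ is real, $\bm{v}=0$ can only occur when $\sigma\bm{z}^*\bm{w}=-\|\bm{z}\|_2^2$. So $\mathcal{S}_{\bm{Z}}$ maps into $\mathcal{M}_1$ on the open neighborhood $\{\bm{\Xi}:\bm{v}\neq0\}$ of $\bm{Z}$. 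Retractions in Definition \ref{defre} need only be defined locally near $\bm{Z}$, and I would state the result in this local sense. Smoothness follows because $\mathcal{S}_{\bm{Z}}$ is a polynomial (quadratic) in $\bm{w}$, and $\bm{w}$ depends linearly on $\bm{\Xi}$.

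\textbf{Retraction conditions.} For $\bm{\Xi}=\bm{Z}$ we get $\bm{w}=0$, so $\mathcal{S}_{\bm{Z}}(\bm{Z})=\sigma\bm{z}\bm{z}^*=\bm{Z}$, which is condition 1. For condition 2, fix a tangent direction $\bm{\eta}=\bm{z}\bm{w}^*+\bm{w}\bm{z}^*$ with normalized $\bm{w}$, and take $\bm{\Xi}=\bm{Z}+s\bm{\eta}$, whose normalized parameter is $s\bm{w}$. Then
\[
\mathcal{S}_{\bm{Z}}(\bm{Z}+s\bm{\eta})=\sigma(\bm{z}+s\sigma\bm{w})(\bm{z}+s\sigma\bm{w})^*=\bm{Z}+s(\bm{z}\bm{w}^*+\bm{w}\bm{z}^*)+s^2\sigma\bm{w}\bm{w}^*,
\]
using $\sigma^2=1$. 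The derivative at $s=0$ is exactly $\bm{\eta}$, so the differential is the identity on $\mathbb{T}_{\bm{Z}}$.

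The main obstacle is the well-definedness step: recognising the $ic\bm{z}$ gauge freedom, and showing that the real-inner-product normalization $\bm{z}^*\bm{w}\in\mathbb{R}$ removes exactly that freedom. The remaining steps are direct computations.
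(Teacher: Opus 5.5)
Your proposal is correct and follows essentially the same route as the paper. For existence you gauge-fix $\bm w\mapsto\bm w+\imath c\bm z$; the paper does the same by replacing $c$ with $\frac{c+\bar c}{2}$ in $\bm v=c\bm z+\bm z_\perp$. For uniqueness you give a trivial-kernel argument by multiplying by $\bm z$, where the paper splits the difference into three mutually orthogonal pieces. For the two retraction conditions you take $\bm w=\bm 0$ and differentiate the quadratic expression in $\bm w$, as the paper does. Your caveat that $\bm z+\sigma\bm w\neq\bm 0$ is needed is well founded, and the paper's proof skips it. It fails exactly for $\bm w=-\sigma\bm z$, i.e.\ at the single point $\bm\Xi=-\bm Z$, where $\mathcal{S}_{\bm Z}(-\bm Z)=\bm 0\notin\mathcal{M}_1$. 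So the statement holds on $\mathbb{T}_{\bm Z}\setminus\{-\bm Z\}$, and in particular locally as you say, rather than on all of $\mathbb{T}_{\bm Z}$.
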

\begin{proof}
To show that $\mathcal{S}_{\bm{Z}}$ is well-defined, it suffices to prove the existence and uniqueness of the parametrization \eqref{eq:repXiw}.
\begin{itemize}
\item \textit{Existence:}
Since $\bm{\Xi},\bm{Z} \in \mathbb{T}_{\bm{Z}}$, the expression of the tangent space \eqref{tangent_space11} ensures that there exists $\bm{v} \in \mathbb{C}^n$ such that $\bm{\Xi} = \bm{Z} + \bm{z}\bm{v}^* + \bm{v}\bm{z}^*$. Decompose $\bm{v} = c\bm{z} + \bm{z}_\perp$ orthogonally where $c \in \mathbb{C}$ and $\bm{z}^*\bm{z}_{\perp}=0$. Then, $\bm{\Xi} = \bm{Z} + (\bar{c}+c)\bm{z}\bm{z}^* + \bm{z}\bm{z}_{\perp}^* + \bm{z}_{\perp}\bm{z}^*$. Taking $\bm{w} = \frac{\bar{c}+c}{2}\bm{z} + \bm{z}_\perp$ yields $\bm{\Xi} = \bm{Z} + \bm{z}\bm{w}^* + \bm{w}\bm{z}^*$ and $\bm{z}^*\bm{w} = \frac{c + \bar{c}}{2}\|\bm{z}\|_2^2\in\mathbb{R}$.

\item \textit{Uniqueness:}
Suppose $\bm{\Xi}$ admits two parametrizations \eqref{eq:repXiw} with $\bm{w}=\bm{w}_1,\bm{w}_2$ respectively, i.e.,
$\bm{\Xi} = \bm{Z} + \bm{z}\bm{w}_1^* + \bm{w}_1\bm{z}^*= \bm{Z} + \bm{z}\bm{w}_2^* + \bm{w}_2\bm{z}^*$
with $\bm{z}^*\bm{w}_i \in \mathbb{R}$ for $i=1,2$. Taking the difference yields
\begin{equation}\label{eq: diff tangent vector}
\bm{z}(\bm{w}_1 - \bm{w}_2)^* + (\bm{w}_1 - \bm{w}_2)\bm{z}^* = \bm{0}.    
\end{equation}
Let $c_i = \bm{z}^*\bm{w}_i \in \mathbb{R}$ and decompose $\bm{w}_i$ orthogonally as $\bm{w}_i = \frac{c_i}{\|\bm{z}\|_2^2}\bm{z} + \big(\bm{I}_n - \frac{\bm{z}\bm{z}^*}{\|\bm{z}\|_2^2}\big)\bm{w}_i$.
Substituting it into \eqref{eq: diff tangent vector} gives
\[
2\frac{c_1 - c_2}{\|\bm{z}\|_2^2}\bm{z}\bm{z}^* + \bm{z}(\bm{w}_1 - \bm{w}_2)^*\Big(\bm{I}_n - \frac{\bm{z}\bm{z}^*}{\|\bm{z}\|_2^2}\Big) + \Big(\bm{I}_n - \frac{\bm{z}\bm{z}^*}{\|\bm{z}\|_2^2}\Big)(\bm{w}_1 - \bm{w}_2)\bm{z}^* = \bm{0}.
\]
Since the three terms on the left-hand side are orthogonal, they must all be the zero matrix. This implies $c_1 = c_2$ and $\big(\bm{I}_n - \frac{\bm{z}\bm{z}^*}{\|\bm{z}\|_2^2}\big)(\bm{w}_1 - \bm{w}_2) = \bm{0}$. Thus, $\bm{w}_1 = \bm{w}_2$. 
\end{itemize}

Next, we show that $\mathcal{S}_{\bm{Z}}$ is a retraction operator according to checking conditions in Definition \ref{defre}. 
\begin{itemize}
\item 
When $\bm{\Xi} = \bm{Z}$, because $\bm{w}$ is unique in \eqref{eq:repXiw}, we must have $\bm{w} = \bm{0}$. We immediately verified $\mathcal{S}_{\bm{Z}}(\bm{Z}) = \bm{Z}$. 
\item 
Then, we show $\mathcal{J}_{\mathcal{S}_{\bm{Z}}}(\bm{Z}) = \mathcal{I}$. Consider the smooth mapping $\mathcal{G}: \mathbb{C}^{n}\rightarrow \mathbb{T}_{\bm{Z}}$ defined by $\mathcal{G}(\bm{w}) := \sigma\bm{zz}^* + \bm{zw}^* + \bm{wz}^*$. Let $\bm{w}(t)$ be a smooth curve in $\mathbb{C}^n$ parametrized by $t\in\mathbb{R}$ with $\bm{w}(0)=\bm{w}$. The differential of $\mathcal{G}$ is given by 
$$
\mathcal{J}_{\mathcal{G}}(\bm{w})[\dot{\bm{w}}] = \bm{z}\dot{\bm{w}}^* + \dot{\bm{w}}\bm{z}^*.
$$
We also have $\mathcal{S}_{\bm{Z}}\big(\mathcal{G}(\bm{w})\big)= \sigma(\bm{z} + \sigma\bm{w})(\bm{z} + \sigma\bm{w})^*$, and its differential is  
\[
\mathcal{J}_{\mathcal{S}_{\bm{Z}} \circ \mathcal{G}}(\bm{w})[\dot{\bm{w}}] = \dot{\bm{w}}(\bm{z}+\sigma\bm{w})^* + (\bm{z}+\sigma\bm{w})\dot{\bm{w}}^*.
\]

The chain rule applied to $\mathcal{S}_{\bm{Z}} \circ \mathcal{G}$ yields 
\begin{equation}\label{eq: chain rule retracion}
\mathcal{J}_{\mathcal{S}_{\bm{Z}} \circ \mathcal{G}}(\bm{w})[\dot{\bm{w}}] = \mathcal{J}_{\mathcal{S}_{\bm{Z}}}(\mathcal{G}(\bm{w}))\big[\mathcal{J}_{\mathcal{G}}(\bm{w})[\dot{\bm{w}}]\big].  
\end{equation}

Taking $\bm{w} = \bm{0}$ in \eqref{eq: chain rule retracion} yields $\bm{z}\dot{\bm{w}}^* + \dot{\bm{w}}\bm{z}^* =\mathcal{J}_{\mathcal{S}_{\bm{Z}}}(\bm{Z})[\bm{z}\dot{\bm{w}}^* + \dot{\bm{w}}\bm{z}^*]$ for any $\dot{\bm{w}} \in \mathbb{C}^n$, which implies $\mathcal{J}_{\mathcal{S}_{\bm{Z}}}(\bm{Z})$ acts as the identity on $\mathbb{T}_{\bm{Z}}$.
\end{itemize}
\end{proof}
When we take $\bm{w}=-\alpha_t\nabla f(\bm{z}_t)/\|\bm{z}_t\|_2^2$, we have $\bm{z}_t^*\bm{w}=-\alpha_t\bm{z}_t^*\nabla F(\bm{Z}_t)\bm{z}_t/\|\bm{z}_t\|_2^2 \in \mathbb{R} $. Then, by \eqref{wf2} and the definition of $\mathcal{S}_{\bm{Z}_t}$ in \eqref{retraction2},
\begin{equation}\label{eq:retraction}
 \mathcal{S}_{\bm{Z}_t}\Big(\bm{Z}_t-\frac{\alpha_t}{\|\bm{z}_t\|_2^2}\left(\nabla f(\bm{z}_t)\bm{z}_t^*+\bm{z}_t\nabla f(\bm{z}_t)^*\right)\Big) = \bm{Z}_{t+1}.   
\end{equation}
\end{itemize}

With the reformulation \eqref{wf2} of the WF algorithm, the retraction \eqref{eq:retraction}, and the Riemannian gradient \eqref{gradient:wf}, we further reformulate the WF algorithm as a special case of RGD as follows
$$
\bm{Z}_{t+1} = \mathcal{S}_{\bm{Z}_t}\big(\bm{Z}_t - \alpha_t \nabla_{\mathcal{M}}^{(\mathfrak{w})} F(\bm{Z}_t)\big).
$$
Same as that in the canonical RGD, we can replace the Riemannian gradient $\nabla_{\mathcal{M}_1}^{(\mathfrak{w})}  F\left(\bm{Z}_t\right)$ with its truncated counterpart $\nabla_{\mathcal{M}_1}^{(\mathfrak{w})}  F_t\left(\bm{Z}_t\right)$, where $F_t(\bm{Z}_t)=\frac{1}{2m}\sum_{k\in\mathbb{I}_t}( y_k-\langle\bm{a}_k\bm{a}_k^*,\bm{Z}_t\rangle)^2$  with $\mathbb{I}_t\subset\{1,\ldots,m\}$ determined by specific truncation rules following the probabilistic model of $\mathcal{A}$. This leads to the so-called algorithm Truncated Wirtinger flow proposed in \cite{RefWorks:RefID:41-li2022sampling}.
\subsection{Convergence speed}\label{subsection:convergence speed}
As we have seen previously, by selecting different Riemannian metrics and retractions, both WF-typed and canonical RGD algorithms are special cases of the RGD on the smooth manifold $\mathcal{M}_1$ for solving the phase retrieval problem. 
Specifically, we reformulate the phase retrieval problem as finding the least squares solution on $\mathcal{M}_1$:
\begin{equation}\label{least_square}
\min _{\bm{Z} \in \mathcal{M}_1}F(\bm{Z}),\quad \text{where}~F(\bm{Z}):=\frac1{2m}\|\mathcal{A}(\bm{Z})-\bm{y}\|_2^2.    
\end{equation}
When $\mathcal{M}_1$ is equipped with a Riemannian metric $\langle\cdot,\cdot\rangle_{\mathfrak{g}}$ and retraction $\mathcal{R}_{\bm{Z}} : \mathbb{T}_{\bm{Z}}\mathcal{M}_1 \mapsto \mathcal{M}_1$, the RGD for solving \eqref{least_square} iterates as:
\begin{equation}\label{riemannian_framework}
\bm{Z}_{t+1} = \mathcal{R}_{\bm{Z}_t}\left(\bm{Z}_t - \alpha_t \nabla_{\mathcal{M}_1}^{(\mathfrak{g})} F(\bm{Z}_t)\right),
\end{equation}
where $\alpha_t > 0$ is the step size, and $\nabla_{\mathcal{M}_1}^{(\mathfrak{g})} F$ denotes the Riemannian gradient of $F$ on $\mathcal{M}_1$ under the metric $\langle\cdot,\cdot\rangle_{\mathfrak{g}}$. This update consists of two steps:
\begin{equation}\label{eq:riemannian_update}
\begin{cases}
\bm{W}_t = \bm{Z}_t - \alpha_t \nabla_{\mathcal{M}_1}^{(\mathfrak{g})} F(\bm{Z}_t), \\
\bm{Z}_{t+1} = \mathcal{R}_{\bm{Z}_t}(\bm{W}_t),
\end{cases}
\end{equation}
where $\bm{W}_t$ is obtained by performing one step of gradient descent on the linearized problem 
$
\min_{\bm{Z} \in \mathbb{T}_{\bm{Z}_t}\mathcal{M}_1} F(\bm{Z})
$ staring from $\bm{Z}_t$, and $\bm{Z}_{t+1}$ results from retracting $\bm{W}_t$ back to $\mathcal{M}_1$ via $\mathcal{R}_{\bm{Z}_t}$.

As a result, the convergence speed of RGD is principally determined by two key factors. The first step of \eqref{eq:riemannian_update} is the gradient descent of a linear least square problem, which yields linear convergence at rate $1 - O(\kappa_{\mathfrak{g}}^{-1})$. Here $\kappa_{\mathfrak{g}} = \frac{C_{u,\mathfrak{g}}}{C_{l,\mathfrak{g}}}$ is the condition number of the sensing operator $\frac{1}{\sqrt{m}}\mathcal{A}$, where $C_{u,\mathfrak{g}}$ and $C_{l,\mathfrak{g}}$ are constants in the following inequality:
\begin{equation}\label{tangent_space_condition}
C_{l,\mathfrak{g}}\|\bm{W}\|_{\mathfrak{g}}^2 \leq \frac{1}{m}\|\mathcal{A}(\bm{W})\|_2^2 \leq C_{u,\mathfrak{g}}\|\bm{W}\|_{\mathfrak{g}}^2, \quad \forall~ \bm{W} \in \mathbb{T}_{\bm{Z}_t}.
\end{equation}
In the second step of \eqref{eq:riemannian_update}, the retraction operator $\mathcal{R}_{\bm{Z}_t}$ contributes only a second-order perturbation \cite{absil2008optimization} on $\bm{W}_t$ to obtain $\bm{Z}_{t+1}$. This higher-order effect is dominated by the first-order convergence dynamics governed by $\kappa_{\mathfrak{g}}$. Thus, the convergence behavior of RGD is dominated by the first substep of \eqref{eq:riemannian_update}. It will converge linearly with a rate determined by $\kappa_\mathfrak{g}$. A smaller $\kappa_\mathfrak{g}$ results in faster convergence. A complete theoretical analysis can be found in the later sections of this paper.

As both the TWF and canonical RGD are special cases of RGD, we can compare their convergence speeds by comparing their condition number $\kappa_\mathfrak{g}$.
To begin our analysis, we first introduce our measurement model, random complex Gaussian measurements, as follows. We choose $\bm{a}_k$ in \eqref{eq:pf} follows the complex i.i.d Gaussian distribution as
\begin{equation}\label{eq:Gaussian}
\bm{a}_k\mathop{\sim}^{i.i.d}\mathcal{N}(0,\bm{I}_n/2)+\imath\cdot\mathcal{N}(0,\bm{I}_n/2), \quad k=1, \ldots, m,
\end{equation}
where $\imath$ is the imaginary unit.
Then, under Gaussian measurement models, we have\cite{RefWorks:RefID:14-candes2015phase,RefWorks:RefID:34-cai2018solving}:
\begin{equation}\label{eq:exp}
    \mathbb{E}\left(\frac{1}{m}\|\mathcal{A}(\bm{W})\|_2^2\right) = \operatorname{tr}(\bm{W})^2 + \|\bm{W}\|_F^2, \quad \forall~ \bm{W} \in \mathbb{H}^{n \times n}.
\end{equation}
When the number of measurements $m$ is sufficiently large, $\frac{1}{m}\|\mathcal{A}(\bm{W})\|_2^2 \approx \operatorname{tr}(\bm{W})^2 + \|\bm{W}\|_F^2$. Thus, we can estimate the condition numbers $\kappa_{\mathfrak{g}}$ of different algorithms.
\begin{itemize}
    \item \emph{Canonical RGD.} The Riemannian metric is the canonical metric, and the induced norm is the Frobenius norm. Then we have
$$
\|\bm{W}\|_{F}^2\lesssim \frac{1}{m}\|\mathcal{A}(\bm{W})\|_2^2\lesssim 2\|\bm{W}\|_{F}^2,\qquad
\forall~ \bm{W} \in \mathbb{T}_{\bm{Z}_t}.
$$
As a result, when $m$ is sufficiently large, $\kappa$ approaches $2$.
\item \emph{Wiringer Flow.} The Riemannian metric is $\langle \cdot,\cdot\rangle_{\mathfrak{w}}$ defined in \eqref{metric2}. Thus, we have
$$
\|\bm{W}\|_{\mathfrak{w}}^2\lesssim\frac{1}{m}\|\mathcal{A}(\bm{W})\|_2^2\lesssim 4\|\bm{W}\|_{\mathfrak{w}}^2,\qquad
\forall~ \bm{W} \in \mathbb{T}_{\bm{Z}_t}.
$$
When $m$ is sufficiently large, $\kappa_{\mathfrak{w}}$ approaches 4.
\end{itemize}
The above analysis reveals mainly two key insights. First, canonical RGD demonstrates superior performance over WF in practice, which directly explains the faster convergence rates observed experimentally. Second, both methods share a fundamental limitation: their condition numbers remain strictly greater than $1$ even with arbitrarily large numbers of measurements $m$, preventing the convergence rate from approaching $0$.

This core limitation stems from the sensing operator $\frac{1}{\sqrt{m}}\mathcal{A}$ providing only a stable embedding of rank-1 matrices in $\mathbb{R}^m$ without achieving isometry under the Riemannian metrics used. Consequently, the condition number is not ideal, leading to suboptimal convergence rates. To overcome this, we propose a non-canonical Riemannian metric $\langle \cdot,\cdot\rangle_{\mathfrak{o}}$ designed to achieve near-isometry (yielding $\kappa_{\mathfrak{o}}$ approaches 1) while maintaining computational efficiency. This method not only offers theoretical guarantees of linear convergence with the fastest possible rate, approaching zero, for the associated RGD algorithm, but also demonstrates practical improvements for phase retrieval problems.

\section{Weighted Riemannian Gradient Descent}\label{sec3}
In this section, we develop the Weighted Riemannian Gradient Descent (WRGD) algorithm and its variant, which are based on a carefully designed Riemannian metric that overcomes the convergence limitations of existing algorithms identified in Section \ref{sec2} and a carefully selected initialization by
using a truncated spectral method. Section~\ref{subsection:3.1} details its derivation, while Section~\ref{subsection:3.2} establishes its theoretical guarantees, demonstrating that the truncated WRGD algorithm achieves linear convergence with a convergence rate arbitrarily close to zero, requiring sample complexity proportional to $n$ (up to logarithmic factors), thereby overcoming the lower bound barrier of the convergence rate of the canonical RGD and WF.
\subsection{(Truncated) Weighted Riemannian Descent Algorithms}\label{subsection:3.1}
Building on Section~\ref{sec2}, we develop an optimized Riemannian gradient descent method through a novel metric design. Our approach centers on constructing a Riemannian metric $\langle\cdot,\cdot\rangle_\mathfrak{o}$ on $\mathbb{H}^{n\times n}$ that induces near-isometry of $\frac{1}{\sqrt{m}}\mathcal{A}$ on tangent spaces, leading to an improved condition number $\kappa_{\mathfrak{o}}=C_{u,\mathfrak{o}}/C_{l,\mathfrak{o}}$ (see \eqref{tangent_space_condition}).

Motivated by the expectation calculations in \eqref{eq:exp} for Gaussian models, we require
\begin{equation}\label{eq:exp2}
\|\bm{W}\|_{\mathfrak{o}}^2 = \mathbb{E}\left(\frac{1}{m}\|\mathcal{A}(\bm{W})\|_2^2\right), \quad \forall~\bm{W}\in\mathbb{T}_{\bm{Z}_t}.
\end{equation}
When $\frac{1}{m}\|\mathcal{A}(\bm{W})\|_2^2$ concentrates well around its expectation, we obtain the near-isometry $\frac{1}{m}\|\mathcal{A}(\bm{W})\|_2^2 \approx \|\bm{W}\|_{\mathfrak{o}}^2$, which ultimately yields $\kappa_{\mathfrak{o}} \approx 1$.
To satisfy \eqref{eq:exp2}, we define a weighted inner product on $\mathbb{H}^{n\times n}$:
\begin{equation}\label{new metric}
\langle \bm{W}_1, \bm{W}_2 \rangle_{\mathfrak{o}} := \mathbb{E}\left(\frac{1}{m}\langle \mathcal{A}(\bm{W}_1), \mathcal{A}(\bm{W}_2) \rangle\right)=\langle \bm{W}_1,\bm{W}_2\rangle + \operatorname{tr}(\bm{W}_1)\operatorname{tr}(\bm{W}_2), \quad \forall~ \bm{W}_1, \bm{W}_2 \in\mathbb{H}^{n\times n}.
\end{equation}
Then this inner product $\langle\cdot,\cdot\rangle_\mathfrak{o}$ induces a Riemannian metric defined at each tangent space $\mathbb{T}_{\bm{Z}}$ for any $\bm{Z}\in \mathcal{M}_1$. Obviously, the norm induced by this inner product satisfies \eqref{eq:exp2}. Equipped with this Riemannian metric, $\mathcal{M}_1$ is a smooth Riemmanian manifold embedded in $\mathbb{H}^{n\times n}$ over $\mathbb{R}$.

The Riemannian gradient $\nabla_{\mathcal{M}_{1}}^{(\mathfrak{o})} F(\bm{Z})$ is given by the following Lemma.
\begin{lemma}\label{lemma:riemannian_gradient}
For $\bm{Z} = \beta\bm{u}\bm{u}^* \in \mathcal{M}_1$ with $\beta \in \mathbb{R}\setminus\{0\}$, $\bm{u}\in \mathbb{C}^{n}$, and $\|\bm{u}\|_2 = 1$, the Riemannian gradient $\nabla_{\mathcal{M}_{1}}^{(\mathfrak{o})} F(\bm{Z})$ is given by
\begin{equation}\label{M3}
\nabla_{\mathcal{M}_{1}}^{(\mathfrak{o})} F(\bm{Z})= \mathcal{T}^{(\mathfrak{o})}_{\mathbb{T}_{\bm{Z}}}(\nabla F(\bm{Z}))\in \mathbb{T}_{\bm{Z}},
\end{equation}
where $\nabla F(\bm{Z})$ is the Euclidean gradient of $F :\mathbb{H}^{n\times n} \mapsto \mathbb{R}$, and $\mathcal{T}^{(\mathfrak{o})}_{\mathbb{T}_{\bm{Z}}}$ is an operator from $\mathbb{H}^{n\times n}$ to $\mathbb{T}_{\bm{Z}}$ defined by
\begin{equation}\label{projector2}
\mathcal{T}^{(\mathfrak{o})}_{\mathbb{T}_{\bm{Z}}}(\bm{W}): = \bm{uu}^* \bm{W} + \bm{W} \bm{uu}^* - \tfrac{3}{2} \bm{uu}^* \bm{W} \bm{uu}^*, \quad \forall~ \bm{W} \in \mathbb{H}^{n \times n}.
\end{equation}
\end{lemma}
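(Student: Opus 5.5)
The plan is to check directly the defining property of the Riemannian gradient under $\langle\cdot,\cdot\rangle_{\mathfrak{o}}$. The gradient is the unique $\bm{G}\in\mathbb{T}_{\bm{Z}}$ such that
\[
\langle \bm{G},\bm{V}\rangle_{\mathfrak{o}}=\langle \nabla F(\bm{Z}),\bm{V}\rangle \quad \text{for all } \bm{V}\in\mathbb{T}_{\bm{Z}}.
\]
Indeed, as in \eqref{differential}, $\frac{d}{ds}F(\bm{Z}(s))|_{s=0}=\langle \dot{\bm{Z}}(0),\nabla F(\bm{Z})\rangle$, and every $\bm{V}\in\mathbb{T}_{\bm{Z}}$ arises as such a velocity $\dot{\bm{Z}}(0)$. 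Uniqueness holds because $\langle\cdot,\cdot\rangle_{\mathfrak{o}}$ is a genuine inner product: $\|\bm{W}\|_{\mathfrak{o}}^2=\|\bm{W}\|_F^2+\operatorname{tr}(\bm{W})^2>0$ for $\bm{W}\neq\bm{0}$, so its restriction to $\mathbb{T}_{\bm{Z}}$ is nondegenerate. It therefore suffices to show two things for $\bm{G}:=\mathcal{T}^{(\mathfrak{o})}_{\mathbb{T}_{\bm{Z}}}(\bm{W})$ with $\bm{W}=\nabla F(\bm{Z})$: that $\bm{G}\in\mathbb{T}_{\bm{Z}}$, and that it satisfies the identity above.

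\textbf{Membership in the tangent space.} I would rewrite \eqref{projector2} as
\[
\bm{G}=\mathcal{P}_{\mathbb{T}_{\bm{Z}}}(\bm{W})-\tfrac12(\bm{u}^*\bm{W}\bm{u})\,\bm{uu}^*,
\]
using \eqref{ortho projector}. The first term lies in $\mathbb{T}_{\bm{Z}}$ by construction. The second also does, since $\bm{uu}^*=\bm{u}\bm{w}^*+\bm{w}\bm{u}^*$ with $\bm{w}=\bm{u}/2$ in \eqref{tangent_space11}, and $\bm{u}^*\bm{W}\bm{u}\in\mathbb{R}$ because $\bm{W}$ is Hermitian.

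\textbf{The gradient identity.} Fix $\bm{V}\in\mathbb{T}_{\bm{Z}}$. Since $\mathcal{P}_{\mathbb{T}_{\bm{Z}}}$ is the orthogonal projector for the canonical inner product,
\[
\langle \bm{G},\bm{V}\rangle=\langle \bm{W},\bm{V}\rangle-\tfrac12(\bm{u}^*\bm{W}\bm{u})(\bm{u}^*\bm{V}\bm{u}).
\]
For the trace term, Lemma~\ref{lemma:trace} gives $\operatorname{tr}(\bm{V})=\bm{u}^*\bm{V}\bm{u}$. A direct computation gives $\operatorname{tr}(\bm{G})=2\bm{u}^*\bm{W}\bm{u}-\tfrac32\bm{u}^*\bm{W}\bm{u}=\tfrac12\bm{u}^*\bm{W}\bm{u}$. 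Hence $\operatorname{tr}(\bm{G})\operatorname{tr}(\bm{V})=\tfrac12(\bm{u}^*\bm{W}\bm{u})(\bm{u}^*\bm{V}\bm{u})$. Adding this to $\langle\bm{G},\bm{V}\rangle$ and using \eqref{new metric}, the two correction terms cancel, leaving $\langle\bm{G},\bm{V}\rangle_{\mathfrak{o}}=\langle\bm{W},\bm{V}\rangle$, which is the required identity.

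The argument is purely algebraic. The only points needing care are keeping track of the coefficient $\tfrac32$ in the trace computation for $\bm{G}$, and noting that all inner products are real because every matrix involved is Hermitian.
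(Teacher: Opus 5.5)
Your proof is correct and follows essentially the same route as the paper. The paper also verifies $\langle \dot{\bm{Z}}(0),\nabla F(\bm{Z})\rangle=\langle \dot{\bm{Z}}(0),\mathcal{T}^{(\mathfrak{o})}_{\mathbb{T}_{\bm{Z}}}(\nabla F(\bm{Z}))\rangle_{\mathfrak{o}}$ by splitting off the $\tfrac12\bm{uu}^*\nabla F(\bm{Z})\bm{uu}^*$ term and converting it into the trace product via Lemma~\ref{lemma:trace}, which is exactly your cancellation (stated later as Lemma~\ref{lemma:metric1}); your explicit checks that $\mathcal{T}^{(\mathfrak{o})}_{\mathbb{T}_{\bm{Z}}}(\nabla F(\bm{Z}))\in\mathbb{T}_{\bm{Z}}$ and that the gradient is unique because $\langle\cdot,\cdot\rangle_{\mathfrak{o}}$ is positive definite fill in points the paper only asserts.
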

\begin{proof}
Consider a smooth curve $\bm{Z}(s) \in \mathcal{M}_1$ with $\bm{Z}(0) = \bm{Z}$. The differential satisfies:
\begin{equation*}
\begin{split}
\frac{d}{ds}& F(\bm{Z}(s))\Big|_{s=0} = \langle \dot{\bm{Z}}(0), \nabla F(\bm{Z}) \rangle\\
&= \langle \dot{\bm{Z}}(0), \bm{uu}^*\nabla F(\bm{Z}) + \nabla F(\bm{Z})\bm{uu}^* -\tfrac{3}{2}\bm{uu}^*\nabla F(\bm{Z})\bm{uu}^*\rangle +\tfrac{1}{2}\langle \dot{\bm{Z}}(0), \bm{uu}^*\nabla F(\bm{Z})\bm{uu}^* \rangle \\
&= \langle \dot{\bm{Z}}(0), \bm{uu}^*\nabla F(\bm{Z}) + \nabla F(\bm{Z})\bm{uu}^* -\tfrac{3}{2}\bm{uu}^*\nabla F(\bm{Z})\bm{uu}^*\rangle 
\\& \qquad\qquad+\langle \dot{\bm{Z}}(0), \bm{uu}^*( \bm{uu}^*\nabla F(\bm{Z}) + \nabla F(\bm{Z})\bm{uu}^* -\tfrac{3}{2}\bm{uu}^*\nabla F(\bm{Z})\bm{uu}^*)\bm{uu}^* \rangle \\
&= \langle \dot{\bm{Z}}(0), \bm{uu}^*\nabla F(\bm{Z}) + \nabla F(\bm{Z})\bm{uu}^* -\tfrac{3}{2}\bm{uu}^*\nabla F(\bm{Z})\bm{uu}^*\rangle \\&\qquad\qquad+\big(\bm{u}^*\dot{\bm{Z}}(0)\bm{u}\big)\cdot\big(\bm{u}^* (\bm{uu}^*\nabla F(\bm{Z}) + \nabla F(\bm{Z})\bm{uu}^* -\tfrac{3}{2}\bm{uu}^*\nabla F(\bm{Z})\bm{uu}^*)\bm{u}\big) \\
&=\langle \dot{\bm{Z}}(0), \bm{uu}^*\nabla F(\bm{Z}) + \nabla F(\bm{Z})\bm{uu}^* -\tfrac{3}{2}\bm{uu}^*\nabla F(\bm{Z})\bm{uu}^*\rangle  \\&\qquad\qquad+\operatorname{tr}(\dot{\bm{Z}}(0))\cdot\operatorname{tr}(\bm{uu}^*\nabla F(\bm{Z}) + \nabla F(\bm{Z})\bm{uu}^* -\tfrac{3}{2}\bm{uu}^*\nabla F(\bm{Z})\bm{uu}^*) \\
&= \langle \dot{\bm{Z}}(0), \bm{uu}^*\nabla F(\bm{Z}) + \nabla F(\bm{Z})\bm{uu}^* -\tfrac{3}{2}\bm{uu}^*\nabla F(\bm{Z})\bm{uu}^*\rangle_{\mathfrak{o}}\\
&=\langle \dot{\bm{Z}}(0), \mathcal{T}^{(\mathfrak{o})}_{\mathbb{T}_{\bm{Z}}}(\nabla F(\bm{Z}))\rangle_{\mathfrak{o}},\\
\end{split}
\end{equation*}
where the last third equality follows from Lemma~\ref{lemma:trace}, and the last second equality follows the definition of $\langle \cdot, \cdot \rangle_{\mathfrak{o}}$. 
\end{proof}

While other retraction operators are available, we select ${\mathcal{H}_1}$, $1$- truncated SVD, for simplicity. Specifically, for any $\bm{W} \in \mathbb{H}^{n\times n}$ with the SVD of $\bm{W} = \sum_i \sigma_i \bm{u}_i \bm{v}_i^{*}$ ($\sigma_1 \geq \sigma_2 \geq \cdots$), the retraction is defined as
\begin{equation}\label{R3}
\mathcal{H}_{1}(\bm{W}) := \sigma_1\bm{u}_1 \bm{v}_1^{*}.
\end{equation}
Combining the Riemannian gradient \eqref{M3} with the retraction \eqref{R3}, we obtain the Weighted Riemannian Gradient Descent (WRGD) for solving \eqref{least_square} as follows
\begin{equation}\label{WRGD_framework}
\bm{Z}_{t+1} = \mathcal{H}_{1}\left(\bm{Z}_t - \alpha_t \nabla_{\mathcal{M}_1}^{(\mathfrak{o})} F(\bm{Z}_t)\right),
\end{equation}
where $\alpha_t > 0$ is the step size. When $\bm{Z}_0$ is positive semi-definite with rank-$1$ in $\eqref{WRGD_framework}$, we can prove that $\bm{Z}_{t}$ is always positive semi-definite and rank-$1$.

Inspired by \cite{RefWorks:RefID:34-cai2018solving}, we introduce the truncated variant of WRGD, where the sensing operator is adaptively computed from $\mathcal{A}$ based on the measurement vector $\bm{y}$ and the current estimate $\bm{Z}_t$. Given any $\bm{Z}:=\bm{zz}^* \in \mathbb{H}^{n \times n}$, let $\mathcal{A}_{\bm{Z}}$ be the linear operator associated with $\bm{Z}$ defined by
\[
\mathcal{A}_{\bm{Z}}(\bm{W}) = \left\{ \left\langle \bm{W}, \bm{a}_k \bm{a}_k^{*} \right\rangle\cdot\mathbb{1}_{\mathcal{E}_0^k(\bm{y}) \cap \mathcal{E}_1^k(\bm{z}) \cap \mathcal{E}_2^k(\bm{z})}\right\}_{k=1}^m,
\]
where $\mathbb{1}$ is the indicator function and $\mathcal{E}_0^k(\bm{y}) \cap \mathcal{E}_1^k(\bm{z}) \cap \mathcal{E}_2^k(\bm{z})$ is a collection of events determined by the following
\begin{equation}\label{truncation:guassian}
  \begin{aligned}
& \mathcal{E}_0^k(\bm{y})=\left\{\sqrt{y_k} \leq \tau_0 \sqrt{\frac{\|\bm{y}\|_1}{m}}\right\}, \\
& \mathcal{E}_1^k(\bm{z})=\left\{\left|\bm{a}_k^* \bm{z}\right| \leq \tau_1\|\bm{z}\|_2\right\}, \\
& \mathcal{E}_2^k(\bm{z})=\left\{\left|y_k-\left|\bm{a}_k^* \bm{z}\right|^2\right| \leq \frac{\tau_2}{m}\left\|\bm{y}-\mathcal{A}\left(\bm{z} \bm{z}^*\right)\right\|_1 \frac{\left|\bm{a}_k^* \bm{z}\right|+\sqrt{y_k}}{\|\bm{z}\|_2}\right\},
\end{aligned}  
\end{equation}
with prescribed truncation parameters $\tau_0$, $\tau_1$, and $\tau_2$.
In other words, if $\bm{a}_k$ satisfies the above truncation rules, the $k$-th entry of $\mathcal{A}_{\bm{Z}}(\bm{W})$ is $\left\langle \bm{W}, \bm{a}_k \bm{a}_k^{*} \right\rangle$; otherwise, it is set to $0$. Note that the adjoint of $\mathcal{A}_{\bm{Z}}$ is given by
\[
\mathcal{A}_{\bm{Z}}^{*}(\bm{b}) = \frac{1}{m}\sum_{k=1}^m b_k \bm{a}_k \bm{a}_k^{*} \cdot \mathbb{1}_{\mathcal{E}_0^k(\bm{y}) \cap \mathcal{E}_1^k(\bm{z}) \cap \mathcal{E}_2^k(\bm{z})}, \quad \forall~ \bm{b} \in \mathbb{R}^m.
\]
For simplicity, we use $\mathcal{A}_t$, $\mathcal{A}_t^*$, and $\mathcal{T}^{(\mathfrak{o})}_{\mathbb{T}_t}$ to denote $\mathcal{A}_{\bm{Z}_t}$, $\mathcal{A}^*_{\bm{Z}_t}$, and $\mathcal{T}^{(\mathfrak{o})}_{\mathbb{T}_{\bm{Z}_t}}$, respectively, in the following algorithms. Then we can give the truncated versions of weighted Riemannian gradient descent-based algorithms.

The complete truncated Weighted Riemannian Gradient Descent (TWRGD) algorithm is summarized as Algorithm \ref{alg:1}. Although the algorithm involves iterations over matrices, it is actually about iterative updates of two vectors. Therefore, the computational complexity of this type of algorithms is the same as that of iterative algorithms in vector spaces, such as the WF algorithm. See the Appendix \ref{sec8.2} for details.
\begin{algorithm}[H]
    \renewcommand{\algorithmicrequire}{\textbf{Initialization:}}
    \caption{(TWRGD) Truncated Weighted Riemannian Gradient Descent }
    \label{alg:1}
    \begin{algorithmic}[1]
         \Require $\bm{Z}_0$.
          
          \ForAll{$t=0, 1, \ldots, $}
          \State $\bm{G}_t=\frac{1}{m}\mathcal{A}_{t}^{*}\left(\bm{y}-\mathcal{A}_{t}\left(\bm{Z}_t\right)\right)$
          
          
          \State $\bm{W}_{t}=\bm{Z}_{t}+\alpha_t \mathcal{T}_{\mathbb{T}_t}^{(\mathfrak{o})}(\bm{G}_t )$
          
          \State $\bm{Z}_{t+1}=\mathcal{H}_{1}\left(\bm{W}_{t}\right)$
          \EndFor

\end{algorithmic} 
\end{algorithm}

Furthermore, the step size $\alpha_t > 0$ in TWRGD can be a constant or adaptively calculated. Exact linear searches along $\mathcal{T}^{(\mathfrak{o})}_{\mathbb{T}_{t}}(\bm{G}_t)$ and $\mathcal{T}^{(\mathfrak{o})}_{\mathbb{T}_{t}}(\bm{S}_t)$ yield
 closed-form step-sizes given by 
\begin{equation}\label{step-size}
\begin{split}
&\frac{\alpha_t}{m} = \frac{\|\mathcal{T}^{(\mathfrak{o})}_{\mathbb{T}_{t}}(\bm{G}_t)\|_{\mathfrak{o}}^2}{\|\mathcal{A}_t\mathcal{T}^{(\mathfrak{o})}_{\mathbb{T}_{t}}(\bm{G}_t)\|_2^2},\quad 
\end{split}
\end{equation}
in Algorithm \ref{alg:1}.


Given the non-convex nature of the problem, selecting an initialization close to the true solution is crucial for the success of iterative updates. To this end, we adopt the truncated spectral initialization method proposed in \cite{RefWorks:RefID:12-chen2017solving} for random complex Gaussian models. This approach involves computing the leading eigenvector of the truncated version of $\frac{1}{m}\mathcal{A}^*({\bm{y}}$), formulated as follows.
\begin{algorithm}[H]
    \renewcommand{\algorithmicrequire}{\textbf{Initialization:}}
    \caption{Initialization by Truncated Spectral Method }
    \label{alg:3}
    \begin{algorithmic}[1] 
        \State Let $\bm{z}_0:=\sqrt{\frac{n\|\bm{y}\|_1}{\sum_{i=1}^{m}\|\bm{a}_i\|_2^2}}\hat{\bm{z}}$, where $\hat{\bm{z}}$ be the leading eigenvector of the matrix:
         \begin{equation*}\nonumber
         \bm{Y} :=\frac{1}{m} \sum_{k=1}^m  y_{k}\bm{a}_k \bm{a}_k^*\cdot\mathbb{1}_{U^k_{0}(\bm{y}) },\quad U^k_{0}(\bm{y}):=\bigg\{y_k\leq\frac{3}{m}\|\bm{y}\|_1\bigg\},
         \end{equation*}
          \State Set $\bm{Z}_0= \bm{z}_0 \bm{z}_0^*$.
        \end{algorithmic} 
\end{algorithm}
As established in \cite{RefWorks:RefID:12-chen2017solving}, Algorithm \ref{alg:3} yields an initialization sufficiently close to the true solution, as formally detailed in Theorem \ref{thm:1}.

\subsection{Recovery Guarantee}\label{subsection:3.2}
In this section, we present the main theoretical results for the proposed TWRGD algorithm. Specifically, we prove that both algorithms converge linearly to the global minimizer with an arbitrarily small contraction factor, assuming random complex Gaussian measurements in the noiseless setting.

We begin by establishing the theoretical guarantee for the initialization generated by Algorithm \ref{alg:3}. The following theorem states that the algorithm constructs an initialization sufficiently close to the true solution, provided that $m\geq cn$ with some absolute constant $c>0$.

\begin{thm}[Initialization]\label{thm:1}
    Let $\{\bm{a}_k\}_{k=1}^{m}$ follow the random Gaussian measurement model in (\ref{eq:Gaussian}). For any constant $\epsilon_0 \in(0,1)$, there exist positive constants $c_1, c_2$ such that if $m \geq c_1 n$, the initialization $\bm{Z}_0$ generated by Algorithm \ref{alg:3} with $\bm{y}=\mathcal{A}(\bm{X})$ satisfies 
$$
\left\|\bm{Z}_0-\bm{X}\right\|_F \leq \epsilon_0\|\bm{X}\|_F
$$
with probability at least $1-e^{-c_2m}$.
\end{thm}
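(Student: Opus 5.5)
We follow the truncated spectral analysis of \cite{RefWorks:RefID:12-chen2017solving}, adapted to the complex Gaussian model \eqref{eq:Gaussian}. By unitary invariance of the Gaussian distribution we may assume $\bm{x}=\|\bm{x}\|_2\bm{e}_1$, and by homogeneity $\|\bm{x}\|_2=1$. The plan is to prove two facts. First, the leading eigenvector $\hat{\bm{z}}$ of $\bm{Y}$ is close in direction to $\bm{x}$. Second, the scaling $\sqrt{n\|\bm{y}\|_1/\sum_i\|\bm{a}_i\|_2^2}$ is close to $\|\bm{x}\|_2$. Combining them gives $\|\bm{z}_0\bm{z}_0^*-\bm{x}\bm{x}^*\|_F\le\epsilon_0\|\bm{x}\|_2^2=\epsilon_0\|\bm{X}\|_F$.

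\textbf{Step 1: norm estimate.} Note that $\|\bm{y}\|_1/m=\frac1m\sum_k|\bm{a}_k^*\bm{x}|^2$ is an average of i.i.d. sub-exponential variables with mean $\|\bm{x}\|_2^2$. Similarly, $\frac1{mn}\sum_i\|\bm{a}_i\|_2^2$ is an average of $mn$ sub-exponential variables with mean $1$. Bernstein's inequality gives both to within a factor $1\pm\delta$ with probability $1-e^{-cm}$. Hence $\|\bm{z}_0\|_2^2\in[(1-\delta)/(1+\delta),(1+\delta)/(1-\delta)]\|\bm{x}\|_2^2$. On the same event, the random threshold in $U_0^k(\bm{y})$ satisfies $\frac3m\|\bm{y}\|_1\in[3(1-\delta),3(1+\delta)]$.

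\textbf{Step 2: sandwiching $\bm{Y}$.} Because the threshold is random, we sandwich $\bm{Y}$ between two matrices with deterministic thresholds:
\[
\bm{Y}_1\preceq\bm{Y}\preceq\bm{Y}_2,\qquad \bm{Y}_{1}:=\tfrac1m\sum_k |\bm{a}_k^*\bm{x}|^2\bm{a}_k\bm{a}_k^*\mathbb{1}_{\{|\bm{a}_k^*\bm{x}|^2\le 3(1-\delta)\}},
\]
with $\bm{Y}_2$ defined analogously using the threshold $3(1+\delta)$. Each summand is now a bounded-weight rank-one term, since the weight is at most a constant. A standard matrix concentration argument then applies: take an $\varepsilon$-net on the sphere, and use Bernstein for each quadratic form $\bm{v}^*\bm{a}_k$, whose weights are bounded and so are sub-exponential. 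This yields $\|\bm{Y}_i-\mathbb{E}\bm{Y}_i\|\le\delta$ with probability $1-e^{-cm}$ when $m\ge c_1 n$.

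\textbf{Step 3: expectation and eigengap.} Write $\xi=|a_{k1}|^2\sim\mathrm{Exp}(1)$. A direct computation gives
\[
\mathbb{E}\bm{Y}_i=\beta_1^{(i)}\bm{e}_1\bm{e}_1^*+\beta_2^{(i)}(\bm{I}-\bm{e}_1\bm{e}_1^*),
\]
where $\beta_1=\mathbb{E}[\xi^2\mathbb{1}_{\xi\le\gamma}]$ and $\beta_2=\mathbb{E}[\xi\mathbb{1}_{\xi\le\gamma}]$ for $\gamma\approx 3$. Numerically $\beta_1-\beta_2$ is a positive absolute constant, and both $\beta$'s depend continuously on $\gamma$. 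So for $\delta$ small, $\bm{Y}$ is within $O(\delta)$ in operator norm of $\beta_1\bm{e}_1\bm{e}_1^*+\beta_2(\bm{I}-\bm{e}_1\bm{e}_1^*)$, with eigengap $\beta_1-\beta_2>0$. The Davis--Kahan $\sin\Theta$ theorem then gives $\|\hat{\bm{z}}\hat{\bm{z}}^*-\bm{e}_1\bm{e}_1^*\|_F\lesssim\delta/(\beta_1-\beta_2)$.

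\textbf{Step 4: conclusion.} Combine the direction error from Step 3 with the norm error from Step 1 via the triangle inequality:
\[
\|\bm{Z}_0-\bm{X}\|_F\le\big|\|\bm{z}_0\|_2^2-1\big|+\|\hat{\bm{z}}\hat{\bm{z}}^*-\bm{e}_1\bm{e}_1^*\|_F .
\]
Choosing $\delta$ small relative to $\epsilon_0$ makes the right-hand side at most $\epsilon_0$.

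\textbf{Main obstacle.} The main difficulty is Step 2. The random threshold couples all the measurements, and concentration for heavy-tailed fourth-moment terms at the optimal $m\asymp n$ requires the truncation. The sandwich argument handles both issues. The key delicate point is that the threshold gap $3(1\pm\delta)$ only perturbs $\beta_1,\beta_2$ by $O(\delta)$, and this must be verified explicitly.
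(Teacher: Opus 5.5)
Your proof is correct, but it does not follow the paper's route.

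The paper proves nothing from scratch here. It invokes \cite[Proposition C.3]{RefWorks:RefID:12-chen2017solving} to get $\|\bm{z}_0-\bm{x}\|_2\le\epsilon\|\bm{x}\|_2$. It then converts this to the matrix statement with the one-line bound $\|\bm{z}_0\bm{z}_0^*-\bm{x}\bm{x}^*\|_F\le(\|\bm{z}_0\|_2+\|\bm{x}\|_2)\|\bm{z}_0-\bm{x}\|_2\le(2+\epsilon)\epsilon\|\bm{x}\|_2^2$, choosing $\epsilon$ so that $(2+\epsilon)\epsilon=\epsilon_0$.

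What you have written is essentially a reconstruction of the proof of that cited proposition: the deterministic-threshold sandwich $\bm{Y}_1\preceq\bm{Y}\preceq\bm{Y}_2$, a net plus Bernstein, and Davis--Kahan. You redo it for the complex model and finish at the matrix level by separating the scalar $\|\bm{z}_0\|_2^2$ from the projector $\hat{\bm{z}}\hat{\bm{z}}^*$.

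The two routes buy different things.
\begin{itemize}
\item The paper's route is short. It leans on a result whose analysis is written mainly for real Gaussians and which really controls $\textrm{dist}(\bm{z}_0,\bm{x})$, i.e.\ closeness only up to a global phase. The paper's $\|\bm{z}_0-\bm{x}\|_2$ silently suppresses this; that is harmless, since $\bm{Z}_0$ is phase-invariant.
\item Your route is self-contained, never needs to align phases, and checks the complex-specific constants. With $\xi\sim\mathrm{Exp}(1)$ one has $\beta_2(\gamma)=1-(1+\gamma)e^{-\gamma}$ and $\beta_1(\gamma)=2-(\gamma^2+2\gamma+2)e^{-\gamma}$. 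So $\beta_1-\beta_2=1-13e^{-3}\approx 0.35$ at $\gamma=3$. Also $|\beta_i'(\gamma)|\le 4e^{-2}$, which settles the ``delicate point'' you flag about the threshold perturbation.
\end{itemize}

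The one step you should write out is how the sandwich yields an operator-norm bound. Let $\bm{M}$ be the $\gamma=3$ expectation, so that $\|\mathbb{E}\bm{Y}_i-\bm{M}\|\le c\delta$. Then $-(1+c)\delta\,\bm{I}\preceq\bm{Y}_1-\bm{M}\preceq\bm{Y}-\bm{M}\preceq\bm{Y}_2-\bm{M}\preceq(1+c)\delta\,\bm{I}$. Since $\bm{Y}-\bm{M}$ is Hermitian, this gives $\|\bm{Y}-\bm{M}\|\le(1+c)\delta$, which is what Davis--Kahan needs.
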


\begin{proof}
According to \cite[Proposition C.3]{RefWorks:RefID:12-chen2017solving}, the vector $\bm{z}_0$ produced by Algorithm \ref{alg:3} satisfies the following property: for any $\epsilon \in(0,1)$, there exist constants $c_1, c_2 > 0$ such that
$$
\|\bm{z}_0-\bm{x}\|_2 \leq \epsilon\|\bm{x}\|_2
$$ 
holds with probability at least $1-e^{-c_2m}$, provided that $m \geq c_1 n$. Conditioning on this event and recalling that $\bm{Z}_0=\bm{z}_0\bm{z}_0^*$ and $\bm{X}=\bm{x}\bm{x}^*$, we can bound the matrix error. By choosing $\epsilon$ such that $(2+\epsilon)\epsilon= \epsilon_0$, we obtain:
\begin{align*}
\|\bm{Z}_0-\bm{X}\|_F\leq(\|\bm{z}_0\|_2+\|\bm{x}\|_2)\|\bm{z}_0-\bm{x}\|_2\leq (2+\epsilon)\epsilon\|\bm{x}\|_2^2=\epsilon_0\|\bm{X}\|_F.
\end{align*}
This completes the proof.
\end{proof}

Building on this initialization guarantee, we establish the local linear convergence of the TWRGD algorithm.
\begin{thm}[Local Convergence for TWRGD]\label{thm:3}
Let $\{\bm{a}_k\}_{k=1}^{m}$ follow the random Gaussian measurement model in (\ref{eq:Gaussian}). Let $\tau_0,\tau_1,\tau_2>0$ be sufficiently large truncation parameters. There exist positive constants $c_3,c_4$, $\epsilon_0\in(0,\frac{1}{22})$, and step sizes $\alpha_t>0$ for any $t>0$, such that: provided $m\geq c_3n$ and $\bm{Z}_0$ obeys
$$
\left\|\bm{Z}_0-\bm{X}\right\|_F \leq \epsilon_0\|\bm{X}\|_F,
$$
then with probability exceeding $1-e^{-c_4n}$, the sequence $\{\bm{Z}_t\}_{t\in\mathbb{N}}$ generated by Algorithm \ref{alg:1} using  initialization $\bm{Z}_0$, input data $\bm{y}=\mathcal{A}(\bm{X})$, and parameters $\tau_0, \tau_1, \tau_2, \alpha_t$ satisfies
\begin{equation}\nonumber
\left\|\bm{Z}_{t}-\bm{X}\right\|_F \leq \nu_1^{t}\epsilon_0\left\|\bm{X}\right\|_F,
\end{equation}
for some contraction factor $\nu_1\in(0,1)$. Moreover, $\nu_1$ decreases to $0$ as the truncation parameters $\tau_0,\tau_1,\tau_2$ increase.
\end{thm}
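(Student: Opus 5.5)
The plan is an induction on $t$: assuming $\|\bm{Z}_t-\bm{X}\|_F\le \nu_1^t\epsilon_0\|\bm{X}\|_F$, show the same bound at $t+1$ with one more factor $\nu_1$. Write $\bm{Z}_t=\bm{z}_t\bm{z}_t^*$ with $\bm{u}_t=\bm{z}_t/\|\bm{z}_t\|_2$, and let $\mathbb{T}_t$ be the tangent space at $\bm{Z}_t$. Split the error after one step into a tangent part and a retraction part,
\[
\bm{Z}_{t+1}-\bm{X}=\big(\mathcal{H}_1(\bm{W}_t)-\bm{W}_t\big)+\big(\bm{W}_t-\bm{X}\big).
\]
For the retraction term, use that $\bm{W}_t\in\mathbb{T}_t$ and that $\mathcal{H}_1$ is the best rank-$1$ approximation. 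This gives
\[
\|\mathcal{H}_1(\bm{W}_t)-\bm{W}_t\|_F\le\|\bm{W}_t-\bm{X}\|_F+\|(\mathcal{I}-\mathcal{P}_{\mathbb{T}_t})\bm{X}\|_F .
\]
The standard curvature estimate for rank-$1$ manifolds gives $\|(\mathcal{I}-\mathcal{P}_{\mathbb{T}_t})\bm{X}\|_F\le \|\bm{Z}_t-\bm{X}\|_F^2/\|\bm{X}\|_F$. Hence it suffices to show $\|\bm{W}_t-\bm{X}\|_F\le \nu\|\bm{Z}_t-\bm{X}\|_F$ with $\nu$ small, plus a second-order term of size $O(\epsilon_0)\|\bm{Z}_t-\bm{X}\|_F$. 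These two together give $\nu_1\approx 2\nu+C\epsilon_0$.

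For the tangent step, use $\bm{y}=\mathcal{A}(\bm{X})$ to write the update as
\[
\bm{W}_t-\bm{X}=(\bm{Z}_t-\bm{X})-\tfrac{\alpha_t}{m}\mathcal{T}^{(\mathfrak{o})}_{\mathbb{T}_t}\mathcal{A}_t^*\mathcal{A}_t(\bm{Z}_t-\bm{X}).
\]
Decompose $\bm{Z}_t-\bm{X}=\bm{E}_T+\bm{E}_\perp$ with $\bm{E}_T=\mathcal{P}_{\mathbb{T}_t}(\bm{Z}_t-\bm{X})$ and $\bm{E}_\perp=-(\mathcal{I}-\mathcal{P}_{\mathbb{T}_t})\bm{X}$, which is second order. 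The key observation is that $\mathcal{T}^{(\mathfrak{o})}_{\mathbb{T}_t}$ restricted to $\mathbb{T}_t$ is the inverse of the $\mathfrak{o}$-Gram operator. Indeed, for $\bm{W}\in\mathbb{T}_t$ we have $\mathcal{T}^{(\mathfrak{o})}_{\mathbb{T}_t}(\bm{W}+\operatorname{tr}(\bm{W})\bm{I})=\bm{W}$, a direct check using Lemma \ref{lemma:trace}. Therefore
\[
\mathcal{T}^{(\mathfrak{o})}_{\mathbb{T}_t}\,\mathbb{E}\big[\tfrac1m\mathcal{A}^*\mathcal{A}\big]\big|_{\mathbb{T}_t}=\mathcal{I},
\]
by \eqref{eq:exp}. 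Equivalently, $\mathcal{T}^{(\mathfrak{o})}_{\mathbb{T}_t}\tfrac1m\mathcal{A}_t^*\mathcal{A}_t$ is the $\mathfrak{o}$-self-adjoint operator on $\mathbb{T}_t$ induced by the form $\tfrac1m\langle\mathcal{A}_t\cdot,\mathcal{A}_t\cdot\rangle$. The restricted isometry \eqref{tangent_space_condition} in the $\mathfrak{o}$-norm, with $C_{l,\mathfrak{o}}\ge1-\delta$ and $C_{u,\mathfrak{o}}\le1+\delta$, then gives
\[
\big\|\mathcal{I}-\alpha\,\mathcal{T}^{(\mathfrak{o})}_{\mathbb{T}_t}\tfrac1m\mathcal{A}_t^*\mathcal{A}_t\mathcal{P}_{\mathbb{T}_t}\big\|^{\mathfrak{o}}_{op}\le\delta'
\]
for $\alpha\approx1$. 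Norms $\|\cdot\|_{\mathfrak{o}}$ and $\|\cdot\|_F$ are equivalent on $\mathbb{T}_t$ within a factor $2$. The cross term $\mathcal{T}^{(\mathfrak{o})}_{\mathbb{T}_t}\tfrac1m\mathcal{A}_t^*\mathcal{A}_t\bm{E}_\perp$ is bounded by an upper RIP-type bound for the truncated operator on rank-$2$ matrices times $\|\bm{E}_\perp\|_F=O(\epsilon_0)\|\bm{Z}_t-\bm{X}\|_F$.

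The main obstacle is establishing the near isometry for the truncated, data-dependent operator $\mathcal{A}_t$, uniformly over all $\bm{Z}_t$ in the $\epsilon_0$-ball, with $m\ge c_3n$. I would first use the lemma of \cite{RefWorks:RefID:12-chen2017solving} showing the events $\mathcal{E}_0^k$, $\mathcal{E}_2^k$ can be sandwiched between deterministic events depending only on $\bm{a}_k^*\bm{x}$ and $\bm{a}_k^*\bm{z}$. Then I would compute $\mathbb{E}[\tfrac1m\mathcal{A}_t^*\mathcal{A}_t]$ restricted to $\mathbb{T}_t$. This differs from $\mathcal{W}\mapsto\bm{W}+\operatorname{tr}(\bm{W})\bm{I}$ by terms involving truncated Gaussian moments, and those terms vanish as $\tau_0,\tau_1,\tau_2\to\infty$; this is why $\nu_1\to0$. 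Finally, a Bernstein bound plus an $\epsilon$-net over the $O(n)$-dimensional set of pairs $(\bm{u}_t,\bm{W})$ gives concentration with probability $1-e^{-c_4n}$. The bounded truncation makes the summands sub-exponential, and Lipschitz smoothing of the indicators handles the net.

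Finally, choose $\alpha_t$ (constant $\approx m$ in the normalization of \eqref{step-size}, or the exact line search, which lies in $[1/(1+\delta),1/(1-\delta)]$), then $\epsilon_0$ small and the $\tau_i$ large, so that $\nu_1<1$. Induction closes because the contraction keeps $\bm{Z}_t$ in the $\epsilon_0$-ball where the uniform estimates hold.
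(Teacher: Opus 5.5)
Your skeleton matches the paper's. Its pieces are the near isometry of $\frac{1}{\sqrt m}\mathcal{A}_t$ on $\mathbb{T}_t$ in the $\mathfrak{o}$-metric, $\mathfrak{o}$-self-adjointness of the tangent update, the curvature bound on $(\mathcal{I}-\mathcal{P}_{\mathbb{T}_t})\bm{X}$, and a retraction estimate. Your identity $\mathcal{T}^{(\mathfrak{o})}_{\mathbb{T}_t}(\bm{W}+\operatorname{tr}(\bm{W})\bm{I}_n)=\bm{W}$ is Lemma~\ref{lemma:metric1} in another form. Your factor-$2$ Eckart--Young bound for $\mathcal{H}_1$ is cruder than the paper's retraction lemma, but it suffices because the tangent contraction tends to $0$.

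The genuine gap is the cross term $\frac{\alpha_t}{m}\mathcal{T}^{(\mathfrak{o})}_{\mathbb{T}_t}\mathcal{A}_t^*\mathcal{A}_t\bm{E}_\perp$. You bound it by an upper RIP for the truncated operator on rank-$2$ matrices times $\|\bm{E}_\perp\|_F=O(\epsilon_0)\|\bm{Z}_t-\bm{X}\|_F$. No such bound holds at $m=O(n)$. The truncation events depend on $\bm{y}$ and $\bm{z}_t$, not on the matrix being measured. So for $\bm{M}=\bm{a}_j\bm{a}_j^*/\|\bm{a}_j\|_2^2$ with index $j$ surviving truncation, $\frac1m\|\mathcal{A}_t(\bm{M})\|_2^2\ge\|\bm{a}_j\|_2^4/m\approx n^2/m$.

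Specializing to the actual $\bm{E}_\perp=-\bm{v}\bm{v}^*$, with $\bm{v}=(\bm{I}_n-\bm{u}_t\bm{u}_t^*)\bm{h}$ and $\bm{h}=\bm{z}_t-\bm{x}$, does not rescue it. In the complex case $\mathcal{E}_2^k$ only controls $\bigl||\bm{a}_k^*\bm{z}_t|-|\bm{a}_k^*\bm{x}|\bigr|$, so $|\bm{a}_k^*\bm{h}|$ is capped only at the scale $\|\bm{z}_t\|_2$, not $\|\bm{h}\|_2$. Concretely, pick an index $j$ with $|\bm{a}_j^*\bm{x}|\asymp\|\bm{x}\|_2$. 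Take $\bm{h}$ along $\bm{a}_j$ with $\|\bm{h}\|_2\asymp\|\bm{x}\|_2/n$, and choose its phase so that $\operatorname{Re}(\overline{\bm{a}_j^*\bm{x}}\,\bm{a}_j^*\bm{h})=0$. Then index $j$ passes $\mathcal{E}_0^j$, $\mathcal{E}_1^j$ and $\mathcal{E}_2^j$, and $\frac1m\|\mathcal{A}_t\bm{E}_\perp\|_2^2\gtrsim (n^2/m)\|\bm{E}_\perp\|_F^2$. So at this sample size the normal component is not a harmless second-order perturbation.

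What is missing is the restricted weak correlation of Proposition~\ref{WRC}. It bounds $\frac1m\|\mathcal{A}_t\bm{E}_\perp\|_2^2$ by $\rho_4\|\bm{Z}_t-\bm{X}\|_F^2$, i.e.\ against the \emph{first-order} error rather than against $\|\bm{E}_\perp\|_F^2$. The argument runs as follows.
\begin{itemize}
\item It uses the inclusion $\mathbb{1}_{\mathcal{E}_0^k\cap\mathcal{E}_1^k\cap\mathcal{E}_2^k}\le\mathbb{1}_{\{|\bm{a}_k^*\bm{h}|\le\tau_{h,z}\|\bm{z}\|_2\}}$ and expands $|\bm{a}_k^*\bm{v}|^4$ into five terms.
\item The mixed terms are controlled through $|\bm{a}_k^*\bm{z}|\le\tau_1\|\bm{z}\|_2$ and $\frac1m\sum_k|\bm{a}_k^*\bm{h}|^2\le 1.05\|\bm{h}\|_2^2$.
\item The pure term $\frac1m\sum_k|\bm{a}_k^*\bm{h}|^4$ is split at $1.15\tau_2\|\bm{h}\|_2$. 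The large-deviation part is at most $\tau_{h,z}^2\|\bm{z}\|_2^2\|\bm{h}\|_2^2\,\|\frac1m\sum_k\bm{a}_k\bm{a}_k^*\mathbb{1}_{\{|\bm{a}_k^*\bm{h}|>1.15\tau_2\|\bm{h}\|_2\}}\|$, which is handled by the uniform tail bound of Lemma~\ref{spectral}.
\item Pairing this by Cauchy--Schwarz with the tangent-space upper bound gives a cross term of at most $\sqrt{2(\hat{\beta_2}+\delta)\rho_4}\,\|\bm{Z}_t-\bm{X}\|_F$.
\end{itemize}
This constant is small because $\tau_2$ is large and the concentration parameter $\epsilon$ is small relative to $\tau_{h,z}^{-2}$, not merely because $\epsilon_0$ is small. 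Your accounting of $\nu_1$ must therefore carry it as a separate term that vanishes as the truncation parameters grow.
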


The proofs of Theorems \ref{thm:3} are deferred to Section \ref{sec5}. These theorems demonstrate that when the number of measurements $m$ is sufficiently large, the contraction factors of our proposed TWRGD  becomes arbitrarily small. In contrast, the contraction factors of existing algorithms—such as canonical TRGD, and TWF, possess a non-zero lower bound even as the number of measurements increases. This shows that the proposed metric on the Riemannian manifold helps improve algorithmic efficiency. It further indicates that the general framework of Riemannian gradient descent-based algorithms, with a suitable choice of metric, can achieve optimal computational performance.

Combining the preceding theorems, we establish the following exact recovery guarantees for our proposed algorithms.
\begin{thm}[Exact Recovery Guarantees]
Let $\{\bm{a}_k\}_{k=1}^{m}$ follow the random Gaussian measurement model in (\ref{eq:Gaussian}). Then there exist algorithmic parameters $\tau_0,\tau_1,\tau_2,\alpha_t$, and positive constants $c_5,c_6,\nu\in(0,1)$ such that $\{\bm{Z}_t\}_{t\in\mathbb{N}}$ generated by Algorithm \ref{alg:1}, using input data $\bm{y}=\mathcal{A}(\bm{X})$ and the chosen algorithmic parameters satisfy: if $m\geq c_5 n$, then, with probability at least $1-e^{-c_6 m}$,
\begin{equation}\nonumber
\|\bm{Z}_t-\bm{X}\|_{F}\leq \nu^{t}\cdot\epsilon_0\|\bm{X}\|_{F}.
\end{equation}
Moreover, the contraction factor $\nu$ decreases to $0$ as the truncation parameters $\tau_0,\tau_1,\tau_2$ increase.
\end{thm}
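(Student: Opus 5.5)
The plan is to chain Theorem~\ref{thm:1} with Theorem~\ref{thm:3} and take a union bound over the two failure events. Fix the truncation parameters $\tau_0,\tau_1,\tau_2$ large enough for Theorem~\ref{thm:3} to apply. Theorem~\ref{thm:3} then supplies constants $c_3,c_4$, a radius $\epsilon_0\in(0,\frac{1}{22})$, step sizes $\alpha_t$ and a contraction factor $\nu_1\in(0,1)$. I will take $\nu:=\nu_1$, so the statement that $\nu\to0$ as $\tau_0,\tau_1,\tau_2$ increase is inherited directly from Theorem~\ref{thm:3}. Next I apply Theorem~\ref{thm:1} with this particular $\epsilon_0$, which gives constants $c_1,c_2$ such that, when $m\ge c_1 n$, the output $\bm{Z}_0$ of Algorithm~\ref{alg:3} satisfies $\|\bm{Z}_0-\bm{X}\|_F\le\epsilon_0\|\bm{X}\|_F$ with probability at least $1-e^{-c_2 m}$. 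I set $c_5:=\max\{c_1,c_3\}$.

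On the intersection of the two good events, the hypothesis of Theorem~\ref{thm:3} is met by $\bm{Z}_0$. Theorem~\ref{thm:3} then gives $\|\bm{Z}_t-\bm{X}\|_F\le\nu^t\epsilon_0\|\bm{X}\|_F$ for all $t$. By the union bound, the failure probability is at most $e^{-c_2 m}+e^{-c_4 n}$.

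The one genuine subtlety is independence, and there are two points to check. First, $\bm{Z}_0$ is built from the same $\bm{a}_k$ that the iterations use, so Theorem~\ref{thm:3} cannot be applied conditionally on $\bm{Z}_0$. What is needed is that the high-probability event in Theorem~\ref{thm:3} is \emph{uniform}: it should be an event about $\{\bm{a}_k\}$ alone, namely restricted near-isometry and the truncation concentration bounds holding uniformly over all $\bm{Z}$ in the $\epsilon_0$-ball. On that event, the conclusion holds for every admissible initial point, including the data-dependent $\bm{Z}_0$. I would state this explicitly when invoking Theorem~\ref{thm:3}, and check in Section~\ref{sec5} that its proof only uses uniform bounds of this kind.

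Second, the statement asks for a failure probability of the form $e^{-c_6 m}$, whereas Theorem~\ref{thm:3} only gives $e^{-c_4 n}$. Since $m\ge c_5 n$ we have $m/c_5\ge n$, but that inequality points the wrong way: it does not let us bound $e^{-c_4 n}$ by $e^{-c m}$. I would first try to verify that the uniform concentration events behind Theorem~\ref{thm:3} (net arguments together with Bernstein bounds on $\frac{1}{m}\|\mathcal{A}(\bm{W})\|_2^2$) actually hold with probability $1-e^{-cm}$ once $m\ge Cn$. Covering-number losses of $e^{O(n)}$ are absorbed by $e^{-cm}$ when $C$ is large, so this should work, and it would let me take $c_6:=\min\{c_2,c\}/2$ and absorb the factor $2$ for large $m$. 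If some step of that proof only yields $e^{-c_4 n}$, I would fall back to the failure probability $e^{-c_2 m}+e^{-c_4 n}\le 2e^{-c_6 n}$, which means the exponent in the statement should read $n$ rather than $m$. I expect this probability bookkeeping, rather than the convergence itself, to be the main obstacle.
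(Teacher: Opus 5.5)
Your proposal is correct and follows the paper's proof, which simply combines Theorem~\ref{thm:1}, applied with the $\epsilon_0$ supplied by Theorem~\ref{thm:3}, with Theorem~\ref{thm:3} itself. Your two extra checks address points the paper's one-line proof glosses over, and both work out on your primary route: Propositions~\ref{RIP:Gaussian} and~\ref{WRC} hold uniformly over all positive semi-definite $\bm{Z}\in\mathcal{M}_1$ in the ball, the rest of the argument in Section~\ref{sec5.2} is deterministic, and those propositions fail with probability at most a constant multiple of $e^{-C_{13}m\epsilon^2}$ for a fixed constant $\epsilon$, so the $e^{-c_6 m}$ bound holds and your fallback to an exponent in $n$ is not needed.
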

\begin{proof}
    The theorem follows directly from combining Theorems \ref{thm:1}, and \ref{thm:3}.
\end{proof}

\section{Supporting Lemmas}\label{sec4}
In this section, we present the supporting lemmas required for the proofs of the main results, Theorems \ref{thm:3}. First, in Section \ref{sec4.1}, we characterize the relationship between the inner product induced by the newly defined metric and the standard matrix inner product. Next, in Section \ref{sec4.2}, we establish several concentration inequalities arising from the random complex Gaussian model. 

For clarity and convenience, we define the following constants. Unless otherwise specified, $\xi$ refers to a complex standard normal random variable. The following constants will be used throughout the rest of the paper:
\begin{align*}
&\hat{\beta_1}:= \mathbb{E}[|\xi|^4\cdot\mathbb{1}_{|\xi|\leq\tau_1}]-\mathbb{E}[|\xi|^2\cdot\mathbb{1}_{|\xi|\leq\tau_1}],\quad
\hat{\beta_2}:= \mathbb{E}\left[|\xi|^2\cdot\mathbb{1}_{{|\xi| \leq \tau_1}}\right], \\
&\rho_1:= \left(10\tau_1^3e^{-0.49\tau_1^2}+\dfrac{10\tau_1^2}{\epsilon}e^{-0.49\epsilon^{-2}}+6\tau_1^4\epsilon\right), \\
&\rho_2:= 6 \tau_1^2 \tau_2 e^{-0.64 \tau_2^2} + 4 \tau_1^2 \tau_0 e^{-0.39 \tau_0^2} + 2\epsilon, \\
&\rho_4:= \rho_3 + \left(6 \tau_2 \tau_{h, z}^2 e^{-0.64 \tau_2^2}+\tau_{h, z}^2\epsilon\right), \\ 
&\epsilon_0:=\min \left\{\sqrt{\frac{\rho_3}{3(\tau_1^4+5 \tau_1^3+8 \tau_1^2+2 \tau_2^2)}}, \frac{\rho_3}{15 \tau_1 \tau_{h, z}}, \frac{1}{11}\right\},\\
&\delta: = 2 (\rho_1 +\rho_2),\quad
\tau_{h, z}:=\tau_1+\left(0.3 \tau_2\left(\tau_1+1.2 \tau_0\right)+\tau_1^2\right)^{1 / 2},
\end{align*}
where $\rho_3>0$ and $\epsilon>0$ are sufficiently small constant.
 \subsection{Relationships between Inner Products and Projections}\label{sec4.1}
We first establish relationships between the inner product induced by the newly defined metric $\langle \cdot,  \cdot\rangle_{\mathfrak{o}}$ in \eqref{new metric} and the standard matrix inner product $\langle\cdot , \cdot\rangle$. Recall that for any $\bm{Z}\in\mathcal{M}_1$, we denote the tangent space at $\bm{Z}$ by $\mathbb{T}_{\bm{Z}}$ \eqref{tangent_space11}, the orthogonal projection onto $\mathbb{T}_{\bm{Z}}$ (with respect to the standard inner product) by $\mathcal{P}_{\mathbb{T}_{\bm{Z}}}$ \eqref{ortho projector}, and a non-orthogonal projection onto $\mathbb{T}_{\bm{Z}}$ by $\mathcal{T}^{(\mathfrak{o})}_{\mathbb{T}_{\bm{Z}}}$ \eqref{projector2}. We thus obtain the following lemmas.

 \begin{lemma}\label{lemma:metric1}
 For $\bm{Z}\in\mathcal{M}_1$, any $\bm{A}\in\mathbb{H}^{n\times n}$, and $\bm{B} \in {\mathbb{T}_{\bm{Z}}}$, we have
 \[
 \langle\mathcal{T}^{(\mathfrak{o})}_{\mathbb{T}_{\bm{Z}}}\bm{A}, \bm{B} \rangle_{\mathfrak{o}} = \langle \bm{A}, \bm{B} \rangle.
  \]  
 \end{lemma}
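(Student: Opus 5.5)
Write $\bm{Z}=\beta\bm{u}\bm{u}^*$ with $\|\bm{u}\|_2=1$ and $\beta\neq0$. The plan is to unfold the definition \eqref{new metric} of $\langle\cdot,\cdot\rangle_{\mathfrak{o}}$ and use Lemma~\ref{lemma:trace} to turn every trace of a tangent vector into $\bm{u}^*(\cdot)\bm{u}$. This is the same computation as in the proof of Lemma~\ref{lemma:riemannian_gradient}, read in the other direction. Set $\bm{T}:=\mathcal{T}^{(\mathfrak{o})}_{\mathbb{T}_{\bm{Z}}}(\bm{A})=\bm{uu}^*\bm{A}+\bm{A}\bm{uu}^*-\tfrac32\bm{uu}^*\bm{A}\bm{uu}^*$. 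By \eqref{new metric},
\[
\langle \bm{T},\bm{B}\rangle_{\mathfrak{o}}=\langle \bm{T},\bm{B}\rangle+\operatorname{tr}(\bm{T})\operatorname{tr}(\bm{B}).
\]

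First I check that $\bm{T}\in\mathbb{T}_{\bm{Z}}$, so that Lemma~\ref{lemma:trace} applies to it. Taking $\bm{w}=\bm{A}\bm{u}-\tfrac34(\bm{u}^*\bm{A}\bm{u})\bm{u}$ in \eqref{tangent_space11} gives $\bm{T}=\bm{u}\bm{w}^*+\bm{w}\bm{u}^*$, using that $\bm{u}^*\bm{A}\bm{u}$ is real because $\bm{A}$ is Hermitian. Hence $\operatorname{tr}(\bm{T})=\bm{u}^*\bm{T}\bm{u}=a+a-\tfrac32 a=\tfrac12 a$, where $a:=\bm{u}^*\bm{A}\bm{u}$. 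Also $\operatorname{tr}(\bm{B})=\bm{u}^*\bm{B}\bm{u}=:b$, by Lemma~\ref{lemma:trace} applied to $\bm{B}\in\mathbb{T}_{\bm{Z}}$.

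Next I compute the Euclidean part. For $\bm{B}\in\mathbb{T}_{\bm{Z}}$ we have $\mathcal{P}_{\mathbb{T}_{\bm{Z}}}(\bm{B})=\bm{B}$, and $\bm{uu}^*\bm{A}+\bm{A}\bm{uu}^*-\bm{uu}^*\bm{A}\bm{uu}^*=\mathcal{P}_{\mathbb{T}_{\bm{Z}}}(\bm{A})$ by \eqref{ortho projector}. Therefore
\[
\langle\bm{T},\bm{B}\rangle=\langle\mathcal{P}_{\mathbb{T}_{\bm{Z}}}(\bm{A}),\bm{B}\rangle-\tfrac12\langle\bm{uu}^*\bm{A}\bm{uu}^*,\bm{B}\rangle=\langle\bm{A},\bm{B}\rangle-\tfrac12\, a\, b,
\]
using self-adjointness of the orthogonal projector and $\langle \bm{uu}^*\bm{A}\bm{uu}^*,\bm{B}\rangle=a\,b$.

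Adding the trace term $\tfrac12 a\cdot b$ cancels the correction exactly, which gives $\langle\bm{T},\bm{B}\rangle_{\mathfrak{o}}=\langle\bm{A},\bm{B}\rangle$. The only delicate point is confirming that $\bm{T}$ lies in $\mathbb{T}_{\bm{Z}}$; this is what lets us invoke Lemma~\ref{lemma:trace} for $\operatorname{tr}(\bm{T})$, and the explicit $\bm{w}$ above settles it. Everything else is routine bookkeeping of the coefficient $\tfrac32$, which is chosen precisely to make this cancellation happen.
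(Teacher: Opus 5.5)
Your proof is correct and is the same computation the paper uses: write $\mathcal{T}^{(\mathfrak{o})}_{\mathbb{T}_{\bm{Z}}}(\bm{A})=\mathcal{P}_{\mathbb{T}_{\bm{Z}}}(\bm{A})-\tfrac12\bm{uu}^*\bm{A}\bm{uu}^*$, and the $-\tfrac12 ab$ from the Euclidean part cancels against the trace term $\tfrac12 ab$. The tangent-space membership check for $\bm{T}$ is correct but not actually needed, since $\operatorname{tr}(\bm{T})=\tfrac12\bm{u}^*\bm{A}\bm{u}$ already follows from cyclicity of the trace.
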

 
 \begin{proof}
     Let $\bm{Z}=\beta\bm{u}\bm{u}^*$ where $\|\bm{u}\|_2=1$. We observe that $\mathcal{T}^{(\mathfrak{o})}_{\mathbb{T}_{\bm{Z}}}\bm{A}=\mathcal{P}_{\mathbb{T}_{\bm{Z}}}\bm{A}-\frac{1}{2}\bm{uu^*Auu^*}$. Then, we derive
     \begin{equation}\nonumber
         \begin{split}
         \langle\mathcal{T}^{(\mathfrak{o})}_{\mathbb{T}_{\bm{Z}}}\bm{A}, \bm{B} \rangle_{\mathfrak{o}}
&= \langle \mathcal{T}^{(\mathfrak{o})}_{\mathbb{T}_{\bm{Z}}}\bm{A}, \bm{B}\rangle + \operatorname{tr}(\mathcal{T}^{(\mathfrak{o})}_{\mathbb{T}_{\bm{Z}}}\bm{A})\operatorname{tr}(\bm{B})\\
         & = \langle \mathcal{P}_{\mathbb{T}_{\bm{Z}}}\bm{A}, \bm{B}\rangle - \frac{1}{2}\langle\bm{uu}^*\bm{A}\bm{uu}^*,\bm{B}\rangle + \operatorname{tr}(\mathcal{P}_{\mathbb{T}_{\bm{Z}}}\bm{A})\operatorname{tr}(\bm{B}) -\frac{1}{2}\operatorname{tr}(\bm{uu}^*\bm{A}\bm{uu}^*)\operatorname{tr}(\bm{B})\\
         &= \langle \bm{A}, \bm{B}\rangle -\frac{1}{2}\bm{u}^*\bm{A}\bm{u}\bm{u}^*\bm{B}\bm{u}+\bm{u}^*\bm{A}\bm{u}\bm{u}^*\bm{B}\bm{u}-\frac{1}{2}\bm{u}^*\bm{A}\bm{u}\bm{u}^*\bm{B}\bm{u}
         = \langle \bm{A}, \bm{B}\rangle,
         \end{split}
     \end{equation} 
where the second last equality follows from Lemma \ref{lemma:trace}. 
 \end{proof}

 \begin{lemma}\label{lemma:metric2}
 For any $\bm{A} \in \mathbb{H}^{n \times n}$, we have 
 $$
 \|\mathcal{T}^{(\mathfrak{o})}_{\mathbb{T}_{\bm{Z}}}(\bm{A})\|_{\mathfrak{o}} \leq \|\mathcal{P}_{\mathbb{T}_{\bm{Z}}}(\bm{A})\|_F \quad \text{and} \quad \|\mathcal{P}_{\mathbb{T}_{\bm{Z}}}(\bm{A})\|_F\leq\|\mathcal{P}_{\mathbb{T}_{\bm{Z}}}(\bm{A})\|_{\mathfrak{o}}\leq\sqrt{2}\|\mathcal{P}_{\mathbb{T}_{\bm{Z}}}(\bm{A})\|_F .
 $$
 \end{lemma}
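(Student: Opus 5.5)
We reduce both claims to explicit computations in a basis adapted to $\bm{u}$, where $\bm{Z}=\beta\bm{u}\bm{u}^*$ with $\|\bm{u}\|_2=1$. Write $a:=\bm{u}^*\bm{A}\bm{u}\in\mathbb{R}$ and $\bm{v}:=(\bm{I}_n-\bm{uu}^*)\bm{A}\bm{u}$, which is orthogonal to $\bm{u}$. Expanding \eqref{ortho projector} gives
\[
\mathcal{P}_{\mathbb{T}_{\bm{Z}}}(\bm{A}) = a\,\bm{uu}^* + \bm{u}\bm{v}^* + \bm{v}\bm{u}^*,
\]
and the observation in the proof of Lemma~\ref{lemma:metric1} gives
\[
\mathcal{T}^{(\mathfrak{o})}_{\mathbb{T}_{\bm{Z}}}(\bm{A})=\mathcal{P}_{\mathbb{T}_{\bm{Z}}}(\bm{A})-\tfrac12\bm{uu}^*\bm{A}\bm{uu}^* = \tfrac{a}{2}\,\bm{uu}^* + \bm{u}\bm{v}^* + \bm{v}\bm{u}^*.
\]
The three matrices $\bm{uu}^*$, $\bm{uv}^*$, $\bm{vu}^*$ are mutually orthogonal in the Frobenius inner product, since $\bm{u}^*\bm{v}=0$. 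Hence
\[
\|\mathcal{P}_{\mathbb{T}_{\bm{Z}}}(\bm{A})\|_F^2=a^2+2\|\bm{v}\|_2^2 \quad\text{and}\quad \|\mathcal{T}^{(\mathfrak{o})}_{\mathbb{T}_{\bm{Z}}}(\bm{A})\|_F^2=\tfrac{a^2}{4}+2\|\bm{v}\|_2^2.
\]

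Both matrices lie in $\mathbb{T}_{\bm{Z}}$, so Lemma~\ref{lemma:trace} applies to their traces: the trace of $\mathcal{P}_{\mathbb{T}_{\bm{Z}}}(\bm{A})$ is $a$, and the trace of $\mathcal{T}^{(\mathfrak{o})}_{\mathbb{T}_{\bm{Z}}}(\bm{A})$ is $a/2$. Substituting into the definition \eqref{new metric} of $\|\cdot\|_{\mathfrak{o}}$ yields
\[
\|\mathcal{P}_{\mathbb{T}_{\bm{Z}}}(\bm{A})\|_{\mathfrak{o}}^2=2a^2+2\|\bm{v}\|_2^2 \quad\text{and}\quad \|\mathcal{T}^{(\mathfrak{o})}_{\mathbb{T}_{\bm{Z}}}(\bm{A})\|_{\mathfrak{o}}^2=\tfrac{a^2}{4}+\tfrac{a^2}{4}+2\|\bm{v}\|_2^2=\tfrac{a^2}{2}+2\|\bm{v}\|_2^2.
\]

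The three inequalities now follow by comparing coefficients. First, $\tfrac{a^2}{2}+2\|\bm{v}\|_2^2\le a^2+2\|\bm{v}\|_2^2$, so $\|\mathcal{T}^{(\mathfrak{o})}_{\mathbb{T}_{\bm{Z}}}(\bm{A})\|_{\mathfrak{o}}\le\|\mathcal{P}_{\mathbb{T}_{\bm{Z}}}(\bm{A})\|_F$. Second, $a^2+2\|\bm{v}\|_2^2\le 2a^2+2\|\bm{v}\|_2^2$, so $\|\mathcal{P}_{\mathbb{T}_{\bm{Z}}}(\bm{A})\|_F\le\|\mathcal{P}_{\mathbb{T}_{\bm{Z}}}(\bm{A})\|_{\mathfrak{o}}$. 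Third, $2a^2+2\|\bm{v}\|_2^2\le 2(a^2+2\|\bm{v}\|_2^2)$, so $\|\mathcal{P}_{\mathbb{T}_{\bm{Z}}}(\bm{A})\|_{\mathfrak{o}}\le\sqrt2\,\|\mathcal{P}_{\mathbb{T}_{\bm{Z}}}(\bm{A})\|_F$.

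As an alternative route to the first inequality, Lemma~\ref{lemma:metric1} gives $\|\mathcal{T}^{(\mathfrak{o})}_{\mathbb{T}_{\bm{Z}}}\bm{A}\|_{\mathfrak{o}}^2=\langle \bm{A},\mathcal{T}^{(\mathfrak{o})}_{\mathbb{T}_{\bm{Z}}}\bm{A}\rangle=\langle \mathcal{P}_{\mathbb{T}_{\bm{Z}}}\bm{A},\mathcal{T}^{(\mathfrak{o})}_{\mathbb{T}_{\bm{Z}}}\bm{A}\rangle$, to which one could apply Cauchy--Schwarz; the coordinate computation above is more direct.

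There is no real obstacle. The only points needing care are the orthogonal decomposition, where $a$ is real because $\bm{A}$ is Hermitian and the cross terms vanish because $\bm{u}^*\bm{v}=0$, and the trace evaluations, which rely on Lemma~\ref{lemma:trace} being applicable because both matrices belong to $\mathbb{T}_{\bm{Z}}$.
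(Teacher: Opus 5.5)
Your proof is correct, but it takes a somewhat different route from the paper's.

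\textbf{The paper's route.} The paper never decomposes $\mathcal{P}_{\mathbb{T}_{\bm{Z}}}(\bm{A})$. It applies Lemma~\ref{lemma:metric1} with $\bm{B}=\mathcal{T}^{(\mathfrak{o})}_{\mathbb{T}_{\bm{Z}}}(\bm{A})$ to turn the weighted norm into a Frobenius pairing, $\|\mathcal{T}^{(\mathfrak{o})}_{\mathbb{T}_{\bm{Z}}}(\bm{A})\|_{\mathfrak{o}}^2=\langle \mathcal{P}_{\mathbb{T}_{\bm{Z}}}(\bm{A}),\mathcal{T}^{(\mathfrak{o})}_{\mathbb{T}_{\bm{Z}}}(\bm{A})\rangle=\|\mathcal{P}_{\mathbb{T}_{\bm{Z}}}(\bm{A})\|_F^2-\tfrac12(\bm{u}^*\bm{A}\bm{u})^2$. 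It then drops the nonnegative correction term. The second chain is dismissed as ``a similar derivation''; it amounts to $\|\mathcal{P}_{\mathbb{T}_{\bm{Z}}}(\bm{A})\|_{\mathfrak{o}}^2=\|\mathcal{P}_{\mathbb{T}_{\bm{Z}}}(\bm{A})\|_F^2+(\bm{u}^*\bm{A}\bm{u})^2$ together with $(\bm{u}^*\bm{A}\bm{u})^2\le\|\mathcal{P}_{\mathbb{T}_{\bm{Z}}}(\bm{A})\|_F^2$.

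\textbf{Your route.} Your orthogonal splitting $a\,\bm{uu}^*+\bm{uv}^*+\bm{vu}^*$ recovers the same identity: your $\tfrac{a^2}{2}+2\|\bm{v}\|_2^2$ is exactly $\|\mathcal{P}_{\mathbb{T}_{\bm{Z}}}(\bm{A})\|_F^2-\tfrac{a^2}{2}$. Because it makes all four norms explicit, every inequality in the statement, including the ones the paper leaves to the reader, becomes a one-line comparison of coefficients. It also exposes the equality cases: $a=0$ for the first two inequalities, and $\bm{v}=\bm{0}$ for the $\sqrt{2}$ bound.

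\textbf{Trade-off.} The paper's version is shorter and basis-free. It also reuses Lemma~\ref{lemma:metric1}, which is needed anyway for the self-adjointness step in the proof of Theorem~\ref{thm:3}.

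A minor point: your trace evaluations do not need Lemma~\ref{lemma:trace}, since $\operatorname{tr}(\bm{uv}^*)=\bm{v}^*\bm{u}=0$ and $\operatorname{tr}(\bm{vu}^*)=\bm{u}^*\bm{v}=0$ directly.
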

 
 \begin{proof}
     By the definition of the metric $\mathfrak{o}$, for any $\bm{A} \in \mathbb{H}^{n \times n}$, we have
\begin{equation}\nonumber
    \begin{split}
       \|\mathcal{T}^{(\mathfrak{o})}_{\mathbb{T}_{\bm{Z}}}(\bm{A})\|_{\mathfrak{o}}^2 &= \langle \mathcal{T}^{(\mathfrak{o})}_{\mathbb{T}_{\bm{Z}}}(\bm{A}),\mathcal{T}^{(\mathfrak{o})}_{\mathbb{T}_{\bm{Z}}}(\bm{A})\rangle_ {\mathfrak{o}}
= \langle \mathcal{P}_{\mathbb{T}_{\bm{Z}}}(\bm{A}),\mathcal{T}^{(\mathfrak{o})}_{\mathbb{T}_{\bm{Z}}}(\bm{A})\rangle \\
&=\langle \mathcal{P}_{\mathbb{T}_{\bm{Z}}}(\bm{A}), \mathcal{P}_{\mathbb{T}_{\bm{Z}}}(\bm{A})\rangle - \frac{1}{2}\langle\mathcal{P}_{\mathbb{T}_{\bm{Z}}}(\bm{A}), \bm{uu^*Auu^*} \rangle\\
& = \langle \mathcal{P}_{\mathbb{T}_{\bm{Z}}}(\bm{A}), \mathcal{P}_{\mathbb{T}_{\bm{Z}}}(\bm{A})\rangle -\frac{1}{2}(\bm{u^*Au})^2 \leq \|\mathcal{P}_{\mathbb{T}_{\bm{Z}}}(\bm{A})\|_F^2.
   \end{split}
   \end{equation}
   Here, the second equality holds because $\mathcal{T}^{(\mathfrak{o})}_{\mathbb{T}_{\bm{Z}}}(\bm{A})\in {\mathbb{T}_{\bm{Z}}}$ and we apply Lemma \ref{lemma:metric1}. The second inequality follows from a similar derivation.
 \end{proof}

\subsection{Concentration Inequalities}\label{sec4.2}
In order to prove our main results, a crucial step involves estimating the uniform concentration of the random sum $\frac{1}{m}\sum_{k=1}^{m}\left(\operatorname{Re}\left(\bm{w}^* \bm{a}_k \bm{a}_k^* \bm{z}\right)\right)^2 \cdot \mathbb{1}_{\left\{\left|\bm{a}_k^{*} \bm{z}\right| \leq \tau_1\|\bm{z}\|_2\right\}}$ for all $\bm{z},\bm{w} \in \mathbb{C}^n$, as described in the following lemma. 
\begin{lemma}\label{lemma:uniform_concentration}
Fix $\tau_1>0$ and let $\epsilon \in(0,1)$ be a sufficiently small constant. There exist universal constants $C_1,C_2>0$ such that: if $m \geq C_1\epsilon^{-2} \log \epsilon^{-1} \cdot n$, then with probability at least $1-e^{-C_2m \epsilon^2}$, we have 
\begin{equation}\label{real part}\begin{split}
&\left|\frac{1}{m}\sum_{k=1}^{m}\left(\operatorname{Re}\left(\bm{w}^* \bm{a}_k \bm{a}_k^* \bm{z}\right)\right)^2 \cdot \mathbb{1}_{\left\{\left|\bm{a}_k^{*} \bm{z}\right| \leq \tau_1\|\bm{z}\|_2\right\}}- \mathbb{E}\left[\frac{1}{m}\sum_{k=1}^{m}\left(\operatorname{Re}\left(\bm{w}^* \bm{a}_k \bm{a}_k^* \bm{z}\right)\right)^2 \cdot \mathbb{1}_{\left\{\left|\bm{a}_k^{*} \bm{z}\right| \leq \tau_1\|\bm{z}\|_2\right\}}\right]\right| \\ &\leq \rho_1 \|\bm{z}\|_2^2\|\bm{w}\|_2^2,\quad \forall~\bm{z},\bm{w} \in \mathbb{C}^n.
\end{split}\end{equation} 
\end{lemma}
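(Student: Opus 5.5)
By homogeneity it suffices to prove \eqref{real part} for unit vectors $\bm{z},\bm{w}$. Both the summand and the indicator are invariant under positive rescaling of $\bm{z}$ (the threshold scales with $\|\bm{z}\|_2$), and the summand is quadratic in $\bm{w}$. The plan is the standard three steps: fix $(\bm{z},\bm{w})$ and prove a Bernstein-type tail bound, take a union bound over an $\epsilon$-net of $S^{2n-1}\times S^{2n-1}$, then extend to all unit pairs by a continuity argument. The discontinuous indicator is the real obstacle in the last step, and it is what produces the three terms of $\rho_1$.

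\emph{Fixed pair.} Set $X_k:=(\operatorname{Re}(\bm{w}^*\bm{a}_k\bm{a}_k^*\bm{z}))^2\mathbb{1}_{\{|\bm{a}_k^*\bm{z}|\le\tau_1\}}$. On the event in the indicator we have $X_k\le \tau_1^2|\bm{a}_k^*\bm{w}|^2$. Hence $X_k$ is sub-exponential with $\|X_k\|_{\psi_1}\lesssim\tau_1^2$, because $|\bm{a}_k^*\bm{w}|^2$ is a $\chi^2$-type variable with $2$ degrees of freedom. Bernstein's inequality then gives
\[
\Big|\tfrac1m\sum_k (X_k-\mathbb{E}X_k)\Big|\le \tau_1^2\epsilon
\]
with probability at least $1-2e^{-cm\epsilon^2}$.

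\emph{Net.} Take $\epsilon$-nets $\mathcal{N}_z,\mathcal{N}_w$ of the complex unit sphere, each of cardinality at most $(3/\epsilon)^{2n}$. Under $m\ge C_1\epsilon^{-2}\log\epsilon^{-1}\, n$ the union bound costs $e^{4n\log(3/\epsilon)}\le e^{cm\epsilon^2/2}$, so the fixed-pair bound holds simultaneously on the whole net. We also intersect with the standard high-probability events
\[
\Big\|\tfrac1m\sum_k\bm{a}_k\bm{a}_k^*\Big\|\le 1+\epsilon
\qquad\text{and}\qquad
\sup_{\|\bm{h}\|_2=1}\tfrac1m\sum_k|\bm{a}_k^*\bm{h}|^4\mathbb{1}_{\{\cdot\}}\lesssim 1,
\]
the latter in the truncated form needed below.

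\emph{Extension and the indicator.} For unit $\bm{z},\bm{w}$, choose net points $\bm{z}_0,\bm{w}_0$ within distance $\epsilon$.

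Moving $\bm{w}\to\bm{w}_0$ is harmless. On the indicator $|\operatorname{Re}(\bm{w}^*\bm{a}\bm{a}^*\bm{z})|\le\tau_1|\bm{a}^*\bm{w}|$, and the difference of squares is controlled by $\tau_1^2\,\tfrac1m\sum|\bm{a}_k^*\bm{w}|\,|\bm{a}_k^*(\bm{w}-\bm{w}_0)|\lesssim\tau_1^2\epsilon$ via Cauchy–Schwarz and the spectral bound.

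Moving $\bm{z}\to\bm{z}_0$ inside the quadratic factor, while keeping the indicator fixed, gives a similar $O(\tau_1^4\epsilon)$-type term; this accounts for the $6\tau_1^4\epsilon$ part of $\rho_1$.

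The indicator change is where the difficulty lies. The indicators at $\bm{z}$ and $\bm{z}_0$ differ only when $|\bm{a}_k^*\bm{z}_0|$ lies within $|\bm{a}_k^*(\bm{z}-\bm{z}_0)|$ of $\tau_1$. To handle this, sandwich the indicator between two smooth or Lipschitz cutoffs at levels $\tau_1\pm\delta$, taking $\delta$ of order $\sqrt\epsilon$ or of order $\epsilon\cdot$(a large multiple). Then bound the discrepancy in one of two ways.

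\begin{enumerate}
\item If $|\bm{a}_k^*(\bm{z}-\bm{z}_0)|>\delta$: there are few such $k$, since $(\bm{z}-\bm{z}_0)/\epsilon$ is unit and Gaussian tails give at most a fraction $e^{-0.49\,(\delta/\epsilon)^2}$. Uniformly over directions this needs another net argument or a VC/union-bound count, and it yields the $\tfrac{10\tau_1^2}{\epsilon}e^{-0.49\epsilon^{-2}}$ term.
\item Otherwise $|\bm{a}_k^*\bm{z}_0|\in[\tau_1-\delta,\tau_1+\delta]$. The expected contribution of this thin shell is bounded by its Gaussian mass times $\tau_1^2$. Combined with the expectation gap $\mathbb{E}[\cdots\mathbb{1}_{\tau_1\pm\delta}]$, this gives the $10\tau_1^3e^{-0.49\tau_1^2}$ term.
\end{enumerate}

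Finally, compare the two smoothed cutoff sums with their expectations using the net bounds, and note that the expectations at $\tau_1\pm\delta$ differ by at most these same shell quantities.

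Summing the three contributions and rescaling by $\|\bm{z}\|_2^2\|\bm{w}\|_2^2$ yields \eqref{real part} with $\rho_1$ as defined. I expect the main obstacle to be carrying out the indicator-perturbation bookkeeping uniformly in $\bm{z}$ with the stated explicit constants.
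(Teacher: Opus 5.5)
Your route differs from the paper's. The paper never puts a net on $\bm w$: it rewrites the sum as the $2n\times 2n$ quadratic form \eqref{reformulation} in $[\bm w;\overline{\bm w}]$. The claim then reduces to uniform operator-norm bounds for $\frac{1}{m}\sum_k|\bm a_k^*\bm z|^2\bm a_k\bm a_k^*\mathbb{1}_{\{\cdot\}}$ and $\frac{1}{m}\sum_k(\bm a_k^*\bm z)^2\bm a_k\bm a_k^{\top}\mathbb{1}_{\{\cdot\}}$ (Lemmas~\ref{concentration:1} and~\ref{concentration:2}), so only $\bm z$ needs a net. A joint net over $(\bm z,\bm w)$ is a legitimate alternative, and your fixed-pair Bernstein step and your $\bm w$-perturbation are fine.

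\textbf{The gap.} It is in the $\bm z$-perturbation, which is the heart of the lemma. If you hold only the indicator at $\bm z$ fixed, the difference of squares is bounded by $|\bm a_k^*\bm w|^2\,|\bm a_k^*(\bm z-\bm z_0)|\,(|\bm a_k^*\bm z|+|\bm a_k^*\bm z_0|)$. Neither $|\bm a_k^*\bm z_0|$ nor $|\bm a_k^*(\bm z-\bm z_0)|$ is bounded on that event, so your claimed $O(\tau_1^4\epsilon)$ does not follow.

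The fourth-moment event you intend to intersect with cannot close this via Cauchy--Schwarz. With the truncation on $\bm z$, $\sup_{\|\bm w\|_2=1}\frac{1}{m}\sum_k|\bm a_k^*\bm w|^4\mathbb{1}_{\{|\bm a_k^*\bm z|\le\tau_1\}}$ is of order $n^2/m$: take $\bm w$ along the component of a single $\bm a_j$ orthogonal to $\bm z$. That is unbounded when $m\asymp\epsilon^{-2}\log(\epsilon^{-1})\,n$. With the truncation on $\bm w$ instead, the event does not apply to your terms.

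Likewise, in your item~1 it is not enough to know that few indices satisfy $|\bm a_k^*(\bm z-\bm z_0)|>\delta$. You need $\sup_{\bm w}$ of $\frac{1}{m}\sum_k|\bm a_k^*\bm w|^2$ over that random, direction-dependent index set.

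\textbf{The missing tool and the fix.} The missing tool is Lemma~\ref{spectral}: uniformly in $\bm h\neq\bm 0$, $\bigl\|\frac{1}{m}\sum_k\bm a_k\bm a_k^*\mathbb{1}_{\{|\bm a_k^*\bm h|>\gamma\|\bm h\|_2\}}\bigr\|\le5\gamma e^{-0.49\gamma^2}+\epsilon$. It is used through $\bigl|\frac{1}{m}\sum_k c_k\sigma_k|\bm a_k^*\bm w|^2\bigr|\le\max_k|c_k|\,\bigl\|\frac{1}{m}\sum_k\sigma_k\bm a_k\bm a_k^*\bigr\|\,\|\bm w\|_2^2$ for $\sigma_k\in\{0,1\}$. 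The paper splits the $\bm z$-perturbation into four pieces:
\begin{enumerate}
\item The main piece keeps \emph{both} indicators, at $\bm z$ and at $\bm z_0$, so that $|\bm a_k^*\bm z|,|\bm a_k^*\bm z_0|\le\tau_1$. It is split further.
\begin{enumerate}
\item On $|\bm a_k^*(\bm z-\bm z_0)|\le\epsilon$ it costs $2\tau_1\epsilon\,\|\frac{1}{m}\sum_k\bm a_k\bm a_k^*\|$.
\item On $|\bm a_k^*(\bm z-\bm z_0)|>\epsilon$ the difference of squares is at most $\tau_1^2|\bm a_k^*\bm w|^2$. Lemma~\ref{spectral} then applies with $\gamma=1/\epsilon$, because the net in $\bm z$ is an $\epsilon^2$-net; this gives the $\frac{10\tau_1^2}{\epsilon}e^{-0.49\epsilon^{-2}}$ term. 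With your $\epsilon$-net and $\delta\ll1$ you only get $\gamma=\delta/\epsilon\ll1/\epsilon$, so your item~1 cannot produce that term.
\end{enumerate}
\item The two mismatch pieces need no smoothing or thin shells. On a mismatch, either $|\bm a_k^*\bm z|>\tau_1$ or $|\bm a_k^*\bm z_0|>\tau_1$, while the surviving coefficient is at most $\tau_1^2|\bm a_k^*\bm w|^2$. Lemma~\ref{spectral} with $\gamma=\tau_1$ bounds each by $5\tau_1^3e^{-0.49\tau_1^2}+\tau_1^2\epsilon$. That, not a shell mass, is the source of $10\tau_1^3e^{-0.49\tau_1^2}$; $\rho_1$ is built to absorb this error, which does not vanish as $\epsilon\to0$.
\end{enumerate}

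With these repairs your scalar route goes through, and it even spares you the $\bm a_k\bm a_k^{\top}$ block. As written, though, the uniform extension in $\bm z$ is missing, not merely a matter of bookkeeping.
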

The proof of Lemma \ref{lemma:uniform_concentration} is deferred to the end of this section. To prove it, we first reformulate the sum as follows:
\begin{equation}
\label{reformulation}
\begin{split}
&\frac{1}{m}\sum_{k=1}^{m}\left(\operatorname{Re}\left(\bm{w}^* \bm{a}_k \bm{a}_k^* \bm{z}\right)\right)^2 \cdot \mathbb{1}_{\left\{\left|\bm{a}_k^{*} \bm{z}\right| \leq \tau_1\|\bm{z}\|_2\right\}} \\
&=\frac{1}{m} \sum_{k=1}^m\left[\begin{array}{l}\bm{w} \\ \overline{\bm{w}}\end{array}\right]^*\left[\begin{array}{ll}\left|\bm{a}_k^* \bm{x}\right|^2 \bm{a}_k \bm{a}_k^* & \left(\bm{a}_k^* \bm{x}\right)^2 \bm{a}_k \bm{a}_k^{\top} \\ \left(\overline{\bm{a}_k^* \bm{x}}\right)^2 \overline{\bm{a}}_k \bm{a}_k^* & \left|\bm{a}_k^* \bm{x}\right|^2 \overline{\bm{a}}_k \bm{a}_k^{\top}\end{array}\right]\left[\begin{array}{l}\bm{w} \\ \overline{\bm{w}}\end{array}\right]\cdot \mathbb{1}_{\left\{\left|\bm{a}_k^{*} \bm{z}\right| \leq \tau_1\|\bm{z}\|_2\right\}}.
\end{split}\end{equation}
This reformulation implies that it is sufficient to estimate the concentration of $ \frac{1}{m}\sum_{k=1}^{m}\left|\bm{a}_k^* \bm{x}\right|^2 \bm{a}_k \bm{a}_k^*\cdot \mathbb{1}_{\left\{\left|\bm{a}_k^{*} \bm{z}\right| \leq \tau_1\|\bm{z}\|_2\right\}} $ and $\frac{1}{m}\sum_{k=1}^{m}\left(\bm{a}_k^* \bm{x}\right)^2 \bm{a}_k \bm{a}_k^{\top}\cdot \mathbb{1}_{\left\{\left|\bm{a}_k^{*} \bm{z}\right| \leq \tau_1\|\bm{z}\|_2\right\}} $, respectively. 

We begin by calculating the expectations.
\begin{lemma}\label{gaussian:expectations}
Let $\bm{z},\bm{w}\in\mathbb{C}^{n}$. Assume that the measurement vectors $\{\bm{a}_k\}_{k=1}^{m}$ are drawn from the distribution specified in \eqref{eq:Gaussian}. Then the following expectations hold:
\[
\begin{split}
    &\mathbb{E}\left[\left|\bm{a}_k^{*} \bm{z}\right|^2 \bm{a}_k \bm{a}_k^{*} \mathbb{1}_{\left\{\left|\bm{a}_k^{*} \bm{z}\right| \leq \tau_1\|\bm{z}\|_2\right\}}\right] = \hat{\beta_1} \bm{z} \bm{z}^{*}+\hat{\beta_2}\|\bm{z}\|_2^2 \bm{I}_n,\\
    & \mathbb{E}\left[\left(\bm{a}_k^*\bm{z}\right)^2 \bm{a}_k \bm{a}_k^{\top}\mathbb{1}_{\left\{\left|\bm{a}_k^*\bm{z}\right| \leq \tau_1\|\bm{z}\|_2\right\}}\right]=\left(\hat{\beta_1}+\hat{\beta_2}\right) \bm{z} \bm{z}^{\top}, \\
   \end{split}\] 
and  
\[
\mathbb{E}\left[\left(\operatorname{Re}\left(\bm{w}^* \bm{a}_k \bm{a}_k^* \bm{z}\right)\right)^2 \cdot \mathbb{1}_{\left\{\left|\bm{a}_k^{*} \bm{z}\right| \leq \tau_1\|\bm{z}\|_2\right\}}\right] = \frac{1}{2}\|\bm{w}\|_2^2 \|\bm{z}\|_2^2 + (\hat{\beta_1}+\frac{1}{2}\hat{\beta_2})(\operatorname{Re}(\bm{z^*w}))^2 - \frac{1}{2}\hat{\beta_2}(\operatorname{Im}(\bm{z^*w}))^2.
\]   
\end{lemma}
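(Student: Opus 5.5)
The plan is to reduce everything to one-dimensional moments of a standard complex Gaussian $\xi$, by rotating coordinates so that $\bm{z}$ is aligned with $\bm{e}_1$.

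\emph{Reduction.} By homogeneity (both sides of each identity scale the same way, and the event $\{|\bm{a}_k^*\bm{z}|\le\tau_1\|\bm{z}\|_2\}$ is scale invariant), assume $\|\bm{z}\|_2=1$. Write $\bm{a}_k=\bm{U}\bm{g}$ with $\bm{U}$ unitary and $\bm{U}\bm{e}_1=\bm{z}$. By rotation invariance of \eqref{eq:Gaussian}, $\bm{g}$ has the same law as $\bm{a}_k$. Then $\bm{a}_k^*\bm{z}=\bar g_1$ and the event becomes $\{|g_1|\le\tau_1\}$.

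\emph{First identity.} The matrix to compute is $\bm{M}=\mathbb{E}\big[|g_1|^2\bm{g}\bm{g}^*\mathbb{1}_{\{|g_1|\le\tau_1\}}\big]$.
\begin{itemize}
\item For $i\neq j$, the entry $\mathbb{E}[|g_1|^2 g_i\bar g_j\mathbb{1}]$ vanishes by circular symmetry of the independent coordinates.
\item The $(1,1)$ entry is $\mathbb{E}[|\xi|^4\mathbb{1}_{|\xi|\le\tau_1}]=\hat\beta_1+\hat\beta_2$.
\item For $j\ge2$, the $(j,j)$ entry is $\mathbb{E}[|g_1|^2\mathbb{1}]\,\mathbb{E}|g_j|^2=\hat\beta_2$.
\end{itemize}
Hence $\bm{M}=\hat\beta_1\bm{e}_1\bm{e}_1^*+\hat\beta_2\bm{I}_n$. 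Conjugating by $\bm{U}$ gives $\hat\beta_1\bm{z}\bm{z}^*+\hat\beta_2\bm{I}_n$.

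\emph{Second identity.} Similarly $(\bm{a}_k^*\bm{z})^2\bm{a}_k\bm{a}_k^\top=\bm{U}\big(\bar g_1^{\,2}\bm{g}\bm{g}^\top\big)\bm{U}^\top$.
\begin{itemize}
\item Any entry $\bar g_1^2g_ig_j$ with $(i,j)\neq(1,1)$ has nonzero net phase under $g_l\mapsto e^{i\theta}g_l$, so its expectation is zero.
\item The $(1,1)$ entry gives $\mathbb{E}[|\xi|^4\mathbb{1}]=\hat\beta_1+\hat\beta_2$.
\end{itemize}
Since $\bm{U}\bm{e}_1\bm{e}_1^\top\bm{U}^\top=\bm{z}\bm{z}^\top$, this yields $(\hat\beta_1+\hat\beta_2)\bm{z}\bm{z}^\top$.

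\emph{Third identity.} Use $(\operatorname{Re}c)^2=\tfrac12|c|^2+\tfrac12\operatorname{Re}(c^2)$ with $c=\bm{w}^*\bm{a}_k\bm{a}_k^*\bm{z}$. Then $|c|^2=\bm{w}^*\big(|\bm{a}_k^*\bm{z}|^2\bm{a}_k\bm{a}_k^*\big)\bm{w}$ and $c^2=\bm{w}^*\big((\bm{a}_k^*\bm{z})^2\bm{a}_k\bm{a}_k^\top\big)\bar{\bm{w}}$. This is exactly the block form \eqref{reformulation}. Plugging in the first two identities gives
\[
\tfrac12\big(\hat\beta_1|\bm{z}^*\bm{w}|^2+\hat\beta_2\|\bm{z}\|_2^2\|\bm{w}\|_2^2\big)+\tfrac12(\hat\beta_1+\hat\beta_2)\operatorname{Re}\big((\bm{z}^*\bm{w})^2\big),
\]
where $\operatorname{Re}((\bm{z}^*\bm{w})^2)=(\operatorname{Re}\bm{z}^*\bm{w})^2-(\operatorname{Im}\bm{z}^*\bm{w})^2$. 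Collecting terms reproduces the stated coefficients $(\hat\beta_1+\tfrac12\hat\beta_2)$ on $(\operatorname{Re}\bm{z}^*\bm{w})^2$ and $-\tfrac12\hat\beta_2$ on $(\operatorname{Im}\bm{z}^*\bm{w})^2$.

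\emph{Main obstacle: the leading coefficient.} The computation above produces $\tfrac12\hat\beta_2$ on $\|\bm{w}\|_2^2\|\bm{z}\|_2^2$, not the stated $\tfrac12$. Since $|\xi|^2$ is $\mathrm{Exp}(1)$, we have $\hat\beta_2=1-(1+\tau_1^2)e^{-\tau_1^2}$. So the stated form holds exactly only when the truncation is absent ($\tau_1=\infty$, $\hat\beta_2=1$). For finite $\tau_1$ it holds only up to an additive error $\tfrac12(1+\tau_1^2)e^{-\tau_1^2}\|\bm{w}\|_2^2\|\bm{z}\|_2^2$, which is exponentially small in $\tau_1^2$. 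I cannot see a derivation that gives exactly $\tfrac12$ for finite $\tau_1$, and I believe the statement as written is off by the factor $\hat\beta_2$.

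When writing up I would state this discrepancy explicitly. For the later uses of the lemma, I would absorb the $(1+\tau_1^2)e^{-\tau_1^2}$ term into the error constants of the form $\tau_1^je^{-c\tau_1^2}$ that already appear in $\rho_1$, which is presumably how the stated version is meant to be used.
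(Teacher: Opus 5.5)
Your proof is correct, and the discrepancy you flag is a real typo in the statement, not a gap in your reasoning. The paper's own proof normalizes $\|\bm{z}\|_2=\|\bm{w}\|_2=1$ and ends with $\frac{1}{2}\hat{\beta_2}+(\hat{\beta_1}+\frac{1}{2}\hat{\beta_2})(\operatorname{Re}(\bm{z}^*\bm{w}))^2-\frac{1}{2}\hat{\beta_2}(\operatorname{Im}(\bm{z}^*\bm{w}))^2$. The lemma is then applied in \eqref{I12} with leading term $\frac{1}{2}\hat{\beta_2}\|\bm{z}\|_2^2\|\bm{w}\|_2^2$. So you should state and use the corrected identity directly. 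There is no need to absorb $\frac{1}{2}(1+\tau_1^2)e^{-\tau_1^2}\|\bm{w}\|_2^2\|\bm{z}\|_2^2$ into the error constants, because downstream the paper already uses the exact $\hat{\beta_2}$ version.

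Your route differs modestly from the paper's. Both use the same unitary-invariance reduction to $\bm{z}=\bm{e}_1$. The paper proves only the third identity in detail, and dismisses the two matrix identities as ``similar''. For the third identity it does the following:
\begin{itemize}
\item it writes $\bm{w}=s_1e^{\imath\phi_1}\bm{e}_1+s_2e^{\imath\phi_2}\bm{e}_2$;
\item it splits $(\operatorname{Re}(\bm{w}^*\bm{a}_k\bm{a}_k^*\bm{z}))^2$ into $\frac{1}{2}|\bm{a}_k^*\bm{w}|^2|\bm{a}_k^*\bm{z}|^2$ plus $\frac{1}{4}$-weighted cross terms, which is the same split as your $\frac{1}{2}|c|^2+\frac{1}{2}\operatorname{Re}(c^2)$;
\item it evaluates the resulting scalar moments of $g_1,g_2$ directly.
\end{itemize}
You go the other way: you prove the two matrix identities entrywise and deduce the third as a corollary by taking expectations of the block quadratic form in \eqref{reformulation}.

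Your version is more modular. It supplies the omitted proofs, avoids parametrizing $\bm{w}$, and shows that the third identity is the expectation of the block matrix whose concentration is proved in Lemmas \ref{concentration:1} and \ref{concentration:2}. The paper's computation is self-contained at the scalar level and does not rely on the matrix identities.
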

\begin{proof} 
First, for each $k \in \{1, \ldots, m\}$, we can expand the real part term: 
\[
\left(\operatorname{Re}\left(\bm{w}^* \bm{a}_k \bm{a}_k^* \bm{z}\right)\right)^2 =\frac{1}{2}|\bm{a}_k^*\bm{w}|^2|\bm{a}_k^*\bm{z}|^2+\frac{1}{4}\left(\overline{(\bm{a}_k^*\bm{z})^2}(\bm{a}_k^*\bm{w})^2+\overline{(\bm{a}_k^*\bm{w})^2}(\bm{a}_k^*\bm{z})^2\right).
\]
Without loss of generality, we assume $\|\bm{z}\|_2=1$ and $\|\bm{w}\|_2=1$. Due to the unitary invariance of the distribution of $\bm{a}_k$, we can set $\bm{z}=\bm{e}_1$ and $\bm{w}=s_1e^{\imath\phi_1}\bm{e}_1+s_2e^{\imath\phi_2}\bm{e}_2$, where $s_1, s_2$ are positive real numbers such that $s_1^2+s_2^2=1$. Let $\bm{a}_k=[g_1, \ldots, g_n]^{\top}$, where $g_i \sim_{i.i.d} \mathcal{N}(0,1/2)+\imath\mathcal{N}(0,1/2)$ for $i=1, \ldots, n$. Then, we have
\[
\bm{a}_k^*\bm{z}=\overline{g_1},\quad\textrm{and}\quad
\bm{a}_k^*\bm{w}=s_1e^{\imath\phi_1}\overline{g_1}+s_2e^{\imath\phi_2}\overline{g_2},   
\]
where $\mathbb{E}[g_i]=0,~\mathbb{E}[|g_i|^2]=1,~\mathbb{E}[g_i^2]=0$ for $i=1,\ldots, n$.

For the term $\mathbb{E}\Big[\frac{1}{2}|\bm{a}_k^*\bm{w}|^2|\bm{a}_k^*\bm{z}|^2\cdot \mathbb{1}_{\left\{\left|\bm{a}_k^{*} \bm{z}\right| \leq \tau_1\|\bm{z}\|_2\right\}}\Big]$, we observe that 
\[\begin{split}
|\bm{a}_k^*\bm{w}|^2&=(s_1e^{\imath\phi_1}\overline{g_1}+s_2e^{\imath\phi_2}\overline{g_2})(s_1e^{-\imath\phi_1}g_1+s_2e^{-\imath\phi_2}g_2)\\
&=s_1^2|g_1|^2+s_2^2|g_2|^2+s_1s_2e^{\imath(\phi_2-\phi_1)}g_1\overline{g_2}+s_1s_2e^{\imath(\phi_1-\phi_2)}\overline{g_1}g_2.
\end{split}\]
Thus, we obtain 
\begin{equation}\label{concern1}
\begin{split}
&\mathbb{E}\Big[\frac{1}{2}|\bm{a}_k^*\bm{w}|^2|\bm{a}_k^*\bm{z}|^2\cdot \mathbb{1}_{\left\{\left|\bm{a}_k^{*} \bm{z}\right| \leq \tau_1\|\bm{z}\|_2\right\}}\Big]\\
&=\frac{1}{2}\mathbb{E}_{g_1}\mathbb{E}_{g_2}\Big[|g_1|^2\left(s_1^2|g_1|^2+s_2^2|g_2|^2+s_1s_2e^{\imath(\phi_2-\phi_1)}g_1\overline{g_2}+s_1s_2e^{\imath(\phi_1-\phi_2)}\overline{g_1}g_2\right)\cdot\mathbb{1}_{|g_1|\leq\tau_1}\Big]\\
&=\frac{1}{2}\mathbb{E}_{g_1}\Big[(s_1^2|g_1|^4+s_2^2|g_1|^2)\cdot\mathbb{1}_{|g_1|\leq\tau_1}\Big]\\
&=\frac{1}{2}s_1^2(\hat{\beta_1}+\hat{\beta_2})+\frac{1}{2}s_2^2\hat{\beta_2}.
\end{split}
\end{equation}

Next, for the term $\mathbb{E}\Big[\frac{1}{4}\left(\overline{(\bm{a}_k^*\bm{z})^2}(\bm{a}_k^*\bm{w})^2+\overline{(\bm{a}_k^*\bm{w})^2}(\bm{a}_k^*\bm{z})^2\right)\cdot \mathbb{1}_{\left\{\left|\bm{a}_k^{*} \bm{z}\right| \leq \tau_1\|\bm{z}\|_2\right\}}\Big]$, we note that
\[\begin{split}
\overline{(\bm{a}_k^*\bm{z})^2}(\bm{a}_k^*\bm{w})^2&=g_1^2(s_1e^{\imath\phi_1}\overline{g_1}+s_2e^{\imath\phi_2}\overline{g_2})^2\\
&=g_1^2(s_1^2e^{2\imath\phi_1}\overline{g_1}^2+s_2^2e^{2\imath\phi_2}\overline{g_2}^2+2s_1s_2e^{\imath(\phi_1+\phi_2)}\overline{g_1}\overline{g_2})\\
&=s_1^2e^{2\imath\phi_1}|g_1|^4+s_2^2e^{2\imath\phi_2}g_1^2\overline{g_2}^2+2s_1s_2e^{\imath(\phi_1+\phi_2)}|g_1|^2g_1\overline{g_2}.
\end{split}\]
Therefore, we get 
\begin{equation}\label{concern2}
\begin{split}
&\mathbb{E}\Big[\frac{1}{4}\left(\overline{(\bm{a}_k^*\bm{z})^2}(\bm{a}_k^*\bm{w})^2+\overline{(\bm{a}_k^*\bm{w})^2}(\bm{a}_k^*\bm{z})^2\right)\cdot \mathbb{1}_{\left\{\left|\bm{a}_k^{*} \bm{z}\right| \leq \tau_1\|\bm{z}\|_2\right\}}\Big]\\
&=\frac{1}{4}\mathbb{E}_{g_1}\mathbb{E}_{g_2}\Big[\left(s_1^2\cdot 2\cos(2\phi_1)|g_1|^4+s_2^2e^{2\imath\phi_2}g_1^2\overline{g_2}^2+s_2^2e^{-2\imath\phi_2}\overline{g_1}^2g_2^2\right)\mathbb{1}_{|g_1|\leq\tau_1}\Big]\\
&+\frac{1}{4}\mathbb{E}_{g_1}\mathbb{E}_{g_2}\Big[\left(2s_1s_2e^{\imath(\phi_1+\phi_2)}|g_1|^2g_1\overline{g_2}+2s_1s_2e^{-\imath(\phi_1+\phi_2)}|g_1|^2\overline{g_1}g_2\right)\mathbb{1}_{|g_1|\leq\tau_1}\Big]\\
&=\frac{1}{4}\mathbb{E}_{g_1}\bigg[s_1^2\cdot 2\cos(2\phi_1)|g_1|^4\cdot\mathbb{1}_{|g_1|\leq\tau_1}\bigg]\\
&=\frac{1}{2}s_1^2\cos(2\phi_1)(\hat{\beta_1}+\hat{\beta_2}).
\end{split}
\end{equation}
Combining equalities \eqref{concern1} and \eqref{concern2} implies that 
\begin{equation}
\begin{split}
&\mathbb{E}\Big[\left(\operatorname{Re}\left(\bm{w}^* \bm{a}_k \bm{a}_k^* \bm{z}\right)\right)^2 \cdot \mathbb{1}_{\left\{\left|\bm{a}_k^{*} \bm{z}\right| \leq \tau_1\|\bm{z}\|_2\right\}}\Big]\\
&=\frac{1}{2}s_1^2(\hat{\beta_1}+\hat{\beta_2})+\frac{1}{2}s_2^2\hat{\beta_2}+\frac{1}{2}s_1^2\cos(2\phi_1)(\hat{\beta_1}+\hat{\beta_2})\\
&=s_1^2(\hat{\beta_1}+\hat{\beta_2})\cos^2\phi_1+\frac{1}{2}s_2^2\hat{\beta_2}\\
&=s_1^2(\hat{\beta_1}+\hat{\beta_2})\cos^2\phi_1+\frac{1}{2}\hat{\beta_2}(1-s_1^2)\\
&=\frac{1}{2}\hat{\beta_2}+s_1^2(\hat{\beta_1}+\hat{\beta_2})\cos^2\phi_1-\frac{1}{2}s_1^2\hat{\beta_2}(\cos\phi_1^2+\sin\phi_1^2)\\
&=\frac{1}{2}\hat{\beta_2}+(\hat{\beta_1}+\frac{1}{2}\hat{\beta_2})(s_1\cos\phi_1)^2-\frac{1}{2}\hat{\beta_2}(s_1\sin\phi_1)^2\\
&=\frac{1}{2}\hat{\beta_2}+(\hat{\beta_1}+\frac{1}{2}\hat{\beta_2})(\operatorname{Re}(\bm{z}^*\bm{w}))^2 - \frac{1}{2}\hat{\beta_2}(\operatorname{Im}(\bm{z}^*\bm{w}))^2,
\end{split}
\end{equation}
where the last equality follows from the facts that 
$\operatorname{Re}(\bm{z}^*\bm{w})=s_1\cos\phi_1$ and $\operatorname{Im}(\bm{z}^*\bm{w})=s_1\sin\phi_1.$

The proofs for the expectations $\mathbb{E}\left[\left|\bm{a}_k^{*} \bm{z}\right|^2 \bm{a}_k \bm{a}_k^{*} \mathbb{1}_{\left\{\left|\bm{a}_k^{*} \bm{z}\right| \leq \tau_1\|\bm{z}\|_2\right\}}\right] $ and $\mathbb{E}\left[\left(\bm{a}_k^*\bm{z}\right)^2 \bm{a}_k \bm{a}_k^{\top}\mathbb{1}_{\left\{\left|\bm{a}_k^*\bm{z}\right| \leq \tau_1\|\bm{z}\|_2\right\}}\right]$ follow a similar derivation and are thus omitted for brevity.
\end{proof}
Next, in the following lemma, we estimate the concentration of $ \frac{1}{m}\sum_{k=1}^{m}\left|\bm{a}_k^* \bm{z}\right|^2 \bm{a}_k \bm{a}_k^*\cdot \mathbb{1}_{\left\{\left|\bm{a}_k^{*} \bm{z}\right| \leq \tau_1\|\bm{z}\|_2\right\}} $. This lemma is the complex counterpart to [\cite{RefWorks:RefID:34-cai2018solving}, Lemma 5.4]; as the proof technique is similar, we present the result without proof.

\begin{lemma}\label{concentration:1}
Fix $\tau_1 >0$ and let $\epsilon \in(0,1)$ be a sufficiently small constant. There exist absolute constants $C_3,C_4>0$ such that: if $m \geq C_3\epsilon^{-2} \log \epsilon^{-1} \cdot n$, then with probability at least $1-e^{-C_4m \epsilon^2}$,

$$
\begin{aligned}
& \left\| \frac{1}{m}\sum_{k=1}^m\left|\bm{a}_k^{*} \bm{z}\right|^2 \bm{a}_k \bm{a}_k^{*} \cdot\mathbb{1}_{\left\{\left|\bm{a}_k^{*} \bm{z}\right| \leq \tau_1\|\bm{z}\|_2\right\}}-\left(\hat{\beta_1} \bm{z} \bm{z}^{*}+\hat{\beta_2}\|\bm{z}\|_2^2 \bm{I}_n\right)\right\| \leq\rho_1\|\bm{z}\|_2^2
\end{aligned}
$$
holds for all $\bm{z} \in \mathbb{C}^n$.   
\end{lemma}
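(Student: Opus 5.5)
By homogeneity in $\bm{z}$ (the indicator $\mathbb{1}_{\{|\bm{a}_k^*\bm{z}|\le\tau_1\|\bm{z}\|_2\}}$ is scale invariant), it suffices to prove the bound for $\bm{z}$ on the unit sphere. For fixed $\bm{z}$, the statement is equivalent to bounding
\[
\sup_{\|\bm{w}\|_2=1}\Big|\frac1m\sum_{k=1}^m |\bm{a}_k^*\bm{z}|^2|\bm{a}_k^*\bm{w}|^2\mathbb{1}_{\{|\bm{a}_k^*\bm{z}|\le\tau_1\}}-\bm{w}^*\big(\hat{\beta_1}\bm{z}\bm{z}^*+\hat{\beta_2}\bm{I}_n\big)\bm{w}\Big|,
\]
where the centering uses Lemma~\ref{gaussian:expectations}. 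The plan follows the three-term structure already visible in $\rho_1$: a tail term $10\tau_1^3e^{-0.49\tau_1^2}$, a smoothing term $\frac{10\tau_1^2}{\epsilon}e^{-0.49\epsilon^{-2}}$, and a net/concentration term $6\tau_1^4\epsilon$.

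\emph{Step 1 (fixed $\bm{z}$).} For fixed unit $\bm{z}$, each summand $|\bm{a}_k^*\bm{z}|^2\mathbb{1}_{\{\cdot\}}\,\bm{a}_k\bm{a}_k^*$ equals a bounded weight $\le\tau_1^2$ times $\bm{a}_k\bm{a}_k^*$. I would decompose $\bm{a}_k=(\bm{z}^*\bm{a}_k)\bm{z}+\bm{a}_k^\perp$, with $\bm{a}_k^\perp$ independent of $\bm{z}^*\bm{a}_k$. The $\bm{a}_k^\perp$ block is a weighted sub-Gaussian covariance; its weights are independent of $\bm{a}_k^\perp$ and bounded by $\tau_1^2$. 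The standard covariance estimate (a net over $\bm{w}$ plus Bernstein for sub-exponential variables) gives deviation $\lesssim\tau_1^2\epsilon$ with probability $1-e^{-cm\epsilon^2}$ once $m\gtrsim\epsilon^{-2}n$. The cross and $\bm{z}\bm{z}^*$ blocks involve bounded scalars $|\bm{z}^*\bm{a}_k|\le\tau_1$ and are handled the same way, with deviation $\lesssim\tau_1^4\epsilon$.

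\emph{Step 2 (uniformity in $\bm{z}$).} This is the main obstacle, because the indicator is discontinuous in $\bm{z}$, so a direct net argument fails. I would sandwich the indicator between Lipschitz surrogates: $\chi_-(t)\le\mathbb{1}_{\{t\le\tau_1\}}\le\chi_+(t)$, where $\chi_\pm$ are piecewise linear and transition over a width of order $\epsilon$ (or $1/\epsilon$ on the scaled variable, which is where the $e^{-0.49\epsilon^{-2}}$ term enters). Since the summand is positive semidefinite, the empirical matrix is monotone in the weight. It therefore suffices to prove upper and lower one-sided bounds for the smoothed sums uniformly over an $\epsilon$-net of the sphere, which costs $e^{Cn\log(1/\epsilon)}$ net points and explains the $\log\epsilon^{-1}$ in the sample requirement.

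To pass from the net to all $\bm{z}$, I would use the Lipschitz continuity of $t\mapsto t^2\chi_\pm(t)$, with constant $\lesssim\tau_1^2/\epsilon$, together with the bound $\|\frac1m\sum_k\bm{a}_k\bm{a}_k^*\|\le2$ and $\max_k\|\bm{a}_k\|$ controls. Finally, the gap between the expectations of the smoothed and true truncations is bounded by Gaussian tail integrals of $|\xi|^2$ near the threshold $\tau_1$. These yield the $\tau_1^3e^{-0.49\tau_1^2}$ and $\epsilon^{-1}\tau_1^2e^{-0.49\epsilon^{-2}}$ contributions.

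\emph{Step 3 (assembly).} Combining the one-sided bounds with the expectation gaps gives a two-sided operator-norm bound of $\rho_1$. The probability $1-e^{-C_4m\epsilon^2}$ follows from a union bound over the net, valid when $m\ge C_3\epsilon^{-2}\log\epsilon^{-1}\,n$. The argument mirrors Lemma~5.4 of \cite{RefWorks:RefID:34-cai2018solving}. The only changes for the complex case are the complex Gaussian moments ($\mathbb{E}g^2=0$), which produce $\hat{\beta_1},\hat{\beta_2}$, and using complex nets, which double the real dimension.
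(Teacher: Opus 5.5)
Your fixed-$\bm{z}$ step and the PSD sandwich $S_-(\bm{z})\preceq S(\bm{z})\preceq S_+(\bm{z})$ with Lipschitz surrogates are sound. The gap is in the passage from the net to all $\bm{z}$, which does not give the stated sample complexity. The perturbation bound your tools give is $\|S_\pm(\bm{z})-S_\pm(\bm{z}_0)\|\le 2L\max_k|\bm{a}_k^*(\bm{z}-\bm{z}_0)|\le 2L\max_k\|\bm{a}_k\|_2\,\|\bm{z}-\bm{z}_0\|_2$, with $L\sim\tau_1^2/\epsilon$ and $\max_k\|\bm{a}_k\|_2\approx\sqrt n$. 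The bound on $\max_k|\bm{a}_k^*(\bm{z}-\bm{z}_0)|$ cannot be improved uniformly over the ball, because $\bm{z}$ may be moved toward the direction of some $\bm{a}_j$. On the $\epsilon$-net you propose, this perturbation is therefore of order $\tau_1^2\sqrt n$. Making it $O(\epsilon)$ forces resolution $\eta\sim\epsilon^2/(\tau_1^2\sqrt n)$, hence $e^{Cn\log(n/\epsilon)}$ net points. The union bound then requires $m\gtrsim\epsilon^{-2}n\log(n/\epsilon)$ instead of $m\ge C_3\epsilon^{-2}\log\epsilon^{-1}\cdot n$. (In addition, $\max_k\|\bm{a}_k\|_2\lesssim\sqrt n$ fails with probability of order $me^{-cn}$, which is not of the form $e^{-C_4m\epsilon^2}$ once $m\epsilon^2\gg n$.) So the argument as written loses exactly the $\log n$ factor that this lemma, and the $O(n)$ guarantee built on it, is meant to avoid.

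The missing ingredient is the uniform tail-spectral bound of Lemma~\ref{spectral}. The paper gives no separate proof of this lemma and defers to Lemma~5.4 of \cite{RefWorks:RefID:34-cai2018solving}; its proof of Lemma~\ref{concentration:2} shows the mechanism. Take an $\epsilon^2$-net and, for $\|\bm{z}-\bm{z}_0\|_2\le\epsilon^2$, split the indices into three groups.
\begin{enumerate}
\item Both truncation indicators are on and $|\bm{a}_k^*(\bm{z}-\bm{z}_0)|\le\epsilon$. The weight difference is at most $2\tau_1\epsilon$, so $\|\frac1m\sum_k\bm{a}_k\bm{a}_k^*\|\le 2$ suffices.
\item Both indicators are on but $|\bm{a}_k^*(\bm{z}-\bm{z}_0)|>\epsilon\ge\epsilon^{-1}\|\bm{z}-\bm{z}_0\|_2$. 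The weight differences are only bounded by $2\tau_1^2$, but Lemma~\ref{spectral} with $\gamma=1/\epsilon$ bounds the spectral mass of these indices by $5\epsilon^{-1}e^{-0.49\epsilon^{-2}}+\epsilon$, uniformly in the direction $\bm{z}-\bm{z}_0$.
\item The indicators disagree at $\bm{z}$ and $\bm{z}_0$. Lemma~\ref{spectral} with $\gamma=\tau_1$ gives $5\tau_1^3e^{-0.49\tau_1^2}+\tau_1^2\epsilon$.
\end{enumerate}
These are the actual sources of the first two terms of $\rho_1$, so your accounting is off. A width-$\epsilon$ smoothing at $\tau_1$ only creates an $O(\epsilon)$ expectation gap, and nothing at scale $1/\epsilon$. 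With this split no smoothing is needed at all. If you keep your surrogates, you still must replace the $\max_k\|\bm{a}_k\|_2$ step by the same split, using a finer net (resolution about $\epsilon^3$) to absorb $L\sim\tau_1^2/\epsilon$; the net size then stays $e^{Cn\log(1/\epsilon)}$.
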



Next, we aim to estimate the concentration of the term $\frac{1}{m}\sum_{k=1}^{m}\Big[\left(\bm{a}_k^*\bm{z}\right)^2 \bm{a}_k \bm{a}_k^{T}\mathbb{1}_{\left\{\left|\bm{a}_k^*\bm{z}\right| \leq \tau_1\|\bm{z}\|_2\right\}}\Big]$ uniformly for all $\bm{z}\in\mathbb{C}^{n}$. We begin by showing that for a fixed $\bm{z}\in\mathbb{C}^{n}$, this matrix is concentrated around its expectation.

\begin{lemma}\label{lem: one point concentration of antidiag matirx at z}
Fix $\tau_1 > 0$ and let $\epsilon \in(0,1)$ be a sufficiently small constant. There exist absolute constants $C_5,C_6>0$ such that: if $m \geq C_5\epsilon^{-2} n$, then, for any $\bm{z}\in\mathbb{C}^{n}$, with probability at least $1-e^{-C_6m \epsilon^2}$,
\begin{equation}\label{eq: one point concentration of antidiag matrix at z}
 \left\|\frac{1}{m} \sum_{k=1}^m\left(\bm{a}_k^*\bm{z}\right)^2 \bm{a}_k \bm{a}_k^{\top}\cdot\mathbb{1}_{\left\{\left|\bm{a}_k^*\bm{z}\right| \leq \tau_1\|\bm{z}\|_2\right\}}-\left(\hat{\beta_1}+\hat{\beta_2}\right) \bm{z} \bm{z}^{\top}\right\| \leq\left(\tau_1^4+\tau_1^3+\tau_1^2\right)\epsilon\|\bm{z}\|_2^2. 
\end{equation} 
\end{lemma}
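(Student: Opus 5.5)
By homogeneity (both sides scale as $\|\bm{z}\|_2^2$), we may assume $\|\bm{z}\|_2=1$. By unitary invariance of the Gaussian model \eqref{eq:Gaussian} we may further take $\bm{z}=\bm{e}_1$. Indeed, if $\bm{U}$ is unitary with $\bm{U}\bm{z}=\bm{e}_1$, then $\bm{b}_k=\bm{U}\bm{a}_k$ has the same law. Moreover $\bm{a}_k\bm{a}_k^{\top}=\bm{U}^*\bm{b}_k\bm{b}_k^{\top}\overline{\bm{U}}$ and $\bm{z}\bm{z}^{\top}=\bm{U}^*\bm{e}_1\bm{e}_1^{\top}\overline{\bm{U}}$, and these transformations preserve the operator norm.

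Set $\bm{M}_k:=(\bm{a}_k^*\bm{z})^2\bm{a}_k\bm{a}_k^{\top}\mathbb{1}_{\{|\bm{a}_k^*\bm{z}|\le\tau_1\}}$. This is a complex symmetric (not Hermitian) matrix. Its operator norm equals $\sup_{\bm{v},\bm{w}\in S^{n-1}_{\mathbb{C}}}|\bm{v}^*\bm{M}\bm{w}|$, and by polarization it is controlled by $\sup_{\bm{v}}|\bm{v}^{\top}\bm{M}\bm{v}|$ (Takagi factorization), up to a constant absorbable into the stated coefficient.

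For fixed unit $\bm{v}$, the scalar $\bm{v}^{\top}\bm{M}_k\bm{v}=(\bm{a}_k^*\bm{z})^2(\bm{a}_k^{\top}\bm{v})^2\mathbb{1}$ factors, in the rotated coordinates, as a bounded factor $\overline{g_1}^2\mathbb{1}_{|g_1|\le\tau_1}$ (modulus at most $\tau_1^2$) times $(\bm{a}_k^{\top}\bm{v})^2$. The latter is the square of a complex Gaussian and hence sub-exponential with $\psi_1$-norm $O(1)$. To get the sharper dependence, decompose $\bm{v}=v_1\bm{e}_1+\bm{v}_\perp$, so that
\[
(\bm{a}_k^{\top}\bm{v})^2 = v_1^2g_1^2+2v_1g_1(\bm{a}_k^{\top}\bm{v}_\perp)+(\bm{a}_k^{\top}\bm{v}_\perp)^2 .
\]
The first piece gives a bounded variable $|g_1|^4\mathbb{1}\le\tau_1^4$. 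The cross term is $\tau_1^3$ times a sub-Gaussian variable independent of $g_1$. The last term is $\tau_1^2$ times a sub-exponential variable independent of $g_1$. This decomposition explains the coefficient $\tau_1^4+\tau_1^3+\tau_1^2$.

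Each of the three centered sums is handled by Hoeffding (bounded), sub-Gaussian Hoeffding, or Bernstein (sub-exponential). For $\epsilon$ small, each deviates by more than $c\,\epsilon$ times its respective coefficient with probability at most $2e^{-cm\epsilon^2}$.

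The expectations are those computed in Lemma \ref{gaussian:expectations}. The cross and perpendicular terms have mean zero, since $\mathbb{E}[g_2^2]=0$ and independence from $g_1$ kills them. The $v_1^2$ term has mean $(\hat\beta_1+\hat\beta_2)v_1^2$, which matches $\bm{v}^{\top}(\hat\beta_1+\hat\beta_2)\bm{z}\bm{z}^{\top}\bm{v}$.

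To pass to all $\bm{v}$, take a $1/4$-net $\mathcal{N}$ of the complex unit sphere, viewed as $S^{2n-1}\subset\mathbb{R}^{2n}$, with $|\mathcal{N}|\le 9^{2n}$. Union bounding gives failure probability at most $9^{2n}\cdot 6e^{-cm\epsilon^2}\le e^{-C_6m\epsilon^2}$ once $m\ge C_5\epsilon^{-2}n$. The standard net argument for bilinear forms, $\|\bm{D}\|\le (1-2\cdot\tfrac14-\tfrac1{16})^{-1}\sup_{\mathcal{N}}|\bm{v}^{\top}\bm{D}\bm{v}|$ applied to the symmetric deviation $\bm{D}$, then extends the bound with a constant factor absorbed by rescaling $\epsilon$.

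The main obstacle is ensuring the quadratic form $\bm{v}^{\top}\bm{D}\bm{v}$ controls $\|\bm{D}\|$ for a complex symmetric $\bm{D}$, together with the net step. I would handle this via Takagi's factorization $\bm{D}=\bm{Q}\bm{\Sigma}\bm{Q}^{\top}$, choosing $\bm{v}=\overline{\bm{Q}\bm{e}_1}$, which gives exactly $\|\bm{D}\|=\max_{\bm{v}}|\bm{v}^{\top}\bm{D}\bm{v}|$. For the net step, if the symmetric-form net argument is awkward, I would fall back to a two-vector net on $|\bm{v}^{\top}\bm{D}\bm{w}|$ at the cost of $9^{4n}$ cardinality, which is still fine.
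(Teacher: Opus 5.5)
Your proposal is correct and follows essentially the paper's argument: split along and orthogonal to $\bm{z}$ into a bounded piece (Hoeffding), bounded-times-sub-Gaussian cross pieces (Hoeffding), and a bounded-times-sub-exponential piece (Bernstein), check the means against Lemma~\ref{gaussian:expectations}, and union-bound over a $\frac14$-net. The differences are cosmetic. You rotate to $\bm{z}=\bm{e}_1$ and use Takagi's factorization to reduce $\|\bm{D}\|$ to the one-vector form $\sup_{\bm{v}}|\bm{v}^{\top}\bm{D}\bm{v}|$, which gives three terms and net constant $16/7$. The paper instead bounds the bilinear form $\bm{u}^*\bm{M}\bm{v}$ over pairs of net points, with $\bm{u}$ split along $\bm{z}$ and $\bm{v}$ along $\bar{\bm{z}}$, which gives four terms and net constant $2$.
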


\begin{proof}
Without loss of generality, we assume that $\|\bm{z}\|_2=1$. Let us define the matrix $\bm{M}$ as:
$$\bm{M}:=\frac{1}{m} \sum_{k=1}^m\left(\bm{a}_k^*\bm{z}\right)^2 \bm{a}_k \bm{a}_k^{\top}\cdot\mathbb{1}_{\left\{\left|\bm{a}_k^*\bm{z}\right| \leq \tau_1\|\bm{z}\|_2\right\}}-\left(\hat{\beta_1}+\hat{\beta_2}\right) \bm{z} \bm{z}^{\top}.$$
We employ a standard covering argument to estimate $\|\bm{M}\|$. There exist vectors $\tilde{\bm{u}},\tilde{\bm{v}}\in\mathbb{S}^{n-1}$ such that $\|\bm{M}\|=|\tilde{\bm{u}}^*\bm{M}\tilde{\bm{v}}|$. Let $\mathcal{N}_{\frac14}$ be a $\frac14$-net for the unit sphere $\mathbb{S}^{n-1}$. Then, there exist $\bm{u}_0,\bm{v}_0\in\mathcal{N}_{\frac14}$ such that $\|\tilde{\bm{u}}-\bm{u}_0\|\leq \frac14$ and $\|\tilde{\bm{v}}-\bm{v}_0\|\leq \frac14$. Applying the triangle inequality and the definition of operator norm, we have:
\[\begin{split}
\|\bm{M}\|&=|\tilde{\bm{u}}^*\bm{M}\tilde{\bm{v}}|
\leq |\bm{u}_0^*\bm{M}\bm{v}_0|+|(\tilde{\bm{u}}-\bm{u}_0)^*\bm{M}\bm{v}_0|+|\tilde{\bm{u}}^*\bm{M}(\tilde{\bm{v}}-\bm{v}_0)|\\
&\leq |\bm{u}_0^*\bm{M}\bm{v}_0|+\|\bm{M}\|\|\tilde{\bm{u}}-\bm{u}_0\|\|\bm{v}_0\|+\|\bm{M}\|\|\tilde{\bm{v}}-\bm{v}_0\|\|\tilde{\bm{v}}\|\\
&\leq \mathop{\max}_{\bm{u},\bm{v}\in \mathcal{N}_{\frac14}}|\bm{u}^*\bm{M}\bm{v}|+\frac{2}{4}\|\bm{M}\|,
\end{split}\]
which implies that $\|\bm{M}\|\leq 2\cdot \max_{\bm{u},\bm{v}\in \mathcal{N}_{\frac14}}|\bm{u}^*\bm{M}\bm{v}|$. 
The remaining task is to estimate $\max_{\bm{u},\bm{v}\in \mathcal{N}_{\frac14}}|\bm{u}^*\bm{M}\bm{v}|$. 
For any chosen $\bm{u},\bm{v}\in \mathcal{N}_{\frac14}$, we use the following orthogonal decomposition:
$$\bm{u}=\bm{z}\bm{z}^*\bm{u}+\bm{z}_1=(\bm{z}^*\bm{u})\bm{z}+\bm{z}_1,\quad\bm{v}=\bar{\bm{z}}\bar{\bm{z}}^*\bm{v}+\bm{z}_2=(\bm{z}^{\top}\bm{v})\bar{\bm{z}}+\bm{z}_2$$ 
where $\bm{z}\perp\bm{z}_1$ and $\bar{\bm{z}}\perp\bm{z}_2$.
Then we can write:
\begin{align*}
|\bm{u}^*\bm{M}\bm{v}|=&\left|\frac{1}{m}\sum_{k=1}^m\left(\bm{a}_k^*\bm{z}\right)^2 \bm{u}^*\bm{a}_k \bm{a}_k^{\top}\bm{v}\cdot\mathbb{1}_{\left\{\left|\bm{a}_k^*\bm{z}\right| \leq\tau_1\|\bm{z}\|_2\right\}}-\left(\hat{\beta_1}+\hat{\beta_2}\right)\bm{u}^* \bm{z} \bm{z}^{\top}\bm{v}\right|\\
=&\left|\frac{1}{m}\sum_{k=1}^m\left(\bm{a}_k^*\bm{z}\right)^2 ((\bm{z}^*\bm{u})\bm{z}+\bm{z}_1)^*\bm{a}_k \bm{a}_k^{\top}((\bm{z}^{\top}\bm{v})\bar{\bm{z}}+\bm{z}_2)\cdot\mathbb{1}_{\left\{\left|\bm{a}_k^*\bm{z}\right| \leq \tau_1\|\bm{z}\|_2\right\}}-\left(\hat{\beta_1}+\hat{\beta_2}\right)\bm{u}^* \bm{z}\bm{z}^{\top}\bm{v}\right|\\
\leq&
\left|\frac{1}{m}\sum_{k=1}^m(\bm{u}^*\bm{z})(\bm{z}^{\top}\bm{v})|\bm{a}_k^*\bm{z}|^4\cdot\mathbb{1}_{\left\{\left|\bm{a}_k^*\bm{z}\right| \leq \tau_1\|\bm{z}\|_2\right\}}-\left(\hat{\beta_1}+\hat{\beta_2}\right)\bm{u}^* \bm{z}\bm{z}^{\top}\bm{v}\right|
\\
&\quad+
\left|\frac{1}{m}\sum_{k=1}^m(\bm{u}^*\bm{z})|\bm{a}_k^*\bm{z}|^2\bm{a}_k^*\bm{z}\bm{a}_k^{\top}\bm{z}_2\cdot\mathbb{1}_{\left\{\left|\bm{a}_k^*\bm{z}\right| \leq\tau_1\|\bm{z}\|_2\right\}}\right|\\
&\quad+
\left|\frac{1}{m}\sum_{k=1}^m(\bm{z}^{\top}\bm{v})|\bm{a}_k^*\bm{z}|^2\bm{a}_k^*\bm{z}\bm{z}_1^*\bm{a}_k\cdot\mathbb{1}_{\left\{\left|\bm{a}_k^*\bm{z}\right| \leq \tau_1\|\bm{z}\|_2\right\}}\right|\\
&\quad+
\left|\frac{1}{m}\sum_{k=1}^m\bm{z}_1^*\bm{a}_k\bm{a}_k^{\top}\bm{z}_2(\bm{a}_k^*\bm{z})^2
\cdot\mathbb{1}_{\left\{\left|\bm{a}_k^*\bm{z}\right| \leq\tau_1\|\bm{z}\|_2\right\}}\right|\\
:= &\mathbf{I}_1+\mathbf{I}_2+\mathbf{I}_3+\mathbf{I}_4.
\end{align*}

For $\mathbf{I}_1$, using the first expectation from Lemma \ref{gaussian:expectations}, we have
\begin{equation}\label{expectation:1}
\mathbb{E}\Big[(\bm{u}^*\bm{z})(\bm{z}^{\top}\bm{v})|\bm{a}_k^*\bm{z}|^4\cdot\mathbb{1}_{\left\{\left|\bm{a}_k^*\bm{z}\right| \leq \tau_1\|\bm{z}\|_2\right\}}\Big]=(\hat{\beta_1}+\hat{\beta_2})(\bm{u}^*\bm{z})(\bm{z}^{\top}\bm{v}).
\end{equation}
Since both $|\bm{u}^*\bm{z}|$ and $|\bm{z}^{\top}\bm{v}|$ are bounded by 1, we have $$\left|(\bm{u}^*\bm{z})(\bm{z}^{\top}\bm{v})|\bm{a}_k^*\bm{z}|^4\cdot\mathbb{1}_{\left\{\left|\bm{a}_k^*\bm{z}\right| \leq \tau_1\|\bm{z}\|_2\right\}}\right|\leq\tau_1^4.$$ Thus, by Hoeffding's inequality, we obtain
$$\mathbb{P}(\mathbf{I}_1\geq\textstyle{\frac{1}{2}}\tau_1^4\epsilon)\leq2\exp(-c_1'm\epsilon^2),$$
where $c_1'>0$ is an absolute constant.

For $\mathbf{I}_2$, given that $\bar{\bm{z}}\perp\bm{z}_2$, it implies that $\bm{a}_k^*\bm{z}$ and $\bm{a}_k^{\top}\bm{z}_2$ are independent. Therefore,
\begin{equation}\label{expectation:2}
\mathbb{E}\Big[(\bm{u}^*\bm{z})|\bm{a}_k^*\bm{z}|^2\bm{a}_k^*\bm{z}\bm{a}_k^{\top}\bm{z}_2\cdot\mathbb{1}_{\left\{\left|\bm{a}_k^*\bm{z}\right| \leq\tau_1\|\bm{z}\|_2\right\}}\Big]=0.
\end{equation}
Note that $|\bm{a}_k^*\bm{z}|^2\bm{a}_k^*\bm{z}\cdot\mathbb{1}_{\left\{\left|\bm{a}_k^*\bm{z}\right| \leq\tau_1\|\bm{z}\|_2\right\}}$ is bounded and $\bm{a}_k^{\top}\bm{z}_2$ follows a Gaussian distribution. Let $\|\cdot\|_{\psi_2}$ denote the sub-Gaussian norm. Then we have 
$$\left\|(\bm{u}^*\bm{z})|\bm{a}_k^*\bm{z}|^2\bm{a}_k^*\bm{z}\bm{a}_k^{\top}\bm{z}_2\cdot\mathbb{1}_{\left\{\left|\bm{a}_k^*\bm{z}\right| \leq\tau_1\|\bm{z}\|_2\right\}}\right\|_{\psi_2}\leq\tau_1^3\|\bm{a}_k^{\top}\bm{z}_2\|_{\psi_2}.$$
By Hoeffding's inequality, we have
$$\mathbb{P}(\mathbf{I}_2\geq\textstyle{\frac{1}{4}}\tau_1^3\epsilon)\leq2\exp(-c_2'm\epsilon^2),$$
where $c_2'>0$ is an absolute constant.

For $\mathbf{I}_3$, since $\bm{z}\perp\bm{z}_1$, similar to $\mathbf{I}_2$, we have
\begin{equation}\label{expectation:3}
\mathbb{E}\Big[(\bm{z}^{\top}\bm{v})|\bm{a}_k^*\bm{z}|^2\bm{a}_k^*\bm{z}\bm{z}_1^*\bm{a}_k\cdot\mathbb{1}_{\left\{\left|\bm{a}_k^*\bm{z}\right| \leq \tau_1\|\bm{z}\|_2\right\}}\Big]=0 ,   
\end{equation}
and
$$\mathbb{P}(\mathbf{I}_3\geq\textstyle{\frac{1}{4}}\tau_1^3\epsilon)\leq2\exp(-c_3'm\epsilon^2),$$
where $c_3'>0$ is an absolute constant.

For $\mathbf{I}_4$, by the second equality of Lemma \ref{gaussian:expectations}, we have 
\[
\mathbb{E}\Big[\left(\bm{a}_k^*\bm{z}\right)^2 \bm{u}^*\bm{a}_k \bm{a}_k^{\top}\bm{v}\cdot\mathbb{1}_{\left\{\left|\bm{a}_k^*\bm{z}\right| \leq\tau_1\|\bm{z}\|_2\right\}}\Big]=\left(\hat{\beta_1}+\hat{\beta_2}\right)\bm{u}^* \bm{z} \bm{z}^{\top}\bm{v}.
\]
This, combined with expectations \eqref{expectation:1}, \eqref{expectation:2}, and \eqref{expectation:3}, yields  
\begin{equation}\label{expectation:4}
\mathbb{E}[\bm{z}_1^*\bm{a}_k\bm{a}_k^{\top}\bm{z}_2(\bm{a}_k^*\bm{z})^2
\cdot\mathbb{1}_{\left\{\left|\bm{a}_k^*\bm{z}\right| \leq\tau_1\|\bm{z}\|_2\right\}}]=0.    
\end{equation}
Furthermore, $\bm{z}_1^*\bm{a}_k\bm{a}_k^{\top}\bm{z}_2$ follows a sub-exponential distribution, and $(\bm{a}_k^*\bm{z})^2\cdot\mathbb{1}_{\left\{\left|\bm{a}_k^*\bm{z}\right| \leq\tau_1\|\bm{z}\|_2\right\}}$ is bounded. Let $\|\cdot\|_{\psi_1}$ denote the sub-exponential norm. Then
$$\|\bm{z}_1^*\bm{a}_k\bm{a}_k^{\top}\bm{z}_2(\bm{a}_k^*\bm{z})^2\cdot\mathbb{1}_{\left\{\left|\bm{a}_k^*\bm{z}\right| \leq\tau_1\|\bm{z}\|_2\right\}}\|_{\psi_1}
\leq\tau_1^2\|\bm{z}_1^*\bm{a}_k\bm{a}_k^{\top}\bm{z}_2\|_{\psi_1}
\leq\tau_1^2\|\bm{z}_1^*\bm{a}_k\|_{\psi_2}\|\bm{a}_k^{\top}\bm{z}_2\|_{\psi_2}.$$
By Bernstein's inequality, for $\epsilon<1$, we have
$$\mathbb{P}(\mathbf{I}_4\geq\textstyle{\frac{1}{2}}\tau_1^2\epsilon)\leq2\exp(-c_4'm\epsilon^2),$$
where $c_4'>0$ is an absolute constant.

Combining the probabilities for the four terms yields
$\left|\bm{u}^*\bm{M}\bm{v}\right|\leq\frac{1}{2}(\tau_1^4+\tau_1^3+\tau_1^2)\epsilon$
with probability at least $1-e^{-2C_6m\epsilon^2}$ for some universal constant $C_6$.
By applying a union bound over $\bm{u}\in\mathcal{N}_{\frac14}$ and $\bm{v}\in\mathcal{N}_{\frac14}$, we obtain
$$\|\bm{M}\|\leq 2\cdot \max_{\bm{u},\bm{v}\in \mathcal{N}_{\frac14}}|\bm{u}^*\bm{M}\bm{v}|\leq(\tau_1^4+\tau_1^3+\tau_1^2)\epsilon,$$
which holds with probability at least
$
1-\big|\mathcal{N}_{\frac{1}{4}}\big|^{2}e^{-2C_6 m\epsilon^2}\geq1-e^{-2C_6 m\epsilon^2+2n\log 12} \geq 1- e^{-C_6 m\epsilon^2} ,
$
provided $m\ge C_5\epsilon^{-2}n$ for some positive constant $C_5$ satisfying $C_5C_6\geq2\log 12$.
\end{proof}

In the following, we extend Lemma \ref{lem: one point concentration of antidiag matirx at z} to hold uniformly for all $\bm{z}\in\mathbb{S}^{n-1}$. To achieve this, we first introduce two auxiliary lemmas.

\begin{lemma}\label{spectral}
Fix $\gamma \geq 1$ and let $\epsilon \in(0,1)$ be a sufficiently small constant. There exist absolute constants $C_7,C_8>0$ such that: if $m \geq C_7\epsilon^{-2} \log \epsilon^{-1} \cdot n$, then with probability at least $1-e^{-C_8m \epsilon^2}$, it holds that
$$
\left\|\frac{1}{m}\sum_{k=1}^m \bm{a}_k \bm{a}_k^{*} \cdot\mathbb{1}_{\left\{\left|\bm{a}_k^{*} \bm{z}\right|>\gamma\|\bm{z}\|_2\right\}}\right\| \leq 5 \gamma e^{-0.49 \gamma^2}+\epsilon,
\quad\forall \bm{z} \neq \bm{0}.
$$    
\end{lemma}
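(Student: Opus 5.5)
By homogeneity in $\bm z$ we may take $\|\bm z\|_2=1$. The matrix is PSD, so its operator norm is $\sup_{\bm v\in\mathbb{S}^{n-1}}\frac1m\sum_k|\bm a_k^*\bm v|^2\mathbb{1}_{\{|\bm a_k^*\bm z|>\gamma\}}$. The plan is to control this supremum uniformly over the pair $(\bm z,\bm v)$: first compute the expectation, then prove concentration for a fixed pair, and finally pass to all pairs by a covering argument. Smoothing the indicator will be what makes the covering step work.

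\emph{Step 1 (expectation).} By unitary invariance, put $\bm z=\bm e_1$. Then
\[
\mathbb{E}\bigl[\bm a\bm a^*\mathbb{1}_{\{|a_1|>\gamma\}}\bigr]=\mathbb{E}\bigl[|\xi|^2\mathbb{1}_{\{|\xi|>\gamma\}}\bigr]\bm e_1\bm e_1^*+\mathbb{P}(|\xi|>\gamma)(\bm I_n-\bm e_1\bm e_1^*).
\]
Since $|\xi|^2\sim\mathrm{Exp}(1)$, we have $\mathbb{P}(|\xi|>\gamma)=e^{-\gamma^2}$ and $\mathbb{E}[|\xi|^2\mathbb{1}]=(1+\gamma^2)e^{-\gamma^2}$. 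For $\gamma\ge1$ both quantities are below $4\gamma e^{-0.49\gamma^2}$, because $(1+\gamma^2)e^{-0.51\gamma^2}\le 2\gamma$. This leaves slack for the smoothing error in Step 2.

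\emph{Step 2 (smoothing and fixed-pair concentration).} The indicator is discontinuous in $\bm z$, so we replace it by a Lipschitz majorant. Let $\chi(s)$ equal $1$ for $s\ge\gamma$, equal $0$ for $s\le\gamma-\delta$, and be linear in between, with $\delta$ a small constant (possibly depending on $\gamma$). Then $\mathbb{1}_{\{|\bm a_k^*\bm z|>\gamma\}}\le\chi(|\bm a_k^*\bm z|)$. Repeating Step 1 with threshold $\gamma-\delta$, the expectation of $\frac1m\sum|\bm a_k^*\bm v|^2\chi(|\bm a_k^*\bm z|)$ is at most $(1+(\gamma-\delta)^2)e^{-(\gamma-\delta)^2}$, and choosing $\delta$ small keeps this below $5\gamma e^{-0.49\gamma^2}$. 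For a fixed pair $(\bm z,\bm v)$, each summand $|\bm a_k^*\bm v|^2\chi(\cdot)$ is sub-exponential with $O(1)$ norm. Bernstein's inequality then gives deviation at most $\epsilon/2$ with probability $1-2e^{-cm\epsilon^2}$.

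\emph{Step 3 (uniformity; the main obstacle).} Take $\epsilon'$-nets of $\mathbb{S}^{n-1}$ for both $\bm z$ and $\bm v$, with $\epsilon'\asymp\epsilon$ times a power of $\delta$. The union bound costs $(3/\epsilon')^{4n}$, which is absorbed when $m\ge C_7\epsilon^{-2}\log\epsilon^{-1}\,n$. This is the source of the $\log\epsilon^{-1}$ factor. Moving $\bm v$ within the net is controlled by $\|\frac1m\sum_k\bm a_k\bm a_k^*\|\le 2$, which holds with probability $1-e^{-cm}$; it costs $O(\epsilon')$. The delicate part is moving $\bm z$. Since $\chi$ is $1/\delta$-Lipschitz,
\[
\frac1m\sum_k|\bm a_k^*\bm v|^2\,\bigl|\chi(|\bm a_k^*\bm z|)-\chi(|\bm a_k^*\bm z_0|)\bigr|\le\frac1{\delta m}\sum_k|\bm a_k^*\bm v|^2|\bm a_k^*(\bm z-\bm z_0)|.
\]
The right side is bounded, via Cauchy--Schwarz and the uniform fourth-moment bound $\frac1m\sum_k|\bm a_k^*\bm v|^4\lesssim1$ (valid for $m\gtrsim n$ after truncating large $|\bm a_k^*\bm v|$, or simply by using $\max_k\|\bm a_k\|\lesssim\sqrt n$ with a finer net), by $O(\epsilon'/\delta)$. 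To avoid fourth moments entirely, I would instead use a cleaner route: sandwich the indicator between two Lipschitz functions, then, for $\bm z$ near the net point $\bm z_0$, bound $\mathbb{1}_{\{|\bm a_k^*\bm z|>\gamma\}}\le\mathbb{1}_{\{|\bm a_k^*\bm z_0|>\gamma-\delta\}}+\mathbb{1}_{\{|\bm a_k^*(\bm z-\bm z_0)|>\delta\}}$. The first term is handled at the net point with threshold $\gamma-\delta$. The second term is controlled uniformly because $\|\bm z-\bm z_0\|\le\epsilon'$ makes its expectation tiny, at most $e^{-\delta^2/\epsilon'^2}$, by the same net argument. Combining the two pieces gives the bound $5\gamma e^{-0.49\gamma^2}+\epsilon$ for all $\bm z\neq\bm 0$.
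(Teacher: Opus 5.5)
Your Steps 1--2 are sound. The expectation is computed correctly, and the complex tail $e^{-(\gamma-\delta)^2}$ leaves ample room, so you may take $\delta$ to be an absolute constant (e.g.\ $\delta=0.1$). That keeps $C_7,C_8$ independent of $\gamma$, which matters because the paper applies the lemma with $\gamma=1/\epsilon$. (The paper itself omits this proof and defers to the real-Gaussian Lemma~5.3 of \cite{RefWorks:RefID:34-cai2018solving}.)

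The gap is in Step~3, when you move $\bm{z}$ off the net, and neither of your two routes closes it.

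\emph{Route~1} needs $\sup_{\bm{v}}\frac1m\sum_k|\bm{a}_k^*\bm{v}|^4\lesssim1$, which is false when $m\asymp\epsilon^{-2}\log\epsilon^{-1}\,n$. The choice $\bm{v}=\bm{a}_j/\|\bm{a}_j\|_2$ already gives at least $\|\bm{a}_j\|_2^4/m\approx n^2/m$. The fallback via $\max_k\|\bm{a}_k\|_2\lesssim\sqrt n$ forces a mesh of order $\epsilon\delta/\sqrt n$. Its union bound costs $n\log(n/\epsilon)$, i.e.\ $m\gtrsim\epsilon^{-2}n\log n$, not the stated sample size.

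\emph{Route~2} is circular. The splitting $\mathbb{1}_{\{|\bm{a}_k^*\bm{z}|>\gamma\}}\le\mathbb{1}_{\{|\bm{a}_k^*\bm{z}_0|>\gamma-\delta\}}+\mathbb{1}_{\{|\bm{a}_k^*(\bm{z}-\bm{z}_0)|>\delta\}}$ is correct. But the second piece requires bounding $\sup_{\|\bm{h}\|_2\le\epsilon'}\|\frac1m\sum_k\bm{a}_k\bm{a}_k^*\mathbb{1}_{\{|\bm{a}_k^*\bm{h}|>\delta\}}\|$, which is exactly the lemma's own quantity at threshold $\delta/\epsilon'$. A tiny expectation for each fixed $\bm{h}$ gives nothing uniform. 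Running ``the same net argument'' over $\bm{h}$ just reproduces the same perturbation problem one level down. In Lemma~\ref{concentration:2} the paper controls this very term by invoking the present lemma with $\gamma=1/\epsilon$, which you cannot do inside its own proof.

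What is missing is a non-circular bound on that term; a counting argument supplies one. On the event of Lemma~\ref{lem8.1}, $\frac1m\sum_k|\bm{a}_k^*\bm{u}|^2\le2$ for every unit $\bm{u}$. Hence, simultaneously for all $\bm{h}$ with $\|\bm{h}\|_2\le\epsilon'$,
\[
\#\{k:\ |\bm{a}_k^*\bm{h}|>\delta\}\le\delta^{-2}\sum_k|\bm{a}_k^*\bm{h}|^2\le 2m\epsilon'^2/\delta^2=:\eta m .
\]
So the term is at most $\sup_{|S|\le\eta m}\|\frac1m\sum_{k\in S}\bm{a}_k\bm{a}_k^*\|$. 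To bound this supremum:
\begin{itemize}
\item For fixed $S$, $\|\sum_{k\in S}\bm{a}_k\bm{a}_k^*\|\le(\sqrt{|S|}+C\sqrt n+t)^2$ with probability $1-e^{-ct^2}$.
\item Take a union bound over the at most $e^{\eta m\log(e/\eta)}$ subsets, with $t^2\asymp\eta m\log(e/\eta)+m\epsilon^2$.
\item This yields $\sup_{|S|\le\eta m}\|\frac1m\sum_{k\in S}\bm{a}_k\bm{a}_k^*\|\lesssim\eta\log(e/\eta)+n/m+\epsilon^2$ with probability $1-e^{-cm\epsilon^2}$.
\end{itemize}
Taking $\epsilon'=\delta\epsilon$ makes this $O(\epsilon^2\log\epsilon^{-1})\le\epsilon/4$. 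Meanwhile the net over $\bm{z}_0$, where you apply fixed-point concentration at threshold $\gamma-\delta$, still costs only $O(n\log\epsilon^{-1})$ in the exponent. With this piece inserted, your Route~2 proves the lemma with the stated sample size and probability.
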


This lemma is the complex counterpart of [\cite{RefWorks:RefID:34-cai2018solving}, Lemma 5.3]. Its proof is similar, and thus is omitted.

\begin{lemma}\label{lem: bound of spectral norm of sum of a_ka_k^T}
Let $\{c_k\}_{k=1}^m\subset\mathbb{C}$, $\{\bm{a}_k\}_{k=1}^m\subset\mathbb{C}^n$ and $\{\sigma_{k}\}_{k=1}^m\in\{0,1\}$. Then, the following inequality holds:
\begin{equation}
\Big\| \frac{1}{m}\sum_{k=1}^m c_k\sigma_k\bm{a}_k\bm{a}_k^{\top}\Big\|
\leq
\max_{k}|c_k|\cdot\Big\|\frac{1}{m} \sum_{k=1}^m\sigma_k\bm{a}_k\bm{a}_k^*\Big\|.
\end{equation}
\end{lemma}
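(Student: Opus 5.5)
The idea is to bound the bilinear form $\bm{u}^*\bm{M}\bm{v}$ for unit vectors $\bm{u},\bm{v}$, where $\bm{M}:=\frac{1}{m}\sum_{k} c_k\sigma_k\bm{a}_k\bm{a}_k^{\top}$. Since the spectral norm of any complex matrix equals $\sup_{\|\bm{u}\|_2=\|\bm{v}\|_2=1}|\bm{u}^*\bm{M}\bm{v}|$, it suffices to bound this quantity uniformly. For fixed unit $\bm{u},\bm{v}\in\mathbb{C}^n$ we have
\[
\bm{u}^*\bm{M}\bm{v}=\frac{1}{m}\sum_{k=1}^m c_k\sigma_k(\bm{u}^*\bm{a}_k)(\bm{a}_k^{\top}\bm{v}),
\]
and the task is to compare each summand with quantities involving only $\bm{a}_k\bm{a}_k^*$.

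First, apply the triangle inequality and pull out $\max_k|c_k|$, using $\sigma_k\in\{0,1\}$ so that $\sigma_k\ge 0$:
\[
|\bm{u}^*\bm{M}\bm{v}|\le \max_k|c_k|\cdot\frac{1}{m}\sum_{k=1}^m\sigma_k|\bm{a}_k^*\bm{u}|\,|\bm{a}_k^{\top}\bm{v}|.
\]
Here $|\bm{u}^*\bm{a}_k|=|\bm{a}_k^*\bm{u}|$. Also $\bm{a}_k^{\top}\bm{v}=\overline{\bm{a}_k^*\bar{\bm{v}}}$, so $|\bm{a}_k^{\top}\bm{v}|=|\bm{a}_k^*\bar{\bm{v}}|$ with $\|\bar{\bm{v}}\|_2=1$.

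Next, apply Cauchy--Schwarz to the sum. Since $\sigma_k=\sigma_k^2$, we get
\[
\frac{1}{m}\sum_k\sigma_k|\bm{a}_k^*\bm{u}||\bm{a}_k^*\bar{\bm{v}}|\le\Big(\frac{1}{m}\sum_k\sigma_k|\bm{a}_k^*\bm{u}|^2\Big)^{1/2}\Big(\frac{1}{m}\sum_k\sigma_k|\bm{a}_k^*\bar{\bm{v}}|^2\Big)^{1/2}.
\]
Each factor is a quadratic form of the Hermitian positive semidefinite matrix $\bm{S}:=\frac{1}{m}\sum_k\sigma_k\bm{a}_k\bm{a}_k^*$. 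Indeed, $\frac{1}{m}\sum_k\sigma_k|\bm{a}_k^*\bm{u}|^2=\bm{u}^*\bm{S}\bm{u}\le\|\bm{S}\|$, and the same bound holds for $\bar{\bm{v}}$. Hence $|\bm{u}^*\bm{M}\bm{v}|\le\max_k|c_k|\,\|\bm{S}\|$, and taking the supremum over $\bm{u},\bm{v}$ gives the claim.

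The only delicate point is the transpose: $\bm{a}_k\bm{a}_k^{\top}$ is not Hermitian, so we cannot bound $\bm{M}$ through a Hermitian quadratic form directly. Converting $|\bm{a}_k^{\top}\bm{v}|$ into $|\bm{a}_k^*\bar{\bm{v}}|$ handles this, and no probabilistic input is needed since the inequality is deterministic.
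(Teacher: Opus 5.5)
Your proof is correct and follows essentially the same route as the paper: use the triangle inequality to extract $\max_k|c_k|$, apply Cauchy--Schwarz, and bound each factor by a quadratic form of $\frac{1}{m}\sum_k\sigma_k\bm{a}_k\bm{a}_k^*$. The differences are cosmetic. You carry $\sigma_k$ through directly, whereas the paper first treats $\sigma_k\equiv 1$ and then substitutes $\sigma_k\bm{a}_k$. You also make the conjugation explicit via $|\bm{a}_k^{\top}\bm{v}|=|\bm{a}_k^*\bar{\bm{v}}|$, which is what the paper's final step with $\overline{\bm{a}_k\bm{a}_k^*}$ uses implicitly.
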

\begin{proof} 
We first consider the case where $\sigma_k=1$ for all $k \in \{1, \ldots, m\}$. The spectral norm can be bounded as follows:
\begin{align*}
&\Big\| \frac{1}{m}\sum_{k=1}^m c_k\bm{a}_k\bm{a}_k^{\top}\Big\|
=\sup_{\bm{u},\bm{v}\in\mathbb{S}^{n-1}}
\Big\langle \frac{1}{m}\sum_{k=1}^m c_k\bm{a}_k\bm{a}_k^{\top}, \bm{uv}^*\Big\rangle
=
\sup_{\bm{u},\bm{v}\in\mathbb{S}^{n-1}}
\frac{1}{m}\sum_{k=1}^m c_k
\cdot \Big\langle\bm{a}_k\bm{a}_k^{\top},\bm{uv}^*\Big\rangle\\
&\leq
\frac{1}{m}\sup_{\bm{u},\bm{v}\in\mathbb{S}^{n-1}}
\max_{k}|c_k|\cdot \sum_{k=1}^m\Big|\Big\langle\bm{a}_k\bm{a}_k^{\top},\bm{uv}^*\Big\rangle\Big|\leq
\frac{1}{m}\sup_{\bm{u},\bm{v}\in\mathbb{S}^{n-1}}
\max_{k}|c_k|\cdot \sum_{k=1}^m|\bm{u}^*\bm{a}_k||\bm{a}_k^{\top}\bm{v}|\\
&\leq
\sup_{\bm{u},\bm{v}\in\mathbb{S}^{n-1}}
\max_{k}|c_k|\cdot
\sqrt{ \frac{1}{m}\sum_{k=1}^m|\bm{u}^*\bm{a}_k|^2}
\sqrt{ \frac{1}{m}\sum_{k=1}^m|\bm{a}_k^{\top}\bm{v}|^2}\\
&=
\sup_{\bm{u},\bm{v}\in\mathbb{S}^{n-1}}
\max_{k}|c_k|\cdot
\sqrt{ \frac{1}{m}\sum_{k=1}^m\langle\bm{a}_k\bm{a}_k^*,\bm{uu}^*\rangle}
\sqrt{ \frac{1}{m}\sum_{k=1}^m\langle\overline{\bm{a}_k\bm{a}_k^*},\bm{vv}^*\rangle}\\
&\leq
\frac{1}{m}\max_{k}|c_k|\Big\| \sum_{k=1}^m\bm{a}_k\bm{a}_k^*\Big\|.
\end{align*}
For the general case where $\sigma_k \in \{0,1\}$, since $\sigma_k^2=\sigma_k$, we can define effective vectors $\sigma_k\bm{a}_k$. Applying the result from the previous case (where all $\sigma_k=1$) to the terms $c_k(\sigma_k\bm{a}_k)(\sigma_k\bm{a}_k^{\top})$, we directly obtain:
$$\Big\| \frac{1}{m}\sum_{k=1}^m c_k\sigma_k\bm{a}_k\bm{a}_k^{\top}\Big\|=\Big\| \frac{1}{m}\sum_{k=1}^m c_k(\sigma_k\bm{a}_k)(\sigma_k\bm{a}_k^{\top})\Big\|
\leq
\max_{k}|c_k|\cdot\Big\|\frac{1}{m} \sum_{k=1}^m\sigma_k\bm{a}_k\bm{a}_k^*\Big\|.$$
\end{proof}



With the help of the above two lemmas, we can now extend Lemma \ref{lem: one point concentration of antidiag matirx at z} to hold uniformly for all $\bm{z}\in\mathbb{S}^{n-1}$.

\begin{lemma}\label{concentration:2}
Fix $\tau_1 \geq 2$ and let $\epsilon \in(0,1)$ be a sufficiently small constant. There exist universal constants $C_9,C_{10}>0$ such that: if $m \geq C_9\epsilon^{-2} \log \epsilon^{-1} \cdot n$, then with probability at least $1-e^{-C_{10}m \epsilon^2}$, it holds that
$$
\begin{aligned}
& \left\| \frac{1}{m}\sum_{k=1}^m\left(\bm{a}_k^*\bm{z}\right)^2 \bm{a}_k \bm{a}_k^{\top}\cdot\mathbb{1}_{\left\{\left|\bm{a}_k^*\bm{z}\right| \leq \tau_1\|\bm{z}\|_2\right\}}-\left(\hat{\beta_1}+\hat{\beta_2}\right) \bm{z} \bm{z}^{\top}\right\| 
\leq\rho_1\|\bm{z}\|_2^2, \quad \forall\bm{z} \in \mathbb{C}^n.
\end{aligned}
$$
\end{lemma}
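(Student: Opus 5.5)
The plan is a standard net argument over the unit sphere of $\bm{z}$, built on the pointwise bound of Lemma~\ref{lem: one point concentration of antidiag matirx at z}. The one difficulty is that the indicator $\mathbb{1}_{\{|\bm{a}_k^*\bm{z}|\le\tau_1\}}$ is discontinuous in $\bm{z}$. By homogeneity it suffices to take $\|\bm{z}\|_2=1$, since both the matrix and $(\hat\beta_1+\hat\beta_2)\bm{z}\bm{z}^\top$ scale as $\|\bm{z}\|_2^2$. Write
\[
\bm{M}(\bm{z}):=\frac1m\sum_{k=1}^m(\bm{a}_k^*\bm{z})^2\bm{a}_k\bm{a}_k^\top\mathbb{1}_{\{|\bm{a}_k^*\bm{z}|\le\tau_1\}}-(\hat\beta_1+\hat\beta_2)\bm{z}\bm{z}^\top .
\]

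\textbf{Step 1 (pointwise bound on a net).} Take an $\epsilon$-net $\mathcal{N}_\epsilon$ of the complex unit sphere, with $|\mathcal{N}_\epsilon|\le(3/\epsilon)^{2n}$. Apply Lemma~\ref{lem: one point concentration of antidiag matirx at z} at each $\bm{z}_0\in\mathcal{N}_\epsilon$ and take a union bound. Provided $m\ge C_9\epsilon^{-2}\log\epsilon^{-1}\,n$, this gives $\|\bm{M}(\bm{z}_0)\|\le(\tau_1^4+\tau_1^3+\tau_1^2)\epsilon$ for all net points simultaneously, with probability $1-e^{-cm\epsilon^2}$. The $\log\epsilon^{-1}$ factor in the sample size is exactly what absorbs the cardinality of the net. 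On the same event we also impose Lemma~\ref{spectral} (with $\gamma$ near $\tau_1$) and the standard bound $\|\frac1m\sum_k\bm{a}_k\bm{a}_k^*\|\le 1+\epsilon$.

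\textbf{Step 2 (passing from a net point to a nearby $\bm{z}$).} Fix $\bm{z}$ with $\|\bm{z}-\bm{z}_0\|\le\epsilon$. Split the difference $\bm{M}(\bm{z})-\bm{M}(\bm{z}_0)$ into three parts.
\begin{itemize}
\item The deterministic part $(\hat\beta_1+\hat\beta_2)(\bm{z}\bm{z}^\top-\bm{z}_0\bm{z}_0^\top)$ has norm at most $2(\hat\beta_1+\hat\beta_2)\epsilon\lesssim\tau_1^4\epsilon$.
\item The smooth part is the terms with the same indicator, with coefficients $c_k=\big((\bm{a}_k^*\bm{z})^2-(\bm{a}_k^*\bm{z}_0)^2\big)\mathbb{1}_{\{|\bm{a}_k^*\bm{z}|\le\tau_1\}}$. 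Cauchy–Schwarz alone would leave a factor $\max_k|\bm{a}_k^*(\bm{z}-\bm{z}_0)|$, which is unbounded, so add a second truncation $|\bm{a}_k^*(\bm{z}-\bm{z}_0)|\le\epsilon\tau_1$. On the retained indices $|c_k|\lesssim\tau_1^2\epsilon$, and Lemma~\ref{lem: bound of spectral norm of sum of a_ka_k^T} bounds their contribution by $O(\tau_1^2\epsilon)$. The excluded indices are exactly those with $|\bm{a}_k^*\bm{w}|>\tau_1\|\bm{w}\|$ for $\bm{w}=(\bm{z}-\bm{z}_0)/\|\bm{z}-\bm{z}_0\|$. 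There the coefficient is at most $\tau_1^2$ plus $|\bm{a}_k^*\bm{z}_0|^2\mathbb{1}$, so Lemma~\ref{spectral} and Lemma~\ref{lem: bound of spectral norm of sum of a_ka_k^T} bound that piece by $O(\tau_1^3e^{-0.49\tau_1^2})$.
\item The indicator mismatch is the terms where exactly one of $|\bm{a}_k^*\bm{z}|\le\tau_1$ and $|\bm{a}_k^*\bm{z}_0|\le\tau_1$ holds. This is the main obstacle. On the retained indices the two quantities differ by at most $\epsilon\tau_1$, so the mismatch forces $|\bm{a}_k^*\bm{z}_0|>\tau_1-\epsilon\tau_1\ge\tau_1/2$. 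Each coefficient is then at most $(\tau_1+\epsilon\tau_1)^2$. Lemma~\ref{lem: bound of spectral norm of sum of a_ka_k^T} followed by Lemma~\ref{spectral} with $\gamma\approx\tau_1(1-\epsilon)$ gives a bound $O(\tau_1^3e^{-0.49\tau_1^2})+O(\tau_1^2\epsilon)$. The indices already discarded by the second truncation are handled as in the previous item. The condition $\tau_1\ge2$ is what keeps $\gamma\ge1$, as Lemma~\ref{spectral} requires.
\end{itemize}

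\textbf{Step 3 (collecting terms).} Summing the contributions gives
\[
\|\bm{M}(\bm{z})\|\le C\big(\tau_1^3e^{-0.49\tau_1^2}+\tau_1^4\epsilon\big),
\]
possibly with an extra term $\frac{\tau_1^2}{\epsilon}e^{-0.49\epsilon^{-2}}$. That term appears if one truncates $|\bm{a}_k^*\bm{w}|$ at level $\epsilon^{-1}$ rather than $\tau_1$, which is how the second summand of $\rho_1$ arises. Choosing the truncation levels to match the definition of $\rho_1$ and tracking the constants then gives $\|\bm{M}(\bm{z})\|\le\rho_1$ for all unit $\bm{z}$. Homogeneity extends the bound to all $\bm{z}\in\mathbb{C}^n$.
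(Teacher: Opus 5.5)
Your overall architecture matches the paper's: a net over unit $\bm z$, Lemma~\ref{lem: one point concentration of antidiag matirx at z} with a union bound on the net, and a split of $\bm M(\bm z)-\bm M(\bm z_0)$ into a part where the two truncation indicators agree (further truncated on $|\bm a_k^*(\bm z-\bm z_0)|$) and a part where they disagree. All pieces are controlled by Lemma~\ref{lem: bound of spectral norm of sum of a_ka_k^T} and Lemma~\ref{spectral}. As written, however, one step fails.

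\textbf{The excluded-index coefficients are unbounded.} Your smooth coefficient $c_k$ carries only $\mathbb{1}_{\{|\bm a_k^*\bm z|\le\tau_1\}}$. So on the excluded indices you only know $|\bm a_k^*\bm z_0|\le\tau_1+|\bm a_k^*(\bm z-\bm z_0)|$, and that last quantity is not uniformly bounded there (a priori only by $\epsilon\|\bm a_k\|_2$).
\begin{itemize}
\item Your bound ``$\tau_1^2$ plus $|\bm a_k^*\bm z_0|^2\mathbb{1}$'' is therefore not an $O(\tau_1^2)$ bound on $\max_k|c_k|$. Lemma~\ref{lem: bound of spectral norm of sum of a_ka_k^T} does not give $O(\tau_1^3e^{-0.49\tau_1^2})$ for that piece.
\item The same defect hits the mismatch coefficient $(\bm a_k^*\bm z_0)^2$ in the case $|\bm a_k^*\bm z_0|>\tau_1$ on excluded indices, which you say are ``handled as in the previous item''.
\end{itemize}
The unbounded pieces cancel only if the two parts are recombined.

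\textbf{The fix.} Do as the paper does and put both indicators $\mathbb{1}_{\{|\bm a_k^*\bm z|\le\tau_1\}}\mathbb{1}_{\{|\bm a_k^*\bm z_0|\le\tau_1\}}$ on the smooth part. Its coefficient is then at most $2\tau_1$ times the truncation level on retained indices, and at most $2\tau_1^2$ on excluded ones. The mismatch is then not an obstacle at all. It consists of two sums, with coefficients $(\bm a_k^*\bm z)^2\mathbb{1}_{\{|\bm a_k^*\bm z|\le\tau_1\}}\mathbb{1}_{\{|\bm a_k^*\bm z_0|>\tau_1\}}$ and $(\bm a_k^*\bm z_0)^2\mathbb{1}_{\{|\bm a_k^*\bm z_0|\le\tau_1\}}\mathbb{1}_{\{|\bm a_k^*\bm z|>\tau_1\}}$. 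Each is bounded by $\tau_1^2$ and carries an exact tail indicator at level $\tau_1$ for $\bm z_0$, resp.\ $\bm z$. Since Lemma~\ref{spectral} is uniform over all nonzero vectors, it applies directly with $\gamma=\tau_1$ and gives $5\tau_1^3e^{-0.49\tau_1^2}+\tau_1^2\epsilon$ for each. No second truncation and no shifted level $\tau_1(1-\epsilon)$ are needed.

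\textbf{Your scales cannot give the stated $\rho_1$.} With an $\epsilon$-net and truncation at $\epsilon\tau_1$, the excluded indices alone cost $2\tau_1^2(5\tau_1e^{-0.49\tau_1^2}+\epsilon)$. That already uses all of the $10\tau_1^3e^{-0.49\tau_1^2}$ that $\rho_1$ allows. The mismatch then adds a further multiple of $\tau_1^3e^{-0.49\tau_1^2}$, which the $\epsilon$-terms of $\rho_1$ cannot absorb as $\epsilon\to0$. So these choices yield $C(\tau_1^3e^{-0.49\tau_1^2}+\tau_1^4\epsilon)$, not $\rho_1$.

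Your Step~3 remark points the right way; commit to it. Use an $\epsilon^2$-net and truncate at $|\bm a_k^*(\bm z-\bm z_0)|\le\epsilon$, so excluded indices satisfy $|\bm a_k^*\bm w|>\epsilon^{-1}$. The contributions are then:
\begin{itemize}
\item retained part: at most $2\tau_1\epsilon\cdot 2\le 2\tau_1^2\epsilon$, using $\|\frac{1}{m}\sum_k\bm a_k\bm a_k^*\|\le 2$ and $\tau_1\ge2$;
\item excluded indices: $\frac{10\tau_1^2}{\epsilon}e^{-0.49\epsilon^{-2}}+2\tau_1^2\epsilon$, by Lemma~\ref{spectral} with $\gamma=\epsilon^{-1}$;
\item the two mismatch sums: $10\tau_1^3e^{-0.49\tau_1^2}+2\tau_1^2\epsilon$;
\item the net: $(\tau_1^4+\tau_1^3+\tau_1^2)\epsilon$;
\item the deterministic part: $4\epsilon^2$.
\end{itemize}
The total is at most $\rho_1$, and the $\log\epsilon^{-1}$ in the sample size still absorbs the larger net.
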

\begin{proof}
Let $\mathcal{N}_{\epsilon^2}$ be an $\epsilon^2$-net of the unit sphere $\mathbb{S}^{n-1}$. For any $\bm{z}\in\mathbb{S}^{n-1}$, we choose an element $\bm{z}_0\in \mathcal{N}_{\epsilon^2}$ such that $\|\bm{z}-\bm{z}_0\|_2\leq\epsilon^2$. To bound the difference between the expression evaluated at $\bm{z}$ and $\bm{z}_0$, we decompose it as follows:
\begin{align*}
&\left\|\frac{1}{m}\sum_{k=1}^m(\bm{a}_k^*\bm{z})^2\bm{a}_k\bm{a}_k^{\top}\cdot\mathbb{1}_{\left\{\left|\bm{a}_k^*\bm{z}\right| \leq \tau_1\|\bm{z}\|_2\right\}}-
\frac{1}{m}\sum_{k=1}^m(\bm{a}_k^*\bm{z}_0)^2\bm{a}_k\bm{a}_k^{\top}\cdot\mathbb{1}_{\left\{\left|\bm{a}_k^*\bm{z}_0\right| \leq \tau_1\|\bm{z}_0\|_2\right\}}\right\|\\
\leq&
\left\|\frac{1}{m}\sum_{k=1}^m\left((\bm{a}_k^*\bm{z})^2-(\bm{a}_k^*\bm{z}_0)^2\right)\bm{a}_k\bm{a}_k^{\top}\cdot\mathbb{1}_{\left\{\left|\bm{a}_k^*\bm{z}\right| \leq \tau_1\|\bm{z}\|_2\right\}}
\mathbb{1}_{\left\{\left|\bm{a}_k^*\bm{z}_0\right| \leq \tau_1\|\bm{z}_0\|_2\right\}}\right\|\\
&\quad+
\left\|\frac{1}{m}\sum_{k=1}^m(\bm{a}_k^*\bm{z})^2\bm{a}_k\bm{a}_k^{\top}\cdot\mathbb{1}_{\left\{\left|\bm{a}_k^*\bm{z}\right| \leq \tau_1\|\bm{z}\|_2\right\}}
\mathbb{1}_{\left\{\left|\bm{a}_k^*\bm{z}_0\right| > \tau_1\|\bm{z}_0\|_2\right\}}\right\|\\
&\quad+
\left\|\frac{1}{m}\sum_{k=1}^m(\bm{a}_k^*\bm{z}_0)^2\bm{a}_k\bm{a}_k^{\top}\cdot\mathbb{1}_{\left\{\left|\bm{a}_k^*\bm{z}\right| >\tau_1\|\bm{z}\|_2\right\}}
\mathbb{1}_{\left\{\left|\bm{a}_k^*\bm{z}_0\right| \leq \tau_1\|\bm{z}_0\|_2\right\}}\right\|\\
\leq&
\left\|\frac{1}{m}\sum_{k=1}^m\left((\bm{a}_k^*\bm{z})^2-(\bm{a}_k^*\bm{z}_0)^2\right)\bm{a}_k\bm{a}_k^{\top}\cdot
\mathbb{1}_{\left\{\left|\bm{a}_k^*\bm{z}\right| \leq \tau_1\|\bm{z}\|_2\right\}}
\mathbb{1}_{\left\{\left|\bm{a}_k^*\bm{z}_0\right| \leq \tau_1\|\bm{z}_0\|_2\right\}}
\mathbb{1}_{\left\{\left|\bm{a}_k^*(\bm{z}-\bm{z}_0)\right| \leq \epsilon\right\}}
\right\|\\
&\quad+
\left\|
\frac{1}{m}\sum_{k=1}^m\left((\bm{a}_k^*\bm{z})^2-(\bm{a}_k^*\bm{z}_0)^2\right)\bm{a}_k\bm{a}_k^{\top}
\cdot
\mathbb{1}_{\left\{\left|\bm{a}_k^*\bm{z}\right| \leq \tau_1\|\bm{z}\|_2\right\}}
\mathbb{1}_{\left\{\left|\bm{a}_k^*\bm{z}_0\right| \leq \tau_1\|\bm{z}_0\|_2\right\}}
\mathbb{1}_{\left\{\left|\bm{a}_k^*(\bm{z}-\bm{z}_0)\right| > \epsilon\right\}}
\right\|\\
&\quad+
\left\|\frac{1}{m}\sum_{k=1}^m(\bm{a}_k^*\bm{z})^2\bm{a}_k\bm{a}_k^{\top}\cdot\mathbb{1}_{\left\{\left|\bm{a}_k^*\bm{z}\right| \leq \tau_1\|\bm{z}\|_2\right\}}
\mathbb{1}_{\left\{\left|\bm{a}_k^*\bm{z}_0\right| > \tau_1\|\bm{z}_0\|_2\right\}}\right\|\\
&\quad+
\left\|\frac{1}{m}\sum_{k=1}^m(\bm{a}_k^*\bm{z}_0)^2\bm{a}_k\bm{a}_k^{\top}\cdot
\mathbb{1}_{\left\{\left|\bm{a}_k^*\bm{z}\right| >\tau_1\|\bm{z}\|_2\right\}}
\mathbb{1}_{\left\{\left|\bm{a}_k^*\bm{z}_0\right| \leq \tau_1\|\bm{z}_0\|_2\right\}}\right\|\\
&:= \mathbf{I}_1+\mathbf{I}_2+\mathbf{I}_3+\mathbf{I}_4.
\end{align*}
We now proceed to estimate the upper bounds for each of these four terms.

For $\mathbf{I}_1$, we note that the coefficient $\left((\bm{a}_k^*\bm{z})^2-(\bm{a}_k^*\bm{z}_0)^2\right)$ can be factored as $\bm{a}_k^*(\bm{z}-\bm{z}_0)(\bm{a}_k^*\bm{z}+\bm{a}_k^*\bm{z}_0)$. Given the indicator functions $\mathbb{1}_{\left\{\left|\bm{a}_k^*\bm{z}\right| \leq \tau_1\|\bm{z}\|_2\right\}}$ and $\mathbb{1}_{\left\{\left|\bm{a}_k^*\bm{z}_0\right| \leq \tau_1\|\bm{z}_0\|_2\right\}}$, and noting $\|\bm{z}\|_2 = \|\bm{z}_0\|_2 = 1$, we have $|\bm{a}_k^*\bm{z}| \leq \tau_1$ and $|\bm{a}_k^*\bm{z}_0| \leq \tau_1$. Thus, $|\bm{a}_k^*\bm{z}+\bm{a}_k^*\bm{z}_0| \leq |\bm{a}_k^*\bm{z}| + |\bm{a}_k^*\bm{z}_0| \leq 2\tau_1$. Additionally, the third indicator $\mathbb{1}_{\left\{\left|\bm{a}_k^*(\bm{z}-\bm{z}_0)\right| \leq \epsilon\right\}}$ restricts $|\bm{a}_k^*(\bm{z}-\bm{z}_0)| \leq \epsilon$. Therefore, the absolute value of the scalar coefficient is bounded by $\epsilon \cdot 2\tau_1 = 2\tau_1\epsilon$. Applying Lemma \ref{lem: bound of spectral norm of sum of a_ka_k^T} and noting that $\|\frac{1}{m}\sum_{k=1}^m\bm{a}_k\bm{a}_k^*\|$ is bounded (e.g., by 2 for sufficiently large $m$ with high probability, as established by Lemma \ref{lem8.1}), we get:
$$\mathbf{I}_1\leq\ 2\tau_1\epsilon\left\|\frac{1}{m}\sum_{k=1}^m\bm{a}_k\bm{a}_k^*\right\|\leq2\tau_1^2\epsilon.$$
This holds if $m \geq 2 \epsilon^{-2} n$, with probability at least $1-2 e^{-m \epsilon^2 / 2}$.

For $\mathbf{I}_2$, the scalar coefficient $\left((\bm{a}_k^*\bm{z})^2-(\bm{a}_k^*\bm{z}_0)^2\right)$ restricted by $\mathbb{1}_{\left\{\left|\bm{a}_k^*\bm{z}\right| \leq \tau_1\|\bm{z}\|_2\right\}}
\cdot\mathbb{1}_{\left\{\left|\bm{a}_k^*\bm{z}_0\right| \leq \tau_1\|\bm{z}_0\|_2\right\}}
$ is bounded in magnitude by $|(\bm{a}_k^*\bm{z})^2| + |(\bm{a}_k^*\bm{z}_0)^2| \leq \tau_1^2\|\bm{z}\|_2^2 + \tau_1^2\|\bm{z}_0\|_2^2 = 2\tau_1^2$. Applying Lemma \ref{lem: bound of spectral norm of sum of a_ka_k^T} and substituting $\|\bm{z}-\bm{z}_0\|_2\leq\epsilon^2$ yields:
$$\mathbf{I}_2\leq2\tau_1^2\left\| \frac{1}{m}\sum_{k=1}^m\bm{a}_k\bm{a}_k^*\mathbb{1}_{\left\{\left|\bm{a}_k^*(\bm{z}-\bm{z}_0)\right| > \epsilon\right\}}
\right\|.$$
Since $\|\bm{z}-\bm{z}_0\|_2 \leq \epsilon^2$, the condition $\left|\bm{a}_k^*(\bm{z}-\bm{z}_0)\right| > \epsilon$ implies $\frac{\left|\bm{a}_k^*(\bm{z}-\bm{z}_0)\right|}{\|\bm{z}-\bm{z}_0\|_2} > \frac{\epsilon}{\|\bm{z}-\bm{z}_0\|_2} \ge \frac{\epsilon}{\epsilon^2} = \frac{1}{\epsilon}$. Applying Lemma \ref{spectral} with $\gamma = 1/\epsilon$ then gives:
$$\mathbf{I}_2\leq2\tau_1^2\left(5 \epsilon^{-1} e^{-0.49 \epsilon^{-2}}+\epsilon\right)$$
with probability at least $1-e^{-C_8m \epsilon^2}$ provided that $m \geq C_7\epsilon^{-2} \log \epsilon^{-1} \cdot n$.

For $\mathbf{I}_3$, the scalar coefficient is $|(\bm{a}_k^*\bm{z})^2|$ restricted by $\mathbb{1}_{\left\{\left|\bm{a}_k^*\bm{z}\right| \leq \tau_1\|\bm{z}\|_2\right\}}$, so its magnitude is bounded by $\tau_1^2$. Applying Lemma \ref{lem: bound of spectral norm of sum of a_ka_k^T} and Lemma \ref{spectral} (with $\gamma = \tau_1$) yields:
$$\mathbf{I}_3\leq\tau_1^2\left\|\frac{1}{m} \sum_{k=1}^m\bm{a}_k\bm{a}_k^*\mathbb{1}_{\left\{\left|\bm{a}_k^*\bm{z}_0\right| > \tau_1\|\bm{z}_0\|_2\right\}}
\right\|
\leq5 \tau_1^3 e^{-0.49 \tau_1^2}+\tau_1^2\epsilon$$
with probability at least $1-e^{-C_8m \epsilon^2}$ provided that $m \geq C_7\epsilon^{-2} \log \epsilon^{-1} \cdot n$.

For $\mathbf{I}_4$, following a similar argument as for $\mathbf{I}_3$, we have:
$$\mathbf{I}_4\leq\tau_1^2\left\|\frac{1}{m}\sum_{k=1}^m\bm{a}_k\bm{a}_k^*\mathbb{1}_{\left\{\left|\bm{a}_k^*\bm{z}\right| > \tau_1\|\bm{z}\|_2\right\}}
\right\|
\leq5 \tau_1^3 e^{-0.49 \tau_1^2}+\tau_1^2\epsilon
,$$
with probability at least $1-e^{-C_8m \epsilon^2}$ provided that $m \geq C_7\epsilon^{-2} \log \epsilon^{-1} \cdot n$.

Combining the upper bounds for these four terms, we obtain:
\begin{align}
&\left\|\frac{1}{m}\sum_{k=1}^m(\bm{a}_k^*\bm{z})^2\bm{a}_k\bm{a}_k^{\top}\cdot\mathbb{1}_{\left\{\left|\bm{a}_k^*\bm{z}\right| \leq \tau_1\|\bm{z}\|_2\right\}}-
\frac{1}{m}\sum_{k=1}^m(\bm{a}_k^*\bm{z}_0)^2\bm{a}_k\bm{a}_k^{\top}\cdot\mathbb{1}_{\left\{\left|\bm{a}_k^*\bm{z}_0\right| \leq \tau_1\|\bm{z}_0\|_2\right\}}\right\|\notag\\
\leq&10\tau_1^3e^{-0.49\tau_1^2}+\dfrac{10\tau_1^2}{\epsilon}e^{-0.49\epsilon^{-2}}+6\tau_1^2\epsilon\label{eq: error of antidiag when pretrub z}.
\end{align}
This inequality holds with probability at least $1-5e^{-C_8m \epsilon^2}$, provided $m \geq C_{11}\epsilon^{-2} \log \epsilon^{-1} \cdot n$, where $C_{11}=\max\{2,C_7\}$.

Next, by applying Lemma \ref{lem: one point concentration of antidiag matirx at z} and a union bound over the $\epsilon^2$-net $\mathcal{N}_{\epsilon^2}$, we obtain:
\begin{equation}\label{netuniform}
\begin{split}
&\left\|\frac{1}{m}\sum_{k=1}^m(\bm{a}_k^*\bm{z}_0)^2\bm{a}_k\bm{a}_k^{\top}\cdot\mathbb{1}_{\left\{\left|\bm{a}_k^*\bm{z}_0\right| \leq \tau_1\|\bm{z}_0\|_2\right\}}-(\hat{\beta_1}+\hat{\beta_2}) \bm{z}_0 \bm{z}_0^{\top}
\right\|\\
&\leq\sup_{\bm{u}\in\mathcal{N}_{\epsilon^2}}\left\|\frac{1}{m}\sum_{k=1}^m(\bm{a}_k^*\bm{u})^2\bm{a}_k\bm{a}_k^{\top}\cdot\mathbb{1}_{\left\{\left|\bm{a}_k^*\bm{u}\right| \leq \tau_1\|\bm{u}\|_2\right\}}-(\hat{\beta_1}+\hat{\beta_2}) \bm{u} \bm{u}^{\top}
\right\|\leq(\tau_1^4+\tau_1^3+\tau_1^2)\epsilon,\\
\end{split}
\end{equation}
with probability at least
$
1-\big|\mathcal{N}_{\epsilon^2}|e^{-2C_{10} m\epsilon^2}\geq1-e^{-2C_{10} m\epsilon^2+n\log3\epsilon^{-2}} \geq 1- e^{-C_{10}m\epsilon^2} ,
$
provided $m\ge C_{9}\epsilon^{-2}\log\epsilon^{-1}\cdot n$  for some positive constant $C_9$ such that $C_9C_{10}\geq2+\log3/\log\epsilon^{-1}$.

Thus, by combining \eqref{eq: error of antidiag when pretrub z} and \eqref{netuniform}, using the triangle inequality:
\begin{align}
&\left\|\frac{1}{m}
\sum_{k=1}^m(\bm{a}_k^*\bm{z})^2\bm{a}_k\bm{a}_k^{\top}\cdot\mathbb{1}_{\left\{\left|\bm{a}_k^*\bm{z}\right| \leq \tau_1\|\bm{z}\|_2\right\}}
-
(\hat{\beta_1}+\hat{\beta_2}) \bm{z} \bm{z}^{\top}
\right\|\notag\\
\leq&\left\|
\frac{1}{m}\sum_{k=1}^m(\bm{a}_k^*\bm{z})^2\bm{a}_k\bm{a}_k^{\top}\cdot\mathbb{1}_{\left\{\left|\bm{a}_k^*\bm{z}\right| \leq \tau_1\|\bm{z}\|_2\right\}}
-
\frac{1}{m}\sum_{k=1}^m(\bm{a}_k^*\bm{z}_0)^2\bm{a}_k\bm{a}_k^{\top}\cdot\mathbb{1}_{\left\{\left|\bm{a}_k^*\bm{z}_0\right| \leq \tau_1\|\bm{z}_0\|_2\right\}}
\right\|\notag\\
&+\left\|\frac{1}{m}\sum_{k=1}^m(\bm{a}_k^*\bm{z}_0)^2\bm{a}_k\bm{a}_k^{\top}\cdot\mathbb{1}_{\left\{\left|\bm{a}_k^*\bm{z}_0\right| \leq \tau_1\|\bm{z}_0\|_2\right\}}-(\hat{\beta_1}+\hat{\beta_2}) \bm{z}_0 \bm{z}_0^{\top}
\right\|+\Big\|(\hat{\beta_1}+\hat{\beta_2})(\bm{z}\bm{z}^{\top}-\bm{z}_0 \bm{z}_0^{\top})\Big\|
\notag\\
\leq&10\tau_1^3e^{-0.49\tau_1^2}+\dfrac{10\tau_1^2}{\epsilon}e^{-0.49\epsilon^{-2}}+(\tau_1^4+\tau_1^3+7\tau_1^2)\epsilon+4\epsilon^2\notag
\leq \rho_1.
\end{align}
This completes the proof of Lemma \ref{concentration:2}.
\end{proof}

Thus, based on Lemma \ref{concentration:1} and Lemma \ref{concentration:2}, we can now proceed to prove Lemma \ref{lemma:uniform_concentration}.

\begin{proof}[Proof of Lemma \ref{lemma:uniform_concentration}]
Starting from the reformulation in \eqref{reformulation}, we can express the difference between the sample sum and its expectation as follows:
\begin{small}
    $$\begin{aligned}  &\left|\frac{1}{m}\sum\limits_{k=1}^{m}\left(\operatorname{Re}\left(\bm{w}^* \bm{a}_k \bm{a}_k^* \bm{z}\right)\right)^2 \cdot \mathbb{1}_{\left\{\left|\bm{a}_k^{*} \bm{z}\right| \leq \tau_1\|\bm{z}\|_2\right\}} - \mathbb{E}\left[\frac{1}{m}\left(\operatorname{Re}\left(\bm{w}^* \bm{a}_k \bm{a}_k^* \bm{z}\right)\right)^2\cdot\mathbb{1}_{\left\{\left|\bm{a}_k^{*} \bm{z}\right| \leq \tau_1\|\bm{z}\|_2\right\}}\right]\right|\\ & =\Bigg|\frac{1}{4m} \sum_{k=1}^m\left[\begin{array}{l}\bm{w} \\ \overline{\bm{w}}\end{array}\right]^*\left[\begin{array}{ll}\left|\bm{a}_k^* \bm{x}\right|^2 \bm{a}_k \bm{a}_k^* -\mathbb{E}(\left|\bm{a}_k^* \bm{x}\right|^2 \bm{a}_k \bm{a}_k^* )& \left(\bm{a}_k^* \bm{x}\right)^2 \bm{a}_k \bm{a}_k^{\top} -\mathbb{E}(\left(\bm{a}_k^* \bm{x}\right)^2 \bm{a}_k \bm{a}_k^{\top} )\\ \left(\overline{\bm{a}_k^* \bm{x}}\right)^2 \overline{\bm{a}}_k \bm{a}_k^*-\mathbb{E}(\left(\overline{\bm{a}_k^* \bm{x}}\right)^2 \overline{\bm{a}}_k \bm{a}_k^*) & \left|\bm{a}_k^* \bm{x}\right|^2 \overline{\bm{a}}_k \bm{a}_k^{\top} - \mathbb{E}(\left|\bm{a}_k^* \bm{x}|\right)^2 \overline{\bm{a}}_k \bm{a}_k^{\top})\end{array}\right]\left[\begin{array}{l}\bm{w} \\ \overline{\bm{w}}\end{array}\right]\cdot \mathbb{1}_{\left\{\left|\bm{a}_k^{*} \bm{z}\right| \leq \tau_1\|\bm{z}\|_2\right\}}\Bigg|\\ &\leq \frac{1}{2}\Big\|\frac{1}{m} \sum_{k=1}^m\left|\bm{a}_k^{*} \bm{z}\right|^2 \bm{a}_k \bm{a}_k^{*} \cdot\mathbb{1}_{\left\{\left|\bm{a}_k^{*} \bm{z}\right| \leq \tau_1\|\bm{z}\|_2\right\}}-\left(\hat{\beta_1} \bm{z} \bm{z}^{*}+\hat{\beta_2}\|\bm{z}\|_2^2 \bm{I}_n\right)\Big\|\|\bm{w}\|_2^2 \\ &+
    \frac{1}{2} \Big\|\frac{1}{m}\sum_{k=1}^m\left(\bm{a}_k^*\bm{z}\right)^2\bm{a}_k \bm{a}_k^{\top}\cdot\mathbb{1}_{\left\{\left|\bm{a}_k^*\bm{z}\right| \leq \tau_1\|\bm{z}\|_2\right\}}
-\left(\hat{\beta_1}+\hat{\beta_2}\right)\bm{z} \bm{z}^{\top}\Big\|\|\bm{w}\|_2^2
\leq \rho_1\|\bm{w}\|_2^2\|\bm{z}\|_2^2,  \end{aligned}
$$ 
\end{small}
where the final inequality is obtained by applying the bounds from Lemma \ref{concentration:1} and Lemma \ref{concentration:2}. This concludes the proof.
\end{proof}

\section{Proofs of the Main Results}\label{sec5}
In this section, we prove Theorem \ref{thm:3}. In Section \ref{sec5.1}, we present the key propositions used for the proof of Theorem \ref{thm:3}. In Section \ref{sec5.2}, we establish the full proof of the main result of Theorem \ref{thm:3}.

\subsection{Key Ingredients}\label{sec5.1}
The concentration of $\frac{1}{\sqrt{m}}\mathcal{A}$ restricted to ${\mathbb{T}_{\bm{Z}}}$ is fundamental to solving the phase retrieval problem. To the best of our knowledge, however, the existing choice of metric allows one to establish only a restricted well-conditionedness for $\frac{1}{\sqrt{m}}\mathcal{A}$ on the tangent space $\mathbb{T}_{\bm{Z}}$, rather than a restricted nearly isometry property. This limitation prevents the derivation of optimal theoretical guarantees for signal recovery. Consequently, it is crucial to design a suitable metric that ensures the operator $\frac{1}{\sqrt{m}}\mathcal{A}$, when restricted to $\mathbb{T}_{\bm{Z}}$ and subject to certain truncations, acts as a near isometry.

We first prove that our proposed metric \eqref{new metric} establishes a restricted near-isometry for $\frac{1}{\sqrt{m}}\mathcal{A}$, subject to the truncation defined in \eqref{truncation:guassian}, under the complex Gaussian model \eqref{eq:Gaussian}. While the restricted near-isometry property for real Gaussian measurements can be derived by slightly modifying the proof of \cite[Proposition 4.1]{RefWorks:RefID:34-cai2018solving}, extending this to complex random Gaussian measurements presents specific challenges. The primary obstacle arises from the need to estimate $\left(\operatorname{Re}\left(\bm{w}^* \bm{a}_k \bm{a}_k^* \bm{z}\right)\right)^2$. Because the imaginary component is discarded, significant energy is lost, rendering the lower bound of the isometry difficult to establish.

\begin{pro}[Restricted Near Isometry Property] \label{RIP:Gaussian}
Let $\{\bm{a}_k\}_{k=1}^{m}$ follow the random Gaussian model in \eqref{eq:Gaussian}. Let $\tau_0, \tau_1, \tau_2>0$ be truncation parameters and $\epsilon > 0$ be sufficiently small. Then, there exist constants $C_{12}, C_{13} > 0$ such that whenever the sample size satisfies $m \geq C_{12} \epsilon^{-2}\log(1/\epsilon)n$, the following event occurs with probability at least $1-5e^{-C_{13} m\epsilon^2}$: For all positive semi-definite $\bm{Z}\in\mathcal{M}_1$ satisfying $\left\|\bm{Z} -\bm{X} \right\|_F \leq \frac{1}{14}\|\bm{X}\|_F$,
\begin{equation}\label{rip}
(\hat{\beta_1}-\delta)\|\bm{W}\|_{\mathfrak{o}}^2 \leq \frac{1}{m}\|\mathcal{A}_{\bm{Z}}(\bm{W})\|_2^2 \leq (\hat{\beta_2}+\delta)\|\bm{W}\|_{\mathfrak{o}}^2,
\qquad \forall~\bm{W}\in\mathbb{T}_{\bm{Z}},
\end{equation} 
where $\delta$ is defined as $\delta = 2(\rho_1+\rho_2)$.
\end{pro}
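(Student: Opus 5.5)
We plan to write every tangent vector as $\bm{W}=\bm{z}\bm{w}^*+\bm{w}\bm{z}^*$, where $\bm{Z}=\bm{z}\bm{z}^*$ (this is possible because $\bm{Z}$ is positive semidefinite). Then
\[
\langle \bm{W},\bm{a}_k\bm{a}_k^*\rangle=2\operatorname{Re}(\bm{w}^*\bm{a}_k\bm{a}_k^*\bm{z}),
\]
and therefore
\[
\tfrac1m\|\mathcal{A}_{\bm{Z}}(\bm{W})\|_2^2=\tfrac4m\sum_k(\operatorname{Re}(\bm{w}^*\bm{a}_k\bm{a}_k^*\bm{z}))^2\mathbb{1}_{\mathcal{E}_0^k\cap\mathcal{E}_1^k\cap\mathcal{E}_2^k}.
\]
By the orthogonal-decomposition argument in the proof of the well-definedness of $\mathcal{S}_{\bm{Z}}$, we may choose $\bm{w}$ with $\bm{z}^*\bm{w}\in\mathbb{R}$. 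For such $\bm{w}$, $\operatorname{Im}(\bm{z}^*\bm{w})=0$. We then compute directly
\[
\|\bm{W}\|_F^2=2\|\bm{z}\|_2^2\|\bm{w}\|_2^2+2(\operatorname{Re}\bm{z}^*\bm{w})^2,\qquad \operatorname{tr}(\bm{W})=2\operatorname{Re}(\bm{z}^*\bm{w}),
\]
which give
\[
\|\bm{W}\|_{\mathfrak{o}}^2=2\|\bm{z}\|_2^2\|\bm{w}\|_2^2+6(\operatorname{Re}\bm{z}^*\bm{w})^2.
\]

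\textbf{Step 1 (only the $\mathcal{E}_1$ truncation).} By Lemma \ref{gaussian:expectations},
\[
\mathbb{E}\tfrac4m\sum_k(\cdot)^2\mathbb{1}_{\mathcal{E}_1^k}=2\hat\beta_2\|\bm{z}\|_2^2\|\bm{w}\|_2^2+(4\hat\beta_1+2\hat\beta_2)(\operatorname{Re}\bm{z}^*\bm{w})^2,
\]
where the $\operatorname{Im}$ term vanishes by our choice of $\bm{w}$. We compare this coefficientwise with $\|\bm{W}\|_{\mathfrak{o}}^2$: the two coefficient ratios are $\hat\beta_2$ and $(4\hat\beta_1+2\hat\beta_2)/6$. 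It remains to check that both ratios lie in $[\hat\beta_1,\hat\beta_2]$. The second is a convex combination of $\hat\beta_1$ and $\hat\beta_2$ with weights $2/3$ and $1/3$, so it lies in that interval. The ordering $\hat\beta_1\le\hat\beta_2$ should be checked, or else the bounds read with $\min$ and $\max$ of the two constants (as $\tau_1\to\infty$ both tend to $1$, which is what gives $\kappa_{\mathfrak o}\to1$). Since $\|\bm{z}\|_2^2\|\bm{w}\|_2^2\le\frac12\|\bm{W}\|_{\mathfrak o}^2$ trivially, Lemma \ref{lemma:uniform_concentration} controls the deviation of the empirical sum from this expectation by $4\rho_1\|\bm{z}\|_2^2\|\bm{w}\|_2^2\le 2\rho_1\|\bm{W}\|_{\mathfrak o}^2$, uniformly over all $\bm{z},\bm{w}$. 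This yields the bound $(\hat\beta_1-2\rho_1)\|\bm{W}\|_{\mathfrak o}^2\le\cdot\le(\hat\beta_2+2\rho_1)\|\bm{W}\|_{\mathfrak o}^2$ when only $\mathcal{E}_1$ is imposed.

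\textbf{Step 2 (adding $\mathcal{E}_0$ and $\mathcal{E}_2$).} The upper bound is immediate, since the extra indicators can only remove terms. For the lower bound we must show that the removed terms contribute at most $2\rho_2\|\bm{W}\|_{\mathfrak o}^2$. Each removed term is bounded by $\tau_1^2\|\bm{z}\|_2^2|\bm{a}_k^*\bm{w}|^2$ on $\mathcal{E}_1^k$. The task therefore reduces to bounding
\[
\big\|\tfrac1m\sum_k\bm{a}_k\bm{a}_k^*\mathbb{1}_{(\mathcal{E}_0^k)^c\cup(\mathcal{E}_2^k)^c}\big\|.
\]
For $\mathcal{E}_0^c$: on the high-probability event that $\|\bm{y}\|_1/m\approx\|\bm{x}\|_2^2$, this complement is contained in $\{|\bm{a}_k^*\bm{x}|>c\tau_0\|\bm{x}\|_2\}$, so Lemma \ref{spectral} with $\gamma\approx\tau_0$ gives a term of order $\tau_0e^{-0.39\tau_0^2}$. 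For $\mathcal{E}_2^c$: we use $\|\bm{Z}-\bm{X}\|_F\le\frac1{14}\|\bm{X}\|_F$ to write
\[
|y_k-|\bm{a}_k^*\bm{z}|^2|=|\bm{a}_k^*\bm{h}|\,|\bm{a}_k^*(\bm{z}+\bm{x})|\ \text{ up to phase, with }\bm{h}=\bm{z}-\bm{x}\text{ suitably aligned},
\]
together with a lower bound $\frac1m\|\bm{y}-\mathcal{A}(\bm{z}\bm{z}^*)\|_1\gtrsim\|\bm{h}\|_2\|\bm{x}\|_2$. This lower bound is the standard $\ell_1$ restricted-isometry-type bound for Gaussian rank-2 matrices, and it holds with probability $1-e^{-cm}$ when $m\gtrsim n$. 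With it, $\mathcal{E}_2^c$ is contained in $\{|\bm{a}_k^*\bm{h}|>c\tau_2\|\bm{h}\|_2\}$, and Lemma \ref{spectral} gives a term of order $\tau_2e^{-0.64\tau_2^2}$. Collecting these contributions, together with their $\epsilon$ slack, produces $\rho_2$. A union bound over all the events gives the stated probability $1-5e^{-C_{13}m\epsilon^2}$.

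\textbf{Main obstacle.} The hard part is the uniform handling of $\mathcal{E}_2$: its threshold depends on $\bm{z}$ through $\|\bm{y}-\mathcal{A}(\bm{z}\bm{z}^*)\|_1$, and the global phase ambiguity in $\bm{h}$ must be handled. I would first try the aligned choice of $\bm{h}$, namely the phase minimizing $\|\bm{z}-e^{i\phi}\bm{x}\|_2$, and import the $\ell_1$ lower bound from the TWF analysis \cite{RefWorks:RefID:12-chen2017solving}. The coefficient comparison in Step 1 is the other delicate point, because discarding the imaginary part is exactly where energy is lost. The choice $\operatorname{Im}(\bm{z}^*\bm{w})=0$ is what removes the negative term $-\frac12\hat\beta_2(\operatorname{Im}\bm{z}^*\bm{w})^2$ from the expectation in Lemma \ref{gaussian:expectations}.
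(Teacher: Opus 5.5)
Your proposal is correct and follows essentially the same route as the paper's proof. It uses the same normalization $\bm z^*\bm w\in\mathbb{R}$, giving $\|\bm W\|_{\mathfrak o}^2=2\|\bm z\|_2^2\|\bm w\|_2^2+6(\bm z^*\bm w)^2$, and the same $\mathcal{E}_1$-only two-sided estimate from Lemma \ref{gaussian:expectations} and Lemma \ref{lemma:uniform_concentration}. The lower bound likewise goes through $(\mathcal{E}_0^k)^c\subset\{|\bm a_k^*\bm x|>0.9\tau_0\|\bm x\|_2\}$ and $(\mathcal{E}_2^k)^c\subset\{|\bm a_k^*\bm h|>1.15\tau_2\|\bm h\|_2\}$ combined with Lemma \ref{spectral}; the $\mathcal{E}_2$ containment rests, as in the paper, on the $\ell_1$ lower bound imported from \cite{RefWorks:RefID:34-cai2018solving}.

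Three small remarks. Your pointwise bound $(\operatorname{Re}(\bm w^*\bm a_k\bm a_k^*\bm z))^2\mathbb{1}_{\mathcal{E}_1^k}\le\tau_1^2\|\bm z\|_2^2|\bm a_k^*\bm w|^2$ is a slightly cleaner substitute for the paper's block-matrix step with Lemma \ref{lem: bound of spectral norm of sum of a_ka_k^T}. The ordering you flagged does hold, since $\hat{\beta_2}-\hat{\beta_1}=\tau_1^4e^{-\tau_1^2}>0$. Your displayed identity for $|y_k-|\bm a_k^*\bm z|^2|$ should instead be the inequality $|y_k-|\bm a_k^*\bm z|^2|\le|\bm a_k^*\bm h|\,(|\bm a_k^*\bm z|+\sqrt{y_k})$, which is all that is needed.
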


\begin{proof} 
For any positive semi-definite $\bm{Z}\in\mathcal{M}_1$, we can write $\bm{Z}=\bm{zz}^*$ for some $\bm{z}\in\mathbb{C}^{n}$. Subsequently, for any $\bm{W}\in\mathbb{T}_{\bm{Z}}$, it holds that $\bm{W}=\bm{z}\bm{w}^*+\bm{w}\bm{z}^*$ for some $\bm{w}\in\mathbb{C}^{n}.$
Without loss of generality, we can assume $\bm{z}^*\bm{w} \in \mathbb{R}$. This is permissible because for any $\bm{w}\in \mathbb{C}^n$, we can find a $\lambda \in \mathbb{R}$ such that the vector $\bm{w}' = \bm{w}-\imath\lambda\bm{z}$ satisfies $\bm{z}^*\bm{w}' \in \mathbb{R}$. An easy calculation confirms that substituting $\bm{w}'$ for $\bm{w}$ leaves $\bm{W}$ unchanged:
\begin{align*}
\bm{z}\bm{w}^*+\bm{w}\bm{z}^* = \bm{z}(\bm{w}-\imath\lambda\bm{z})^*+(\bm{w}-\imath\lambda\bm{z})\bm{z}^*.
\end{align*}
Given this simplification, the norm $\|\bm{W}\|_{\mathfrak{o}}^2$ is given by:
\begin{align*}
\|\bm{W}\|_{\mathfrak{o}}^2 = 2\|\bm{z}\|_2^2\|\bm{w}\|_2^2+6\left|\bm{z}^{*} \bm{w}\right|^2.
\end{align*}
Furthermore, by direct calculation, the scaled norm of $\mathcal{A}_{\bm{Z}}(\bm{W})$ can be expressed as:
\begin{align*}
\frac{1}{m}\|\mathcal{A}_{\bm{Z}}(\bm{W})\|_2^2=\frac{4}{m}\sum\limits_{k=1}^{m}\left(\operatorname{Re}\left(\bm{w}^* \bm{a}_k \bm{a}_k^* \bm{z}\right)\right)^2 \cdot \mathbb{1}_{\mathcal{E}_0^k(\bm{y}) \cap \mathcal{E}_1^k(\bm{z}) \cap \mathcal{E}_2^k(\bm{z})}.
\end{align*}

\begin{itemize}
    \item \textbf{Upper Bound:}
    To derive the upper bound, we utilize the inequality $\mathbb{1}_{\mathcal{E}_0^k(\bm{y}) \cap \mathcal{E}_1^k(\bm{z}) \cap \mathcal{E}_2^k(\bm{z})} \leq \mathbb{1}_{\mathcal{E}_1^k(\bm{z})}$. This allows us to express the term as:
    \[\begin{split}
   \frac{1}{m}\|\mathcal{A}_{\bm{Z}}(\bm{W})\|_2^2& \leq \frac{4}{m}\sum\limits_{k=1}^{m}\left(\operatorname{Re}\left(\bm{w}^* \bm{a}_k \bm{a}_k^* \bm{z}\right)\right)^2 \cdot \mathbb{1}_{\mathcal{E}_1^k(\bm{z})}\\
   &= \frac{4}{m}\sum\limits_{k=1}^{m}\left(\left(\operatorname{Re}\left(\bm{w}^* \bm{a}_k \bm{a}_k^* \bm{z}\right)\right)^2 \cdot \mathbb{1}_{\{{\left|\bm{a}_k^{*} \bm{z}\right| \leq \tau_1\|\bm{z}\|_2}\}} - \mathbb{E}\left[\left(\operatorname{Re}\left(\bm{w}^* \bm{a}_k \bm{a}_k^* \bm{z}\right)\right)^2\cdot\mathbb{1}_{\{{\left|\bm{a}_k^{*} \bm{z}\right| \leq \tau_1\|\bm{z}\|_2}\}} \right]\right)\\ 
   &+\frac{4}{m}\sum\limits_{k=1}^{m} \mathbb{E}\left[\left(\operatorname{Re}\left(\bm{w}^* \bm{a}_k \bm{a}_k^* \bm{z}\right)\right)^2\cdot\mathbb{1}_{\{{\left|\bm{a}_k^{*} \bm{z}\right| \leq \tau_1\|\bm{z}\|_2}\}} \right] \\ & :=  4\mathbf{I}_{11} +  4\mathbf{I}_{12}\end{split}\]
    Applying Lemma \ref{lemma:uniform_concentration}, there exist universal constants $C_1 ,C_2>0$ such that if $m \geq C_1\epsilon^{-2} \log \epsilon^{-1} n$, then with probability at least $1-e^{-C_2m\epsilon^2}$, we have:
    \begin{equation}\label{I11}
   -\rho_1 \|\bm{z}\|_2^2 \|\bm{w}\|_2^2\leq \mathbf{I}_{11} \leq \rho_1 \|\bm{z}\|_2^2 \|\bm{w}\|_2^2. 
    \end{equation}
    Furthermore, from Lemma \ref{gaussian:expectations}, we obtain:
    \begin{equation}\label{I12}
    \mathbf{I}_{12} = \frac{1}{2} \hat{\beta_2} \|\bm{z}\|_2^2 \|\bm{w}\|_2^2 + (\hat{\beta_1}+\frac{\hat{\beta_2}}{2})|\bm{z^*}\bm{w}|^2.
    \end{equation}
    Combining these results, we establish the upper bound:
    \begin{equation}\label{upper bound}
    \begin{split}
    \frac{1}{m}\|\mathcal{A}_{\bm{Z}}(\bm{W})\|_2^2&\leq 4\rho_1 \|\bm{z}\|_2^2 \|\bm{w}\|_2^2 +2 \hat{\beta_2} \|\bm{z}\|_2^2 \|\bm{w}\|_2^2 + 4(\hat{\beta_1}+\frac{\hat{\beta_2}}{2})|\bm{z^*}\bm{w}|^2 \\ &\leq  4\rho_1 \|\bm{z}\|_2^2 \|\bm{w}\|_2^2 +2 \hat{\beta_2}\|\bm{z}\|_2^2 \|\bm{x}\|_2^2 + 6\hat{\beta_2}|\bm{z^*}\bm{w}|^2 \\ &\leq (2\rho_1+\hat{\beta_2})\|\bm{W}\|_\mathfrak{o}^2
    \leq (\hat{\beta_2} +\delta)\|\bm{W}\|_\mathfrak{o}^2.
    \end{split}
    \end{equation}
   \item \textbf{Lower Bound:}
    For the lower bound, let $\bm{h}=\bm{z}-\bm{x}$. By Lemma \ref{uuu11}, the condition $\left\|\bm{z} \bm{z}^{*}-\bm{x} \bm{x}^{*}\right\|_F \leq \frac{1}{14}\|\bm{X}\|_F$ implies $\|\bm{h}\|_2 \leq \frac{1}{11}\|\bm{z}\|_2$.
    Furthermore, Lemma \ref{lem8.1} implies that if $\epsilon\in(0,0.01)$, then $\delta_1=4\epsilon^2+4\epsilon\in(0,0.05)$. Consequently, if $m \geq 2 \epsilon^{-2} n$, then with probability at least $1-2 e^{-m \epsilon^2 / 2}$, we have:
 \[
\left\{\left|\bm{a}_k^{*} \bm{x}\right|\leq0.9 \tau_0\|\bm{x}\|_2\right\}\subset\mathcal{E}_0^k(\bm{y})\subset\left\{\left|\bm{a}_k^{*} \bm{x}\right|\leq1.1 \tau_0\|\bm{x}\|_2\right\}.
 \]  
    Similarly, drawing from \cite[Lemma 5.2]{RefWorks:RefID:34-cai2018solving}, if $\epsilon\in(0, 0.01)$, there exist universal constants $c',\bar{c}'>0$ such that if $m\geq c'\epsilon^{-2}\log \epsilon^{-1}n$, then with probability at least $1-e^{-\bar{c}'\epsilon^2m}$, we obtain:
\[\begin{split}
&\left\{\left|y_k-\left|\bm{a}_k^* \bm{z}\right|^2\right| \leq 1.15\tau_2\|\bm{h}\|_2 \left(\left|\bm{a}_k^* \bm{z}\right|+\sqrt{y_k}\right)\right\}\subset\mathcal{E}_2^k(\bm{z})\subset\left\{\left|y_k-\left|\bm{a}_k^* \bm{z}\right|^2\right| \leq 3\tau_2\|\bm{h}\|_2 \left(\left|\bm{a}_k^* \bm{z}\right|+\sqrt{y_k}\right)\right\},\\
&\left\{\left|\bm{a}_k^* \bm{h}\right|\leq 1.15\tau_2\|\bm{h}\|_2\right\}\subset\left\{\left|y_k-\left|\bm{a}_k^* \bm{z}\right|^2\right| \leq 1.15\tau_2\|\bm{h}\|_2 \left(\left|\bm{a}_k^* \bm{z}\right|+\sqrt{y_k}\right)\right\}.
\end{split}
\]
    Thus, by combining these conditions and setting $C_{14}=\max\{2,c'\}$ and $\overline{C_{14}} = \min\{1,\bar{c}'\}$, if $m\geq C_{14}\epsilon^{-2}\log\epsilon^{-1}n$, then with probability at least $1-2e^{-\overline{C_{14}}\epsilon^2m}$, we establish:
\begin{equation}\label{event1}
\begin{aligned}
\mathbb{1}_{\mathcal{E}_0^k(\bm{y}) \cap \mathcal{E}_1^k(\bm{z}) \cap \mathcal{E}_2^k(\bm{z})} & =\mathbb{1}_{\mathcal{E}_1^k(\bm{z})}-\mathbb{1}_{\mathcal{E}_1^k(\bm{z})} \mathbb{1}_{\left(\mathcal{E}_2^k(\bm{z})\right)^c \cup\left(\mathcal{E}_0^k(\bm{y})\right)^c} \\
& \geq \mathbb{1}_{\mathcal{E}_1^k(\bm{z})}-\mathbb{1}_{\mathcal{E}_1^k(\bm{z})} \mathbb{1}_{\left(\mathcal{E}_2^k(\bm{z})\right)^c}-\mathbb{1}_{\mathcal{E}_1^k(\bm{z})} \mathbb{1}_{\left(\mathcal{E}_0^k(\bm{y})\right)^c} \\
& \geq \mathbb{1}_{\mathcal{E}_1^k(\bm{z})}-\mathbb{1}_{\mathcal{E}_1^k(\bm{z})} \mathbb{1}_{\left\{\left|\bm{a}_k^{*} \bm{h}\right|>1.15 \tau_2\|\bm{h}\|_2\right\}}-\mathbb{1}_{\mathcal{E}_1^k(\bm{z})} \mathbb{1}_{\left\{\left|\bm{a}_k^{*} \bm{x}\right|>0.9 \tau_0\|\bm{x}\|_2\right\}}.
\end{aligned}
\end{equation}
    This lower bound for the indicator function implies that:
\begin{equation}
\begin{split}
\frac{1}{m}\|\mathcal{A}_{\bm{Z}}(\bm{W})\|_2^2&=\frac{4}{m}\sum\limits_{k=1}^{m}\left(\operatorname{Re}\left(\bm{w}^* \bm{a}_k \bm{a}_k^* \bm{z}\right)\right)^2 \cdot \mathbb{1}_{\mathcal{E}_0^k(\bm{x}) \cap \mathcal{E}_1^k(\bm{z}) \cap \mathcal{E}_2^k(\bm{z})} \\&\geq \frac{4}{m}\sum\limits_{k=1}^{m}\left(\operatorname{Re}\left(\bm{w}^* \bm{a}_k \bm{a}_k^* \bm{z}\right)\right)^2 \cdot \mathbb{1}_{\{{\left|\bm{a}_k^{*} \bm{z}\right| \leq \tau_1\|\bm{z}\|_2}\}}\\&-\frac{4}{m}\sum\limits_{k=1}^{m}\left(\operatorname{Re}\left(\bm{w}^* \bm{a}_k \bm{a}_k^* \bm{z}\right)\right)^2 \cdot\mathbb{1}_{\{{\left|\bm{a}_k^{*} \bm{z}\right| \leq \tau_1\|\bm{z}\|_2}\}} \mathbb{1}_{\left\{\left|\bm{a}_k^{*} \bm{h}\right|>1.15 \tau_2\|\bm{h}\|_2\right\}}\\ &-\frac{4}{m}\sum\limits_{k=1}^{m}\left(\operatorname{Re}\left(\bm{w}^* \bm{a}_k \bm{a}_k^* \bm{z}\right)\right)^2 \cdot \mathbb{1}_{\{{\left|\bm{a}_k^{*} \bm{z}\right| \leq \tau_1\|\bm{z}\|_2}\}}\mathbb{1}_{\left\{\left|\bm{a}_k^{*} \bm{x}\right|>0.9 \tau_0\|\bm{x}\|_2\right\}} \\ & := 4\mathbf{I}_{21} -4\mathbf{I}_{22}-4\mathbf{I}_{23}.
\end{split}
\end{equation}

Next, we establish the lower bound for $\mathbf{I}_{21}$ and the upper bounds for $\mathbf{I}_{22}$ and $\mathbf{I}_{23}$ respectively.

For $\mathbf{I}_{21}$, by leveraging equations \eqref{I11} and \eqref{I12} from the upper bound analysis, we have:
\begin{equation}\nonumber
\begin{split}
\mathbf{I}_{21} &= \frac{1}{m}\sum\limits_{k=1}^{m}\left(\left(\operatorname{Re}\left(\bm{w}^* \bm{a}_k \bm{a}_k^* \bm{z}\right)\right)^2 \cdot \mathbb{1}_{\mathcal{E}_1^k(\bm{z})} - \mathbb{E}\left[\left(\operatorname{Re}\left(\bm{w}^* \bm{a}_k \bm{a}_k^* \bm{z}\right)\right)^2\cdot\mathbb{1}_{\mathcal{E}_1^k(\bm{z})} \right]\right)\\ &+\frac{1}{m}\sum\limits_{k=1}^{m} \mathbb{E}\left[\left(\operatorname{Re}\left(\bm{w}^* \bm{a}_k \bm{a}_k^* \bm{z}\right)\right)^2\cdot\mathbb{1}_{\mathcal{E}_1^k(\bm{z})} \right] \\&\geq -\rho_1 \|\bm{z}\|_2^2\|\bm{w}\|_2^2 + \frac{1}{2}\hat{\beta_2}\|\bm{z}\|_2^2\|\bm{w}\|_2^2 +(\hat{\beta_1}+\frac{\hat{\beta_2}}{2})|\bm{z^*}\bm{w}|^2 \\ &\geq  -\rho_1 \|\bm{z}\|_2^2 \|\bm{x}\|_2^2 +\frac{1}{2} \hat{\beta_1}\|\bm{z}\|_2^2 \|\bm{x}\|_2^2 + \frac{3}{2}\hat{\beta_1}|
\bm{z^*}\bm{w}|^2 \\& \geq \frac{1}{4}(\hat{\beta_1}-2\rho_1) \|\bm{W}\|_{\mathfrak{o}}. 
\end{split}
\end{equation}
The final inequality in this derivation holds due to $\hat{\beta_2}\geq \hat{\beta_1}$ and the definition of $\|\bm{W}\|_{\mathfrak{o}}$.

For $\mathbf{I}_{22}$, we note its expression as:
\begin{align*}
\mathbf{I}_{22} &=\frac{1}{m}\sum\limits_{k=1}^{m}\left(\operatorname{Re}\left(\bm{w}^* \bm{a}_k \bm{a}_k^* \bm{z}\right)\right)^2 \cdot \mathbb{1}_{\{{\left|\bm{a}_k^{*} \bm{z}\right| \leq \tau_1\|\bm{z}\|_2}\}} \mathbb{1}_{\left\{\left|\bm{a}_k^{*} \bm{h}\right|>1.15 \tau_2\|\bm{h}\|_2\right\}}  \\ & = \frac{1}{4m}\sum_{k=1}^m\left[\begin{array}{l}\bm{w} \\ \overline{\bm{w}}\end{array}\right]^*\left[\begin{array}{ll}\left|\bm{a}_k^* \bm{z}\right|^2 \bm{a}_k \bm{a}_k^* & \left(\bm{a}_k^* \bm{z}\right)^2 \bm{a}_k \bm{a}_k^{\top} \\ \left(\overline{\bm{a}_k^* \bm{z}}\right)^2 \overline{\bm{a}}_k \bm{a}_k^* & \left|\bm{a}_k^* \bm{z}\right|^2 \overline{\bm{a}}_k \bm{a}_k^{\top} \end{array}\right]\left[\begin{array}{l}\bm{w} \\ \overline{\bm{w}}\end{array}\right] \cdot \mathbb{1}_{\{{\left|\bm{a}_k^{*} \bm{z}\right| \leq \tau_1\|\bm{z}\|_2}\}} \mathbb{1}_{\left\{\left|\bm{a}_k^{*} \bm{h}\right|>1.15 \tau_2\|\bm{h}\|_2\right\}}\\ &\leq \frac{1}{2}\left\| \frac{1}{m}\sum_{k=1}^m\left|\bm{a}_k^* \bm{z}\right|^2 \bm{a}_k \bm{a}_k^{*} \cdot \mathbb{1}_{\{{\left|\bm{a}_k^{*} \bm{z}\right| \leq \tau_1\|\bm{z}\|_2}\}} \mathbb{1}_{\left\{\left|\bm{a}_k^{*} \bm{h}\right|>1.15 \tau_2\|\bm{h}\|_2\right\}}\right\|\|\bm{w}\|_2^2 \\ &+
    \frac{1}{2}\left\|\frac{1}{m}\sum_{k=1}^m\left(\bm{a}_k^*\bm{z}\right)^2\bm{a}_k \bm{a}_k^{*}\cdot \mathbb{1}_{\{{\left|\bm{a}_k^{*} \bm{z}\right| \leq \tau_1\|\bm{z}\|_2}\}} \mathbb{1}_{\left\{\left|\bm{a}_k^{*} \bm{h}\right|>1.15 \tau_2\|\bm{h}\|_2\right\}}\right\|\|\bm{w}\|_2^2 \\& \leq \frac{1}{2}\tau_{1}^2\left\| \frac{1}{m}\sum_{k=1}^m\bm{a}_k \bm{a}_k^{*} \cdot\mathbb{1}_{\left\{\left|\bm{a}_k^{*} \bm{h}\right|>1.15 \tau_2\|\bm{h}\|_2\right\}}\right\|\|\bm{z}\|_2^2\|\bm{w}\|_2^2 \\&+
\frac{1}{2}\tau_{1}^2\left\|\frac{1}{m}\sum_{k=1}^m\bm{a}_k \bm{a}_k^{*}\cdot\mathbb{1}_{\left\{\left|\bm{a}_k^{*} \bm{h}\right|>1.15 \tau_2\|\bm{h}\|_2\right\}}\right\|\|\bm{z}\|_2^2\|\bm{w}\|_2^2. 
\end{align*}
The last inequality in the above derivation follows from Lemma \ref{lem: bound of spectral norm of sum of a_ka_k^T}.
Subsequently, by applying Lemma \ref{spectral}, for some universal constants $C_{15},\overline{C_{15}}>0$, if $\epsilon>0$ is sufficiently small and $m \geq C_{15} \epsilon^{-2} \log \epsilon^{-1}n$, then with probability exceeding $1-e^{-\overline{C_{15}}m \epsilon^2}$, we have:
\begin{align*}
\mathbf{I}_{22}  &\leq \tau_{1}^2\left(5.75 \tau_2 e^{-0.64 \tau_2^2}+\epsilon\right)\|\bm{w}\|^2\|\bm{z}\|_2^2.
\end{align*}
Similarly, $\mathbf{I}_{23}$ can be bounded from above. For some universal constants $C_{15}',\overline{C_{15}}'>0$, if $\epsilon>0$ is sufficiently small and $m \geq C_{15}' \epsilon^{-2} \log \epsilon^{-1}n$, then with probability exceeding $1-e^{-\overline{C_{15}}'m \epsilon^2}$, we get:
$$
\mathbf{I}_{23} \leq  \tau_1^2\left(3.6 \tau_0 e^{-0.39 \tau_0^2}+\epsilon\right)\|\bm{w}\|_2^2\|\bm{z}\|_2^2,
$$
holds for all $\bm{z}\neq 0,\bm{w}\in\mathbb{C}^{n}$. 

Altogether, combining the bounds for $\mathbf{I}_{21}, \mathbf{I}_{22}$ and $\mathbf{I}_{23}$, and defining $C_{16}= \max\{C_{14},C_{15},C_{15}'\}$ and $\overline{C_{16}}=\min\{\overline{C_{14}},\overline{C_{15}},\overline{C_{15}}'\}$, when $m\geq C_{16}\epsilon^{-2} \log \epsilon^{-1}n$, then with probability at least $1-5e^{-\overline{C_{16}}m \epsilon^2}$, we obtain the overall lower bound:
\begin{equation}\label{lower bound}
\begin{split}
\frac{1}{m}\|\mathcal{A}_{\bm{Z}}(\bm{W})\|_2^2 &\geq  \hat{\beta_1}\|\bm{W}\|_{\mathfrak{o}} -2\left(\
\rho_1+ \tau_{1}^2\left(5.75 \tau_2 e^{-0.64 \tau_2^2}+3.6 \tau_0 e^{-0.39 \tau_0^2}+2\epsilon\right)\right)\|\bm{z}\|_2^2 \|\bm{x}\|_2^2 \\&\geq (\hat{\beta_1}- 2(\rho_1+\rho_2))\|\bm{W}\|_{\mathfrak{o}}=(\hat{\beta_1}- \delta)\|\bm{W}\|_{\mathfrak{o}},
\end{split}
\end{equation}
which holds for all $\bm{W}\in \mathbb{T}_{\bm{Z}}$. 
\end{itemize}

Combining the results from \eqref{upper bound} and \eqref{lower bound}, we establish that for some constants $C_{12},C_{13}>0$, if $m\geq C_{12}\epsilon^{-2} \log \epsilon^{-1}n$, then with probability at least $1-5e^{-C_{13}m \epsilon^2}$, the following inequality
\[
(\hat{\beta_1}-\delta)\|\bm{W}\|_{\mathfrak{o}}^2 \leq \frac{1}{m}\|\mathcal{A}_{\bm{Z}}(\bm{W})\|_2^2 \leq (\hat{\beta_2}+\delta)\|\bm{W}\|_{\mathfrak{o}}^2
\]
holds for all positive semi-definite $\bm{Z}\in\mathcal{M}_1$ satisfying $\left\|\bm{Z} -\bm{X} \right\|_F \leq \frac{1}{14}\|\bm{X}\|_F$ and any $\bm{W}\in\mathbb{T}_{\bm{Z}}$. This completes the proof of Proposition \ref{RIP:Gaussian}.

\end{proof}

From Proposition \ref{RIP:Gaussian}, it is evident that the condition number of $\frac{1}{\sqrt{m}}\mathcal{A}_{\bm{Z}}$ restricted to the tangent space $\mathbb{T}_{\bm{Z}}$ is $\kappa_{\mathfrak{o}} =\frac{\hat{\beta_2}+\delta}{\hat{\beta_1}-\delta}$. By increasing the values of $\tau_0,\tau_1,\tau_2$ and decreasing $\epsilon$, $\kappa_{\mathfrak{o}}$ approaches $1$. This behavior indicates that $\frac{1}{\sqrt{m}}\mathcal{A}_{\bm{Z}}$ is near-isometric when restricted to the tangent space $\mathbb{T}_{\bm{Z}}$.

\begin{pro}[Restricted Weak Correlation]
\label{WRC}
Let $\{\bm{a}_k\}_{k=1}^{m}$ follow the random Gaussian model in \eqref{eq:Gaussian}. Let $\tau_0, \tau_1, \tau_2 >0$ be truncation parameters and let $\epsilon \in (0,1)$ be sufficiently small. Then, there exist constants $C_{12}, C_{13} > 0$ such that whenever the sample size satisfies $m \geq C_{12} \epsilon^{-2}\log(1/\epsilon)n$, the following event occurs with probability at least $1-30e^{-C_{13} m\epsilon^2}$: For all positive semi-definite $\bm{Z}\in\mathcal{M}_1$ satisfying $\left\|\bm{Z} -\bm{X} \right\|_F \leq \epsilon_0\|\bm{X}\|_F$,
\begin{equation}\label{wrc1}
\frac{1}{m}\|\mathcal{T}_{\mathbb{T}_{\bm{Z}}}^{(\mathfrak{o})}\mathcal{A}_{\bm{Z}}^*\mathcal{A}_{\bm{Z}}(\mathcal{I}-\mathcal{P}_{\mathbb{T}_{\bm{Z}}})(\bm{Z}-\bm{X})\|_{\mathfrak{o}} \leq\sqrt{2(\hat{\beta_2}+\delta)\rho_4}\|\bm{Z}-\bm{X}\|_{F}.
\end{equation}
\end{pro}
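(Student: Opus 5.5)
The plan is a duality argument combined with Proposition~\ref{RIP:Gaussian} and Cauchy--Schwarz.

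Write $\bm{E}:=(\mathcal{I}-\mathcal{P}_{\mathbb{T}_{\bm{Z}}})(\bm{Z}-\bm{X})$ and $\bm{V}:=\mathcal{T}_{\mathbb{T}_{\bm{Z}}}^{(\mathfrak{o})}\mathcal{A}_{\bm{Z}}^*\mathcal{A}_{\bm{Z}}(\bm{E})\in\mathbb{T}_{\bm{Z}}$. By Lemma~\ref{lemma:metric1}, for every $\bm{B}\in\mathbb{T}_{\bm{Z}}$,
\[
\tfrac1m\langle \bm{V},\bm{B}\rangle_{\mathfrak{o}}=\tfrac1m\langle \mathcal{A}_{\bm{Z}}^*\mathcal{A}_{\bm{Z}}(\bm{E}),\bm{B}\rangle=\tfrac1m\langle \mathcal{A}_{\bm{Z}}(\bm{E}),\mathcal{A}_{\bm{Z}}(\bm{B})\rangle .
\]
(Up to the $1/m$ normalization convention in $\mathcal{A}^*_{\bm{Z}}$, which only rescales constants.) Taking $\bm{B}=\bm{V}$ and using Cauchy--Schwarz gives
\[
\tfrac1m\|\bm{V}\|_{\mathfrak{o}}^2\le \tfrac{1}{\sqrt m}\|\mathcal{A}_{\bm{Z}}(\bm{V})\|_2\cdot\tfrac{1}{\sqrt m}\|\mathcal{A}_{\bm{Z}}(\bm{E})\|_2 .
\]
The upper bound of Proposition~\ref{RIP:Gaussian} (valid since $\epsilon_0\le 1/14$) gives $\tfrac1m\|\mathcal{A}_{\bm{Z}}(\bm{V})\|_2^2\le(\hat\beta_2+\delta)\|\bm{V}\|_{\mathfrak{o}}^2$. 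After dividing, it therefore suffices to prove
\[
\tfrac1m\|\mathcal{A}_{\bm{Z}}(\bm{E})\|_2^2\le 2\rho_4\|\bm{Z}-\bm{X}\|_F^2 .
\]

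The remaining work is to bound this normal-component energy. Write $\bm{Z}=\bm{z}\bm{z}^*$, $\bm{h}=\bm{z}-\bm{x}$ (after choosing the global phase of $\bm{z}$ so that $\bm{z}^*\bm{x}\in\mathbb{R}$), and $\bm{u}=\bm{z}/\|\bm{z}\|_2$. A direct computation gives
\[
\bm{E}=-(\bm{I}-\bm{u}\bm{u}^*)\bm{x}\bm{x}^*(\bm{I}-\bm{u}\bm{u}^*)=-\bm{h}_\perp\bm{h}_\perp^*,\qquad \bm{h}_\perp:=(\bm{I}-\bm{u}\bm{u}^*)\bm{h}.
\]
So $\bm{E}$ is rank one with $\|\bm{E}\|_F=\|\bm{h}_\perp\|_2^2\lesssim\|\bm{h}\|_2^2$, and by Lemma~\ref{uuu11} this is $O(\|\bm{Z}-\bm{X}\|_F^2/\|\bm{X}\|_F)$. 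Its measurements are $|\bm{a}_k^*\bm{h}_\perp|^2$. On the truncation events, $\mathcal{E}_1^k$ gives $|\bm{a}_k^*\bm{z}|\le\tau_1\|\bm{z}\|_2$ and $\mathcal{E}_0^k$ gives $\sqrt{y_k}\le 1.1\tau_0\|\bm{x}\|_2$. The event $\mathcal{E}_2^k$, used with the inclusion borrowed from \cite[Lemma 5.2]{RefWorks:RefID:34-cai2018solving} as in the proof of Proposition~\ref{RIP:Gaussian}, yields $|\bm{a}_k^*\bm{h}|\le\tau_{h,z}\|\bm{h}\|_2$ up to constants; this is exactly how $\tau_{h,z}$ arises. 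Combining $\bm{a}_k^*\bm{h}_\perp=\bm{a}_k^*\bm{h}-(\bm{u}^*\bm{h})\bm{a}_k^*\bm{u}$ with these bounds, each truncated term is at most $C\tau_{h,z}\|\bm{h}\|_2|\bm{a}_k^*\bm{h}_\perp|$ times the indicator. Hence
\[
\tfrac1m\|\mathcal{A}_{\bm{Z}}(\bm{E})\|_2^2\lesssim \tau_{h,z}^2\|\bm{h}\|_2^2\cdot\tfrac1m\sum_k|\bm{a}_k^*\bm{h}_\perp|^2\,\mathbb{1}_{\mathcal{E}_2^k}.
\]
Splitting on $\{|\bm{a}_k^*\bm{h}|>1.15\tau_2\|\bm{h}\|_2\}$ and applying Lemma~\ref{spectral} together with Lemma~\ref{lem8.1} gives the $6\tau_2\tau_{h,z}^2e^{-0.64\tau_2^2}+\tau_{h,z}^2\epsilon$ portion of $\rho_4$. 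The bulk term, of order $\|\bm{h}\|_2^4\lesssim\epsilon_0^2\|\bm{X}\|_F^2\cdot\|\bm{h}\|_2^2/\|\bm{x}\|_2^2$, is absorbed into $\rho_3\|\bm{Z}-\bm{X}\|_F^2$ through the definition of $\epsilon_0$. This is precisely why $\epsilon_0$ involves $\sqrt{\rho_3/(\tau_1^4+\dots)}$ and $\rho_3/(15\tau_1\tau_{h,z})$.

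\textbf{Main obstacle.} The delicate step is this last one: making the bound on $\tfrac1m\|\mathcal{A}_{\bm{Z}}(\bm{E})\|_2^2$ uniform over all admissible $\bm{Z}$ while tracking the constants so that they exactly produce $\rho_4$. Uniformity will come from the already-uniform Lemmas~\ref{spectral} and~\ref{lem8.1}, using the smallness $\|\bm{h}\|_2\le\epsilon_0\cdot O(\|\bm{x}\|_2)$ to trade the quartic-in-$\bm{h}$ terms for $\rho_3$. The $30e^{-C_{13}m\epsilon^2}$ failure probability then follows from a union bound over the events used.
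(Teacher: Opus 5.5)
Your first step is sound. Testing $\bm{V}$ against itself via Lemma~\ref{lemma:metric1} and then applying the upper RIP bound is the same idea as the paper's bound on $\|\mathcal{P}_{\mathbb{T}_{\bm{Z}}}\mathcal{A}_{\bm{Z}}^*\|$, and it even saves the factor $\sqrt2$ the paper loses through $\|\cdot\|_{\mathfrak{o}}\le\sqrt2\|\cdot\|_F$. The identity $\bm{E}=-\bm{h}_\perp\bm{h}_\perp^*$ is exactly the paper's.

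The gap is in the estimate of $\frac1m\|\mathcal{A}_{\bm{Z}}(\bm{E})\|_2^2=\frac1m\sum_k|\bm{a}_k^*\bm{h}_\perp|^4\,\mathbb{1}_{\mathcal{E}_0^k\cap\mathcal{E}_1^k\cap\mathcal{E}_2^k}$. You claim the truncation forces $|\bm{a}_k^*\bm{h}|\lesssim\tau_{h,z}\|\bm{h}\|_2$, and this is false. The event $\mathcal{E}_2^k$ only controls the difference of moduli, $\big|\,|\bm{a}_k^*\bm{x}|-|\bm{a}_k^*\bm{z}|\,\big|\lesssim\tau_2\|\bm{h}\|_2$. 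Nothing controls the relative phase of $\bm{a}_k^*\bm{z}$ and $\bm{a}_k^*\bm{x}$.

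The correct deterministic bound is \eqref{usefultruncation}: $|\bm{a}_k^*\bm{h}|\le|\bm{a}_k^*\bm{z}|+|\bm{a}_k^*\bm{x}|\le\tau_{h,z}\|\bm{z}\|_2$. This is how $\tau_{h,z}$ actually arises. The $\tau_1$ comes from $\mathcal{E}_1^k$, and the square root comes from $|\bm{a}_k^*\bm{x}|^2\le|\bm{a}_k^*\bm{z}|^2+3\tau_2\|\bm{h}\|_2(|\bm{a}_k^*\bm{z}|+\sqrt{y_k})$ with $\|\bm{h}\|_2\le 0.1\|\bm{z}\|_2$. Since $\|\bm{h}\|_2/\|\bm{z}\|_2\to0$, the difference between $\|\bm{h}\|_2$ and $\|\bm{z}\|_2$ is not a constant factor.

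Here is a concrete counterexample. Take $\bm{h}=c\,\bm{a}_1/\|\bm{a}_1\|_2^2$ with $c=\bm{a}_1^*\bm{x}\,(e^{\imath\theta}-1)$. Then $|\bm{a}_1^*\bm{z}|=|\bm{a}_1^*\bm{x}|$, so for typical $\bm{a}_1$ index $1$ survives all three truncations. Yet $|\bm{a}_1^*\bm{h}|=|c|\approx\sqrt n\,\|\bm{h}\|_2$. The single term $\frac1m|\bm{a}_1^*\bm{h}_\perp|^4$ is then of order $\frac nm\|\bm{z}\|_2^2\|\bm{h}\|_2^2$, which exceeds $\|\bm{h}\|_2^4$ by a factor of order $n^2/m$. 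So your inequality $\frac1m\|\mathcal{A}_{\bm{Z}}(\bm{E})\|_2^2\lesssim\tau_{h,z}^2\|\bm{h}\|_2^2\cdot\frac1m\sum_k|\bm{a}_k^*\bm{h}_\perp|^2$ fails uniformly when $m=O(n)$. That inequality would make the whole normal energy $O(\|\bm{h}\|_2^4)$, absorbable by $\epsilon_0$ alone.

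What is true, and what the paper does in its term $\mathbf{I}_5$, is a split on $\{|\bm{a}_k^*\bm{h}|\le 1.15\tau_2\|\bm{h}\|_2\}$.
\begin{itemize}
\item On that event together with $\mathcal{E}_1^k$, $|\bm{a}_k^*\bm{h}_\perp|\le(1.15\tau_2+\tau_1)\|\bm{h}\|_2$. This gives an $O(\|\bm{h}\|_2^4)$ piece, absorbed into $\rho_3$ through $\epsilon_0$.
\item On the complement you only have $|\bm{a}_k^*\bm{h}_\perp|\le\tau_{h,z}\|\bm{z}\|_2+\tau_1\|\bm{h}\|_2$. Lemma~\ref{spectral} with $\gamma=1.15\tau_2$ then gives a piece of order $\tau_{h,z}^2\big(6\tau_2e^{-0.64\tau_2^2}+\epsilon\big)\|\bm{z}\|_2^2\|\bm{h}\|_2^2$.
\end{itemize}
The second piece is comparable to $\|\bm{Z}-\bm{X}\|_F^2$ with a coefficient that does not shrink with $\epsilon_0$. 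It is small only because $\tau_2$ is large and $\epsilon$ is small, which is exactly why $\rho_4$ carries that summand on top of $\rho_3$. Similarly, the $\rho_3/(15\tau_1\tau_{h,z})$ entry of $\epsilon_0$ comes from an $\|\bm{h}\|_2^3\|\bm{z}\|_2$ term that exists only because of the $\|\bm{z}\|_2$-scaling.

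Your sketch names this split and this summand, but under your claimed bound the split could only produce $O(\|\bm{h}\|_2^4)$ terms, so the bookkeeping is inconsistent. Once the $\|\bm{z}\|_2$-scaling is restored, your $\bm{h}_\perp$-based route does go through. On each piece you bound the supremum of $|\bm{a}_k^*\bm{h}_\perp|^2$ and multiply by $\frac1m\sum_k|\bm{a}_k^*\bm{h}_\perp|^2\mathbb{1}$, controlled by Lemma~\ref{lem8.1} or Lemma~\ref{spectral}. This is a bit leaner than the paper's five-term binomial expansion, and the $\sqrt2$ you saved leaves slack for the constants.
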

\begin{proof} 
For any positive semi-definite $\bm{Z}\in\mathcal{M}_1$, we can write $\bm{Z}=\bm{zz}^*$ for some $\bm{z}\in\mathbb{C}^{n}$. Let $\bm{h}=\bm{z}-\bm{x}$. By Lemma \ref{lemma:metric2}, the following derivation holds:
$$\begin{aligned}
&\frac{1}{m}\|\mathcal{T}_{\mathbb{T}_{\bm{Z}}}^{(\mathfrak{o})}\mathcal{A}_{\bm{Z}}^*\mathcal{A}_{\bm{Z}}(\mathcal{I}-\mathcal{P}_{\mathbb{T}_{\bm{Z}}})(\bm{Z}-\bm{X})\|_{\mathfrak{o}
   }
   =\frac{1}{m}\|\mathcal{T}_{\mathbb{T}_{\bm{Z}}}^{(\mathfrak{o})}\mathcal{A}_{\bm{Z}}^*\mathcal{A}_{\bm{Z}}(\mathcal{I}-\mathcal{P}_{\mathbb{T}_{\bm{Z}}})(\bm{zz}^*-\bm{xx}^*)\|_{\mathfrak{o}
   }\\
&\leq\frac{1}{m}\|\mathcal{P}_{\mathbb{T}_{\bm{Z}}}\mathcal{A}_{\bm{Z}}^*\mathcal{A}_{\bm{Z}}(\mathcal{I}-\mathcal{P}_{\mathbb{T}_{\bm{Z}}})(\bm{zz}^*-\bm{xx}^*)\|_F
\leq \frac{1}{m}\|\mathcal{P}_{\mathbb{T}_{\bm{Z}}}\mathcal{A}_{\bm{Z}}^*\|\|\mathcal{A}_{\bm{Z}}(\mathcal{I}-\mathcal{P}_{\mathbb{T}_{\bm{Z}}})(\bm{zz}^*-\bm{xx}^*)\|_F.
\end{aligned}$$
Furthermore, from the proof of Proposition \ref{RIP:Gaussian}, it follows that for chosen parameters $\tau_0, \tau_1, \tau_2>0$ and a sufficiently small $\epsilon > 0$, there exist constants $C_{12}, C_{13} > 0$. If $m \geq C_{12} \epsilon^{-2}\log{\epsilon^{-1}}n$, then with probability at least $1-5e^{-C_{13}m\epsilon^2}$, the following conditions hold:
\begin{equation}\label{rip}
(\hat{\beta_1}-\delta)\|\bm{W}\|_{\mathfrak{o}}^2 \leq \frac{1}{m}\|\mathcal{A}_{\bm{Z}}(\bm{W})\|_2^2 \leq (\hat{\beta_2}+\delta)\|\bm{W}\|_{\mathfrak{o}}^2,
\qquad \forall~\bm{W}\in\mathbb{T}_{\bm{Z}},
\end{equation} 
\begin{equation}\label{event_h}
\frac{1}{m} \sum_{k=1}^m \left|\bm{a}_k^{*} \bm{h}\right|^2\leq1.05\|\bm{h}\|_2^2, \quad \forall \bm{h}\in\mathbb{C}^{n},
\end{equation}
and
\begin{equation}\label{small_z}
\left\|\frac{1}{m}\sum_{k=1}^m \bm{a}_k \bm{a}_k^{*} \cdot\mathbb{1}_{\left\{\left|\bm{a}_k^{*} \bm{z}\right|>1.15\tau_2\|\bm{z}\|_2\right\}}\right\| \leq 6\tau_2 e^{-0.64 \tau_2^2}+\epsilon. 
\end{equation}

Given the event \eqref{rip}, we can establish the following upper bound:
\begin{equation}\label{eq:prop4}
\begin{aligned}
\frac{1}{\sqrt{m}}\left\|\mathcal{P}_{\mathbb{T}_{\bm{Z}}} \mathcal{A}_{\bm{Z}}^{*}\right\| & =\sup _{\|\bm{b}\|_2=1,\|\bm{W}\|_F=1} \left|\left\langle\bm{W}, \frac{1}{\sqrt{m}}\mathcal{P}_{\mathbb{T}_{\bm{Z}}} \mathcal{A}_{\bm{Z}}^{*} (\bm{b})\right\rangle\right|  =\sup _{\|\bm{b}\|_2=1,\|\bm{W}\|_F=1}\left|\left\langle\frac{1}{\sqrt{m}}\mathcal{A}_{\bm{Z}} \mathcal{P}_{\mathbb{T}_{\bm{Z}}}(\bm{W}), \bm{b}\right\rangle\right| \\
&\leq\sqrt{(\hat{\beta_2}+\delta)}\|\mathcal{P}_{\mathbb{T}_{\bm{Z}}}(\bm{W})\|_{\mathfrak{o}}\|\bm{b}\|_2
\leq\sqrt{2(\hat{\beta_2}+\delta)}\|\mathcal{P}_{\mathbb{T}_{\bm{Z}}}(\bm{W})\|_{F}\|\bm{b}\|_2\\
& \leq \sqrt{2(\hat{\beta_2}+\delta)}\|\bm{W}\|_{F}\|\bm{b}\|_2\leq\sqrt{2(\hat{\beta_2}+\delta)}.
\end{aligned}
\end{equation}
The second inequality in the above derivation arises from the fact that $\|\mathcal{P}_{\mathbb{T}_{\bm{Z}}}(\bm{W})\|_{\mathfrak{o}}\leq \sqrt{2}\|\mathcal{P}_{\mathbb{T}_{\bm{Z}}}(\bm{W})\|_F$, which is a consequence of \eqref{new metric}.

Then, to prove \eqref{wrc1}, it remains to establish $\frac{1}{m}\|\mathcal{A}_{\bm{Z}}(\mathcal{I}-\mathcal{P}_{\mathbb{T}_{\bm{Z}}})(\bm{zz}^*-\bm{xx}^*)\|_F\leq \sqrt{\rho_4}\|\bm{Z}-\bm{X}\|_{F}$. To this end, observe that
\begin{equation}\nonumber
\begin{split}
&(\mathcal{I}-\mathcal{P}_{\mathbb{T}_{\bm{Z}}})(\bm{zz}^*-\bm{xx}^*)=-\Big(\frac{\bm{z}^*\bm{h}}{\|\bm{z}\|^2_{2}}\bm{z}-\bm{h}\Big)\Big(\frac{\bm{z}^*\bm{h}}{\|\bm{z}\|^2_{2}}\bm{z}-\bm{h}\Big)^*.\\
\end{split}
\end{equation}
It follows that
\begin{equation}\label{prop3:333}
\begin{aligned} 
&\frac{1}{m}\|\mathcal{A}_{\bm{Z}}(\mathcal{I}-\mathcal{P}_{\mathbb{T}_{\bm{Z}}})(\bm{zz}^*-\bm{xx}^*)\|_F^2\\
&=\frac{1}{m}\sum_{k=1}^m \left|\bm{a}_k^{*}\left(\frac{\bm{z}^{*} \bm{h}}{\|\bm{z}\|_2^2} \bm{z}-\bm{h}\right)\right|^4 \mathbb{1}_{\mathcal{E}_0^k(\bm{y}) \cap \mathcal{E}_1^k(\bm{z}) \cap \mathcal{E}_2^k(\bm{z})} \\ & \leq \frac{1}{m}\sum_{k=1}^m \frac{\left|\bm{a}_k^{*} \bm{z}\right|^4\left|\bm{z}^{*} \bm{h}\right|^4}{\|\bm{z}\|_2^8} \mathbb{1}_{\mathcal{E}_0^k(\bm{y}) \cap \mathcal{E}_1^k(\bm{z}) \cap \mathcal{E}_2^k(\bm{z})} \\
& +\frac{4}{m}\sum_{k=1}^m \frac{\left|\bm{z}^{*} \bm{h}\right|^3}{\|\bm{z}\|_2^6}\left|\bm{a}_k^{*} \bm{z}\right|^3\left|\bm{a}_k^{*} \bm{h}\right| \mathbb{1}_{\mathcal{E}_0^k(\bm{y}) \cap \mathcal{E}_1^k(\bm{z}) \cap \mathcal{E}_2^k(\bm{z})} \\ & +\frac{6}{m}\sum_{k=1}^m \frac{\left|\bm{z}^{*} \bm{h}\right|^2}{\|\bm{z}\|_2^4}\left|\bm{a}_k^{*} \bm{z}\right|^2\left|\bm{a}_k^{*} \bm{h}\right|^2 \mathbb{1}_{\mathcal{E}_0^k(\bm{y}) \cap \mathcal{E}_1^k(\bm{z}) \cap \mathcal{E}_2^k(\bm{z})} \\ & +\frac{4}{m} \sum_{k=1}^m \frac{\left|\bm{z}^{*} \bm{h}\right|}{\|\bm{z}\|_2^2}\left|\bm{a}_k^{*} \bm{z}\right|\left|\bm{a}_k^{*} \bm{h}\right|^3 \mathbb{1}_{\mathcal{E}_0^k(\bm{y}) \cap \mathcal{E}_1^k(\bm{z}) \cap \mathcal{E}_2^k(\bm{z})}\\&+ \frac{1}{m}\sum_{k=1}^m\left|\bm{a}_k^{*} \bm{h}\right|^4 \mathbb{1}_{\mathcal{E}_0^k(\bm{y}) \cap \mathcal{E}_1^k(\bm{z}) \cap \mathcal{E}_2^k(\bm{z})}\\& 
:=\mathbf{I}_1+\mathbf{I}_2+\mathbf{I}_3+\mathbf{I}_4+\mathbf{I}_5 .
\end{aligned}
\end{equation}

We now proceed to bound each term $\mathbf{I}_i$ for $i=1, \dots, 5$, utilizing \eqref{event_h} and \eqref{small_z}.

\paragraph{Upper bound of $\mathbf{I}_1$.} Direct calculation yields:
$$\mathbf{I}_1 \leq \frac{1}{m}\sum_{k=1}^m \frac{|\bm{a}_k^{*} \bm{z}|^4|\bm{z}^{*} \bm{h}|^4}{\|\bm{z}\|_2^8} \mathbb{1}_{\mathcal{E}_1^k(\bm{z}) }\leq \tau_1^4\|\bm{h}\|_2^4.$$

\paragraph{Upper bound of $\mathbf{I_2}$.} The term $\mathbf{I_2}$ can be bounded as:
\begin{align*}
    \mathbf{I_2}
    & \leq \frac{4}{m} \sum_{k=1}^m \frac{\left|\bm{z}^{*} \bm{h}\right|^3}{\|\bm{z}\|_2^6}\left|\bm{a}_k^{*} \bm{z}\right|^3\left|\bm{a}_k^{*} \bm{h}\right| \mathbb{1}_{\mathcal{E}_1^k(\bm{z})} 
     \leq \frac{4}{m} \tau_1^3\|\bm{h}\|_2^3\sum_{k=1}^m\left|\bm{a}_k^{*} \bm{h}\right| \\ & \leq 4 \tau_1^3\|\bm{h}\|_2^3 \sqrt{ \frac{1}{m}\sum_{k=1}^m\left|\bm{a}_k^{*} \bm{h}\right|^2}  \leq 4\sqrt{(1+\delta_1)} \tau_1^3\|\bm{h}\|_2^4\leq 5 \tau_1^3\|\bm{h}\|_2^4,
\end{align*}
where the last inequality follows from \eqref{event_h}.

\paragraph{Upper bound of $\mathbf{I_3}$.} Similarly, $\mathbf{I_3}$ is bounded by:
$$\begin{aligned} \mathbf{I}_3 & \leq \frac{6}{m}\sum_{k=1}^m \frac{\left|\bm{z}^{*} \bm{h}\right|^2}{\|\bm{z}\|_2^4}\left|\bm{a}_k^{*} \bm{z}\right|^2\left|\bm{a}_k^{*} \bm{h}\right|^2 \mathbb{1}_{ \mathcal{E}_1^k(\bm{z}) }  \leq \frac{6}{m} \tau_1^2\|\bm{h}\|_2^2 \sum_{k=1}^m\left|\bm{a}_k^{*} \bm{h}\right|^2  \leq 8 \tau_1^2\|\bm{h}\|_2^4 ,\end{aligned}$$
where the last inequality follows from \eqref{event_h}.
\paragraph{Upper bound of $\mathbf{I}_4$.} Under event \eqref{event_h}, and following calculations analogous to those in \cite[Lemma 5.2]{RefWorks:RefID:34-cai2018solving}, we obtain:
\begin{equation}\label{usefultruncation}
    \mathbb{1}_{\mathcal{E}_0^k(\bm{y}) \cap \mathcal{E}_1^k(\bm{z}) \cap \mathcal{E}_2^k(\bm{z})} \leq \mathbb{1}_{\left\{\left|\bm{a}_k^{*} \bm{h}\right| \leq \tau_{h, z}\|\bm{z}\|_2\right\}},\end{equation}
where $\tau_{h,z}: =\tau_1+\left(0.3 \tau_2\left(\tau_1+1.2 \tau_0\right)+\tau_1^2\right)^{1 / 2}$. Consequently, this leads to:

$$
\begin{aligned}
\mathbf{I}_4 
& \leq \frac{4}{m}\tau_1\|\bm{h}\|_2\left(\sum_{k=1}^m\left|\bm{a}_k^{*} \bm{h}\right|^3 \mathbb{1}_{\mathcal{E}_0^k(\bm{y}) \cap \mathcal{E}_1^k(\bm{z}) \cap \mathcal{E}_2^k(\bm{z})}\right) 
\leq \frac{4}{m} \tau_1\|\bm{h}\|_2\left(\sum_{k=1}^m\left|\bm{a}_k^{*} \bm{h}\right|^3 \mathbb{1}_{\left\{\left|\bm{a}_k^{*} \bm{h}\right| \leq \tau_{h, z}\|\bm{z}\|_2\right\}}\right) \\& \leq \frac{4}{m}\tau_1\tau_{h,z}\|\bm{h}\|_2\|\bm{z}\|_2\sum_{k=1}^m\left|\bm{a}_k^{*} \bm{h}\right|^2 
\leq 5 \tau_1\tau_{h,z}\|\bm{h}\|_2^3\|\bm{z}\|_2,
\end{aligned}
$$
where the last inequality follows from \eqref{event_h}.

\paragraph{Upper bound of $\mathbf{I}_5$}\quad For $\mathbf{I}_5$, by again applying \eqref{usefultruncation}, we find:
\begin{align*} \mathbf{I}_5 & \leq\frac{1}{m}\sum_{k=1}^m\left|\bm{a}_k^{*} \bm{h}\right|^4 \mathbb{1}_{\left\{\left|\bm{a}_k^{*} \bm{h}\right| \leq \tau_{h, z}\|\bm{z}\|_2\right\}} \\ & =\frac{1}{m}\sum_{k=1}^m\left|\bm{a}_k^{*} \bm{h}\right|^4 \mathbb{1}_{\left\{\left\{\left|\bm{a}_k^{*} \bm{h}\right| \leq \tau_{h, z}\|\bm{z}\|_2\right\} \cap \left\{\left|\bm{a}_k^{*} \bm{h}\right| \leq 1.15 \tau_2\|\bm{h}\|_2\right\}\right\}}+\frac{1}{m}\sum_{k=1}^m\left|\bm{a}_k^{*} \bm{h}\right|^4 \mathbb{1}_{\left\{\left\{\left|\bm{a}_k^{*} \bm{h}\right| \leq \tau_{h, z}\|\bm{z}\|_2\right\} \cap\left\{\left|\bm{a}_k^{*} \bm{h}\right| \leq 1.15 \tau_2\|\bm{h}\|_2\right\}^c\right\}} \\ & \leq \frac{1}{m}\sum_{k=1}^m\left|\bm{a}_k^{*} \bm{h}\right|^4 \mathbb{1}_{\left\{\left|\bm{a}_k^{*} \bm{h}\right| \leq 1.15 \tau_2\|\bm{h}\|_2\right\}}+\frac{1}{m}\sum_{k=1}^m\left|\bm{a}_k^{*} \bm{h}\right|^4 \mathbb{1}_{\left\{\left\{\left|\bm{a}_k^{*} \bm{h}\right| \leq \tau_{h, z}\|\bm{z}\|_2\right\} \cap\left\{\left|\bm{a}_k^{*} \bm{h}\right|>1.15 \tau_2\|\bm{h}\|_2\right\}\right\}} \\ & \leq 1.15^2\|\bm{h}\|_2^2 \tau_2^2\left(\frac{1}{m} \sum_{k=1}^m\left|\bm{a}_k^{*} \bm{h}\right|^2\right)+\tau_{h, z}^2\|\bm{z}\|_2^2 \bm{h}^{*}\cdot\left(\frac{1}{m} \sum_{k=1}^m \bm{a}_k \bm{a}_k^{*} \mathbb{1}_{\left\{\left|\bm{a}_k^{*} \bm{h}\right|>1.15 \tau_2\|\bm{h}\|_2\right\}}\right) \cdot\bm{h} \\ &\leq 2\tau_2^2\|\bm{h}\|_2^4+\left(6 \tau_2 \tau_{h, z}^2 e^{-0.64 \tau_2^2}+\tau_{h, z}^2\epsilon\right)\|\bm{z}\|_2^2\|\bm{h}\|_2^2,
\end{align*}
where the last inequality follows from \eqref{event_h} and \eqref{small_z}.

Finally, by combining all the derived bounds, we obtain:
\begin{equation}\label{square1}
\begin{aligned}
&\frac{1}{m}\|\mathcal{A}_{\bm{Z}}(\mathcal{I}-\mathcal{P}_{\mathbb{T}_{\bm{Z}}})(\bm{zz}^*-\bm{xx}^*)\|_F^2 \\&\leq \left((\tau_1^4+5\tau_1^3+8\tau_1^2+2\tau_2^2)\frac{\|\bm{h}\|_2^2}{\|\bm{z}\|_2^2}+5 \tau_1 \tau_{h, z} \frac{\|\bm{h}\|_2}{\|\bm{z}\|_2}+\left(6 \tau_2 \tau_{h, z}^2 e^{-0.64 \tau_2^2}+\tau_{h, z}^2\epsilon\right)\right)\|\bm{z}\|_2^2\|\bm{h}\|_2^2\\&\leq \frac{5}{4} \left((\tau_1^4+5\tau_1^3+8\tau_1^2+2\tau_2^2)\frac{\|\bm{h}\|_2^2}{\|\bm{z}\|_2^2}+5 \tau_1 \tau_{h, z} \frac{\|\bm{h}\|_2}{\|\bm{z}\|_2}+\left(6 \tau_2 \tau_{h, z}^2 e^{-0.64 \tau_2^2}+\tau_{h, z}^2\epsilon\right)\right)\|\bm{zz^*}-\bm{xx^*}\|_F^2
\end{aligned}
\end{equation}
The second inequality in \eqref{square1} follows from Lemma \ref{uuu11}. Given that $\|\bm{zz}^*-\bm{xx}^*\|_F\leq \epsilon_0\|\bm{X}\|_F$, and applying Lemma \ref{uuu11} once more, we have $\frac{\|\bm{h}\|_2}{\|\bm{z}\|_2} \leq \frac{2}{\sqrt{5}}\epsilon_0$. Substituting this inequality into \eqref{square1}, we get:
\begin{align*}
&\frac{1}{m}\|\mathcal{A}_{\bm{Z}}(\mathcal{I}-\mathcal{P}_{\mathbb{T}_{\bm{Z}}})(\bm{zz}^*-\bm{xx}^*)\|_F^2 \\
&\leq \frac{5}{4}\left(\frac{4}{5}(\tau_1^4+5\tau_1^3+8\tau_1^2+2\tau_2^2)\epsilon_0^2+\frac{2}{\sqrt{5}}\cdot5 \tau_1 \tau_{h, z} \epsilon_0+\left(6 \tau_2 \tau_{h, z}^2 e^{-0.64 \tau_2^2}+\tau_{h, z}^2\epsilon\right)\right)\|\bm{zz^*}-\bm{xx^*}\|_F^2 \\
&\leq \rho_4\|\bm{zz^*}-\bm{xx^*}\|_F^2
\end{align*}
Combining this result with \eqref{eq:prop4} concludes the proof.

\end{proof}

\subsection{Proof of the Theorem \ref{thm:3}}\label{sec5.2}
Based on the Proposition \ref{RIP:Gaussian} and \ref{WRC}, we shall prove Theorem \ref{thm:3} to establish the local linear convergence for TWRGD under random complex Gaussian measurements.

\begin{proof}[Proof of Theorem \ref{thm:3}]
Noting that $\mathcal{A}_t^{*}(\bm{y})=\mathcal{A}_t^{*} \mathcal{A}_t(\bm{X})$ and substituting the update rule $\bm{W}_{t}=\bm{Z}_{t}+\frac{\alpha_t}{m} \mathcal{T}_{\mathbb{T}_{t}}^{(\mathfrak{o})} \mathcal{A}_t^{*}\mathcal{A}_t\left(\bm{X}-\bm{Z}_t\right)$, we can expand the error term as follows:
\begin{equation}\label{convergence1}
    \begin{aligned}
    \left\|\bm{W}_t-\bm{X}\right\|_F&= \left\|\left(\bm{Z}_t-\bm{X}\right)-\frac{\alpha_t}{m}\mathcal{T}_{\mathbb{T}_{t}}^{(\mathfrak{o})} \mathcal{A}_t^{*} \mathcal{A}_t\left(\bm{Z}_t-\bm{X}\right)\right\|_{F} \\ 
&\leq  \left\|\left(\mathcal{P}_{\mathbb{T}_{t}}-\frac{\alpha_t}{m} {\mathcal{T}_{\mathbb{T}_{\bm{Z}}}^{(\mathfrak{o})}} \mathcal{A}_t^{*} \mathcal{A}_t \mathcal{P}_{\mathbb{T}_{t}}\right) \left(\bm{Z}_t-\bm{X}\right)\right\|_{\mathfrak{o}}+\left\|(\mathcal{I}-\mathcal{P}_{\mathbb{T}_{t}})  (\bm{Z}_t-\bm{X})\right\|_{F} \\ 
& +\frac{\alpha_t}{m}\left\|\mathcal{T}_{\mathbb{T}_{t}}^{(\mathfrak{o})} \mathcal{A}_t^{*} \mathcal{A}_t(\mathcal{I}-\mathcal{P}_{\mathbb{T}_{t}})(\bm{Z}_t-\bm{X})\right\|_{\mathfrak{o}}.
    \end{aligned}
\end{equation}
To analyze the first term on the right-hand side, we first establish a self-adjointness property. By Lemma \ref{lemma:metric1}, for any $\bm{A},\bm{B}\in\mathbb{T}_{\bm{Z}}$, we obtain
\[\begin{split}
\langle(\mathcal{P}_{\mathbb{T}_{\bm{Z}}}-\frac{\alpha_t}{m}{\mathcal{T}_{\mathbb{T}_{\bm{Z}}}^{(\mathfrak{o})}} \mathcal{A}_{\bm{Z}}^{*} \mathcal{A}_{\bm{Z}} \mathcal{P}_{\mathbb{T}_{\bm{Z}}})(\bm{A}), \bm{B}\rangle_{\mathfrak{o}}
&=\langle\mathcal{P}_{\mathbb{T}_{\bm{Z}}}(\bm{A}), \bm{B}\rangle_{\mathfrak{o}}-\frac{\alpha_t}{m}\langle\mathcal{P}_{\mathbb{T}_{\bm{Z}}} \mathcal{A}_{\bm{Z}}^{*} \mathcal{A}_{\bm{Z}} \mathcal{P}_{\mathbb{T}_{\bm{Z}}}(\bm{A}), \bm{B}\rangle\\
&=\langle\mathcal{P}_{\mathbb{T}_{\bm{Z}}}(\bm{A}), \mathcal{P}_{\mathbb{T}_{\bm{Z}}}(\bm{B})\rangle_{\mathfrak{o}}-\frac{\alpha_t}{m}\langle\mathcal{P}_{\mathbb{T}_{\bm{Z}}} \mathcal{A}_{\bm{Z}}^{*} \mathcal{A}_{\bm{Z}} \mathcal{P}_{\mathbb{T}_{\bm{Z}}}(\bm{A}), \mathcal{P}_{\mathbb{T}_{\bm{Z}}}(\bm{B})\rangle\\
&=\langle\bm{A},(\mathcal{P}_{\mathbb{T}_{\bm{Z}}}-\frac{\alpha_t}{m}{\mathcal{T}_{\mathbb{T}_{\bm{Z}}}^{(\mathfrak{o})}} \mathcal{A}_{\bm{Z}}^{*} \mathcal{A}_{\bm{Z}} \mathcal{P}_{\mathbb{T}_{\bm{Z}}})(\bm{B})\rangle_{\mathfrak{o}}.
\end{split}\]
This implies that the operator $\mathcal{P}_{\mathbb{T}_{\bm{Z}}}-\frac{\alpha_t}{m}{\mathcal{T}_{\mathbb{T}_{\bm{Z}}}^{(\mathfrak{o})}} \mathcal{A}_{\bm{Z}}^{*} \mathcal{A}_{\bm{Z}} \mathcal{P}_{\mathbb{T}_{\bm{Z}}}$ is self-adjoint in $\mathbb{T}_{\bm{Z}}$. Consequently, we can relate the operator norm to the spectral radius. Applying Proposition \ref{RIP:Gaussian}, we have
\begin{equation}
\begin{split}
\mathop{\sup}_{\bm{W}\in \mathbb{T}_{\bm{Z}}}\frac{\left\|\big(\mathcal{P}_{\mathbb{T}_{t}}-\frac{\alpha_t}{m} {\mathcal{T}_{\mathbb{T}_{\bm{Z}}}^{(\mathfrak{o})}} \mathcal{A}_t^{*} \mathcal{A}_t \mathcal{P}_{\mathbb{T}_{t}}\right) \left(\bm{W}\big)\right\|_{\mathfrak{o}}}{\|\bm{W}\|_{\mathfrak{o}}}
&=\mathop{\sup}_{\bm{W}\in \mathbb{T}_{\bm{Z}}}\frac{\left|\left\langle\left(\mathcal{P}_{\mathbb{T}_{t}}-\frac{\alpha_t}{m} {\mathcal{T}_{\mathbb{T}_{\bm{Z}}}^{(\mathfrak{o})}} \mathcal{A}_t^{*} \mathcal{A}_t \mathcal{P}_{\mathbb{T}_{t}}\right) \left(\bm{W}\right),\bm{W}\right\rangle_{\mathfrak{o}}\right|}{\|\bm{W}\|_{\mathfrak{o}}^2}\\
&=\mathop{\sup}_{\bm{W}\in \mathbb{T}_{\bm{Z}}}\frac{\left|\|\bm{W}\|_{\mathfrak{o}}^2-\frac{\alpha_t}{m}\left\langle\mathcal{T}_{\mathbb{T}_{\bm{Z}}}^{(\mathfrak{o})} \mathcal{A}_{\bm{Z}}^{*} \mathcal{A}_{\bm{Z}} \mathcal{P}_{\mathbb{T}_{\bm{Z}}}(\bm{W}), \bm{W}\right\rangle_{\mathfrak{o}}\right|}{\|\bm{W}\|_{\mathfrak{o}}^2 }\\
&=\mathop{\sup}_{\bm{W}\in \mathbb{T}_{\bm{Z}}}\frac{\left|\|\bm{W}\|_{\mathfrak{o}}^2-\frac{\alpha_t}{m}\left\|\mathcal{A}_{\bm{Z}}(\bm{W})\right\|_2^2\right|}{\|\bm{W}\|_{\mathfrak{o}}^2 }\\
 & \leq \max \big\{|1-\alpha_t (\hat{\beta_1}-\delta )|,|1-\alpha_t (\hat{\beta_2}+\delta)|\big\}.
\end{split}
\end{equation}
Therefore, the first term on the right-hand side of \eqref{convergence1} is bounded by 
\[
\begin{split}
\left\|\left(\mathcal{P}_{\mathbb{T}_{t}}-\frac{\alpha_t}{m}\mathcal{T}_{\mathbb{T}_{t}}^{(\mathfrak{o})} \mathcal{A}_t^{*} \mathcal{A}_t\mathcal{P}_{\mathbb{T}_{t}}\right)\left(\bm{Z}_t-\bm{X}\right)\right\|_{\mathfrak{o}}
&\leq \max \{|1-\alpha_t (\hat{\beta_2}+\delta )|,|1-\alpha_t (\hat{\beta_1}-\delta)|\}\|\mathcal{P}_{\mathbb{T}_{t}}(\bm{Z}_t-\bm{X})\|_{\mathfrak{o}}\\
&\leq \sqrt{2}\cdot\max \{|1-\alpha_t (\hat{\beta_2}+\delta )|,|1-\alpha_t (\hat{\beta_1}-\delta)|\}\|\mathcal{P}_{\mathbb{T}_{t}}(\bm{Z}_t-\bm{X})\|_F.
\end{split}
\]
Applying Proposition \ref{WRC}, we bound the second term on the right-hand side of \eqref{convergence1}: 
$$\alpha_t\left\|\frac{1}{m}\mathcal{P}_{\mathbb{T}_{t}}^{(\mathfrak{o})} \mathcal{A}_t^{*} \mathcal{A}_t(\mathcal{I}-\mathcal{P}_{\mathbb{T}_{t}})(\bm{Z}_t-\bm{X})\right\|_{\mathfrak{o}} \leq \alpha_t\sqrt{2\rho_4(\hat{\beta_2}+\delta)}\|\bm{Z}_t-\bm{X}\|_F.$$
Additionally, relying on [Lemma 4.1, \cite{RefWorks:RefID:31-wei2016guarantees}], we bound the third term on the right-hand side of \eqref{convergence1}: 
$$\left\|\left(\mathcal{I}-\mathcal{P}_{\mathbb{T}_{t}}\right) \bm{X}\right\|_F \leq \frac{\left\|\bm{Z}_t-\bm{X}\right\|_F^2}{\|\bm{X}\|_F} \leq \epsilon_0\left\|\bm{Z}_t-\bm{X}\right\|_F.$$
Substituting these three bounds into \eqref{convergence1} yields
    $$
     \left\|\bm{W}_t-\bm{X}\right\|_F \leq (\mu_1+\epsilon_0)\left\|\bm{Z}_t-\bm{X}\right\|_F   $$
    where $\mu_1:=\sqrt{2}\Big(\max\{|1-\alpha_t (\hat{\beta_1}-\delta )|,|1-\alpha_t (\hat{\beta_2}+\delta)|\}+\alpha_t \sqrt{\rho_4(\hat{\beta_2}+\delta)}\Big)$. Furthermore, it follows from [Lemma 4.4, \cite{RefWorks:RefID:34-cai2018solving}] that
$$
\left\|\bm{Z}_{t+1}-\bm{X}\right\|_F \leq (\mu_1+\epsilon_0) \sqrt{1+16 (\mu_1+\epsilon_0)^2 \epsilon_0^2}\left\|\bm{Z}_t-\bm{X}\right\|_F:=\nu_1\left\|\bm{Z}_t-\bm{X}\right\|_F.
$$

We must now demonstrate that the convergence rate satisfies $\nu_1<1$. This condition holds if
\begin{equation}\label{con_rate_1} 
\mu_1 \leq \varepsilon_{up}:=\frac{1-\epsilon_0 \sqrt{1+16 \epsilon_0^2}}{ \sqrt{1+16 \epsilon_0^2}}.  
\end{equation} 
To analyze this, note that \begin{equation}\nonumber
\max \big\{|1-\alpha_t(\hat{\beta_2}+\delta)|,|1-\alpha_t (\hat{\beta_1}-\delta)|\big\}=\Bigg\{\begin{array}{ll}\alpha_t (\hat{\beta_2}+\delta)-1 & \text { if } \alpha_t \geq\frac{2}{\hat{\beta_1}+\hat{\beta_2}}  \\ 1-\alpha_t (\hat{\beta_1}-\delta) & \text { if } \alpha_t \leq\frac{2}{\hat{\beta_1}+\hat{\beta_2}}\end{array}.
\end{equation}
In the case where $\alpha_t\geq\frac{2}{\hat{\beta_1}+\hat{\beta_2}}$, the parameter $\mu_1$ becomes
$\mu_1 
=\sqrt{2}\Big(\alpha_t(\hat{\beta_2}+\delta)-1+ \alpha_t\sqrt{\rho_4(\hat{\beta_2}+\delta)}\Big).
$
Consequently, \eqref{con_rate_1} is satisfied provided that
$$
\alpha_t\leq\frac{\varepsilon_{up}+\sqrt{2}}{\sqrt{2}\Big((\hat{\beta_2}+\delta)+\sqrt{\rho_4(\hat{\beta_2}+\delta)}\Big)}.
$$
Conversely, if $\alpha_t\leq\frac{2}{\hat{\beta_1}+\hat{\beta_2}}$, we have
$
\mu_1=\sqrt{2}\Big(1-\alpha_t (\hat{\beta_1}-\delta)+ \alpha_t\sqrt{\rho_4(\hat{\beta_2}+\delta)}\Big).
$
In this scenario, \eqref{con_rate_1} holds if
$$
\alpha_t\geq\frac{\sqrt{2}-\varepsilon_{up}}{\sqrt{2}\Big((\hat{\beta_1}-\delta)-\sqrt{\rho_4(\hat{\beta_2}+\delta)}\Big)}.
$$
Combining these cases, we conclude that choosing $\alpha_t \in \Big[\frac{\sqrt{2}-\varepsilon_{up}}{\sqrt{2}\big((\hat{\beta_1}-\delta)-\sqrt{\rho_4(\hat{\beta_2}+\delta)}\big)}, \frac{\varepsilon_{up}+\sqrt{2}}{\sqrt{2}\big((\hat{\beta_2}+\delta)+\sqrt{\rho_4(\hat{\beta_2}+\delta)}\big)}\Big]$ guarantees \eqref{con_rate_1}.

On the other hand, given the step size definition $\frac{\alpha_t}{m}= \frac{\|\mathcal{T}^{(\mathfrak{o})}_{\mathbb{T}_{t}}(\bm{G}_t)\|_{\mathfrak{o}}^2}{\|\mathcal{A}_t\mathcal{T}^{(\mathfrak{o})}_{\mathbb{T}_{t}}(\bm{G}_t)\|_2^2},$
Proposition \ref{RIP:Gaussian} implies the bounds:
\[
\begin{split}
\frac{1}{\hat{\beta_2}+\delta}=\frac{\|\mathcal{T}^{(\mathfrak{o})}_{\mathbb{T}_{t}}(\bm{G}_t)\|_{\mathfrak{o}}^2}{(\hat{\beta_2}+\delta)\|\mathcal{T}^{(\mathfrak{o})}_{\mathbb{T}_{t}}(\bm{G}_t)\|_{\mathfrak{o}}^2}\leq \alpha_t
\leq\frac{\|\mathcal{T}^{(\mathfrak{o})}_{\mathbb{T}_{t}}(\bm{G}_t)\|_{\mathfrak{o}}^2}{(\hat{\beta_1}-\delta)\|\mathcal{T}^{(\mathfrak{o})}_{\mathbb{T}_{t}}(\bm{G}_t)\|_{\mathfrak{o}}^2}=\frac{1}{\hat{\beta_1}-\delta}.
\end{split}
\]
Therefore, we can select a sufficiently small $\rho_3>0$, $\epsilon_0 >0$ and sufficiently large truncation parameters $\tau_0,\tau_1,\tau_2 >0$ such that the following inequalities hold:
$$
(\hat{\beta_1}-\delta)> \sqrt{\rho_4(\hat{\beta_2}+\delta)},
$$
$$\hat{\beta_2}+\delta+ \sqrt{\rho_4(\hat{\beta_2}+\delta) } \leq \frac{\sqrt{2}+\varepsilon_\mathrm{up}}{\sqrt{2}}(\hat{\beta_1}-\delta),$$ 
and
$$
\hat{\beta_1}-\delta+ \sqrt{\rho_4(\hat{\beta_2}+\delta) } \geq \frac{\sqrt{2}-\varepsilon_\mathrm{up}}{\sqrt{2}}(\hat{\beta_2}+\delta).
$$
These selections ensure the inclusion
\begin{equation}\label{stepsize1}
 \alpha_t \in\left[\frac{1}{\hat{\beta_2}+\delta},\frac{1}{\hat{\beta_1}-\delta}\right] \subset \Big[\frac{\sqrt{2}-\varepsilon_{up}}{\sqrt{2}\big((\hat{\beta_1}-\delta)-\sqrt{\rho_4(\hat{\beta_2}+\delta)}\big)}, \frac{\varepsilon_{up}+\sqrt{2}}{\sqrt{2}\big((\hat{\beta_2}+\delta)+\sqrt{\rho_4(\hat{\beta_2}+\delta)}\big)}\Big].  
\end{equation}   
Consequently, the condition \eqref{con_rate_1} is established.
This completes the proof of Theorem \ref{thm:3}.

\end{proof}

\section{Numerical Experiments}\label{sec6}
\indent In this section, we present numerical experiments to evaluate the practical performance of WRGD and TWRGD, with steepest descent step sizes. Unless otherwise noted, initialization of all algorithms is performed using 100 power iterations via Algorithm~\ref{alg:3}. The measurements $\{\bm{a}_k\}_{k=1}^{m}$ follow the model in \eqref{eq:Gaussian}. We conduct experiments on noise-free signals of dimension $n = 1000$, where the entries of $\bm{x}$ are generated independently from a standard complex Gaussian distribution, i.e., $\bm{x} \sim \mathcal{N}(0,\bm{I}_n/2)+\imath\cdot\mathcal{N}(0,\bm{I}_n/2).$ For TWRGD, the truncation parameters are set to $\tau_0=7$, $\tau_1=4.5$, and $\tau_2=8$. Furthermore, the comparative algorithms evaluated in these experiments utilize the optimal parameters recommended in~\cite{RefWorks:RefID:12-chen2017solving,RefWorks:RefID:34-cai2018solving,wang2017solving}.

We compare the experimental results of different algorithms by the distance between vectors up to a global phase:  
\[
\textrm{dist}(\bm{z}_t, \bm{x}):=\mathop{\min}_{\phi\in[0, 2\pi]}\|\bm{z}_t-e^{\imath\phi}\bm{x}\|_2,
\]
following the definition in \cite{RefWorks:RefID:12-chen2017solving}. As shown in Lemma \ref{equivalence}, this distance is equivalent to $\|\bm{Z}_t-\bm{X}\|_F$ in the convergence analysis of our method. We define a relative mean squared error (MSE) as
$\textrm{dist}(\bm{z}_t, \bm{x})/{\|\bm{x}\|_2}.$

\subsection{Computational Efficiency}
In this section, we compare the efficiency of the proposed algorithm TWRGD with TRGD \cite{RefWorks:RefID:34-cai2018solving}, TWF \cite{RefWorks:RefID:12-chen2017solving}, and TAF \cite{wang2017solving}.

In Figure~\ref{fig:mse_iteration} and Figure \ref{fig:mse_time}, we plot the MSE of the TWRGD, TRGD\cite{RefWorks:RefID:34-cai2018solving}, TWF\cite{RefWorks:RefID:12-chen2017solving}, and TAF\cite{wang2017solving} algorithms against the number of iterations and CPU time over 50 experiments with $m=6n$, respectively. The solid curves represent the mean MSE, while the shaded regions indicate the standard deviation. The results clearly demonstrate the superior performance of TWRGD compared to its non-weighted counterpart, TRGD, as well as TWF and TAF. Notably, TWRGD achieves a substantial improvement over TRGD, significantly reducing the number of iterations required for convergence.

\begin{figure}[H]
    \centering
    \begin{minipage}{0.48\textwidth}
        \centering
        \includegraphics[width=\textwidth]{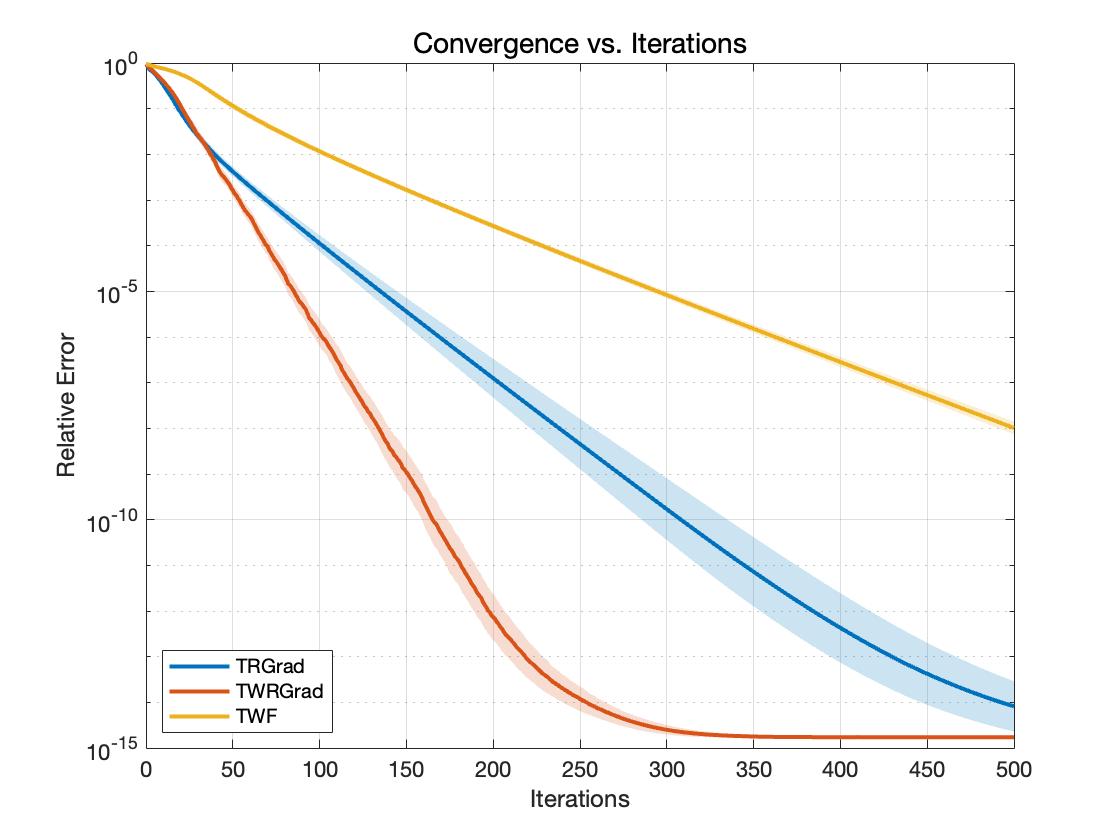}
        \caption{Relative MSE versus iteration count using: i) TWF; ii) TRGD; iii) TWRGD.}
        \label{fig:mse_iteration}
    \end{minipage}
    \hfill
    \begin{minipage}{0.48\textwidth}
        \centering
        \includegraphics[width=\textwidth]{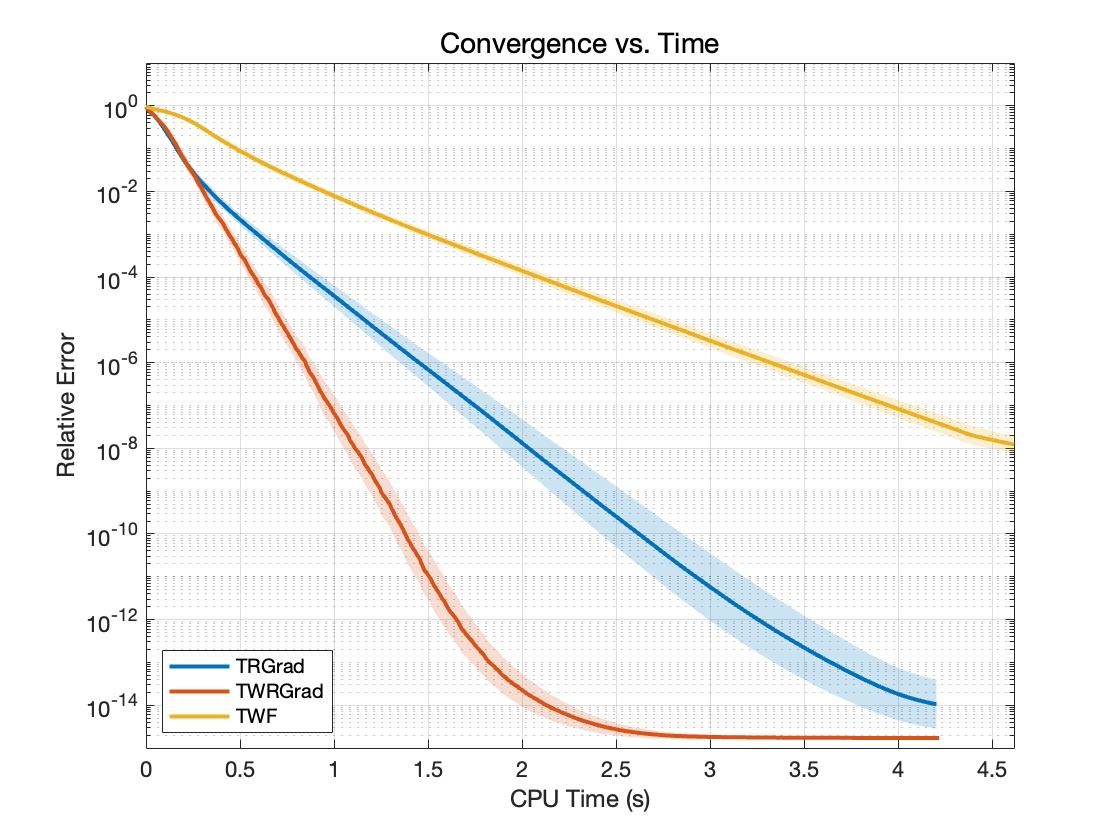}
        \caption{Relative MSE versus time (seconds) using: i) TWF; ii) TRGD; iii) TWRGD.}
        \label{fig:mse_time}
    \end{minipage}
\end{figure}

Next, we investigate the sensitivity of all the algorithms to variations in the number of measurements. We explore a range of measurement counts, $m$, selected from the set $\{10n, 12n, \dots, 30n\}$. We report the number of iterations and the computational time required to achieve MSE less than $10^{-3}$. Then we depict the iteration count and computation time of TWRGD, TRGD and TWF versus $m/n$ in Figure \ref{fig:iterations} and Table \ref{tab:time_movern} respectively.

The results presented in Figure~\ref{fig:iterations} and Table~\ref{tab:time_movern} demonstrate that the TWRGD algorithm consistently outperforms all competing methods in terms of both iteration count and computational time required to reach MSE of $10^{-3}$ across all tested values of $m$. As illustrated in Section~\ref{subsection:convergence speed}, the condition numbers for both TRGD and TWF remain strictly larger than $1$, resulting in slower convergence even when $m$ is large. In contrast, the condition number of TWRGD approaches $1$, enabling significantly faster convergence. The results of our experiment corroborate this: even when $m =30n$, TWRGD has the fastest convergence, compared to TRGD and TWF; besides, TWF has the slowest convergence speed, which is due to its largest condition number.

\begin{figure}[H]
    \centering
    \includegraphics[width=0.7\textwidth]{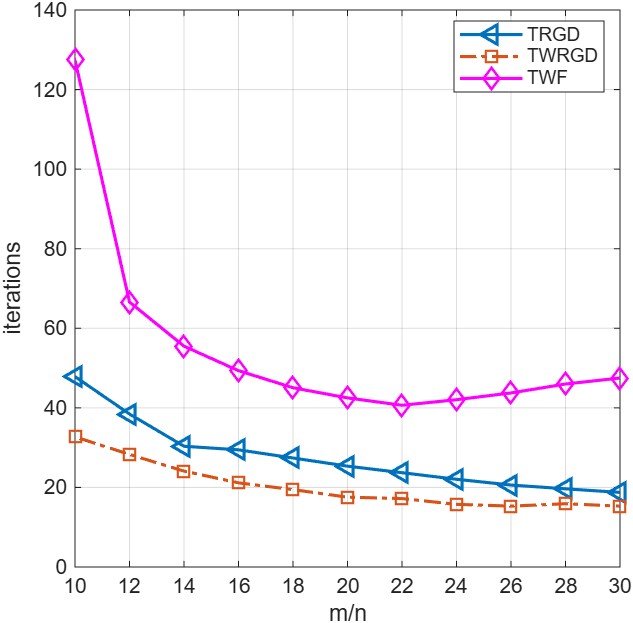}
  \caption{Iterations versus $m/n$ using: i) TRGD; ii) TWRGD; iii) TWF. }
   \label{fig:iterations}
\end{figure}

\begin{table}[t]
\centering
\caption{Computational CPU time (in seconds) versus $m/n$ for TRGD, TWRGD, and TWF.}
\label{tab:time_movern}
\begin{tabular}{c|ccc}
\hline
$m/n$ & TRGD & TWRGD & TWF \\
\hline
10 & 0.5629 & \textbf{0.3823} & 8.1845 \\
12 & 0.5785 & \textbf{0.4186} & 5.0432 \\
14 & 0.5453 & \textbf{0.4345} & 4.9137 \\
16 & 0.6246 & \textbf{0.4427} & 4.9910 \\
18 & 0.6549 & \textbf{0.4538} & 5.1251 \\
20 & 0.6686 & \textbf{0.4706} & 5.3459 \\
22 & 0.6764 & \textbf{0.4717} & 5.6182 \\
24 & 0.6854 & \textbf{0.4885} & 6.3042 \\
26 & 0.7141 & \textbf{0.5147} & 7.1025 \\
28 & 0.7610 & \textbf{0.6135} & 8.1221 \\
30 & 0.7793 & \textbf{0.6301} & 8.9180 \\
\hline
\end{tabular}
\end{table}

\subsection{Success Rates} 
This section evaluates the empirical probability over 50 independent trials considering varying numbers of measurements. We choose the ratio $m/n$ ranging from $2$ to $10$ in integer increments. A trial is deemed successful if the relative error satisfies MSE less than $10^{-3}$ within 500 steps.

Figure~\ref{fig:successful_probability 1} depicts the success rates of these algorithms versus $m/n$. It demonstrates that TWRGD and TRGD results in high-probability exact recovery once the number of measurements reaches approximately $m = 5n$. It is superior to TWF, but inferior to TAF. When $m\geq5n$, TWRGD, TRGD, TAF, and TWF all achieve successful recovery with a probability approaching one.
\begin{figure}[H]
    \centering
    \includegraphics[width=0.7\textwidth]{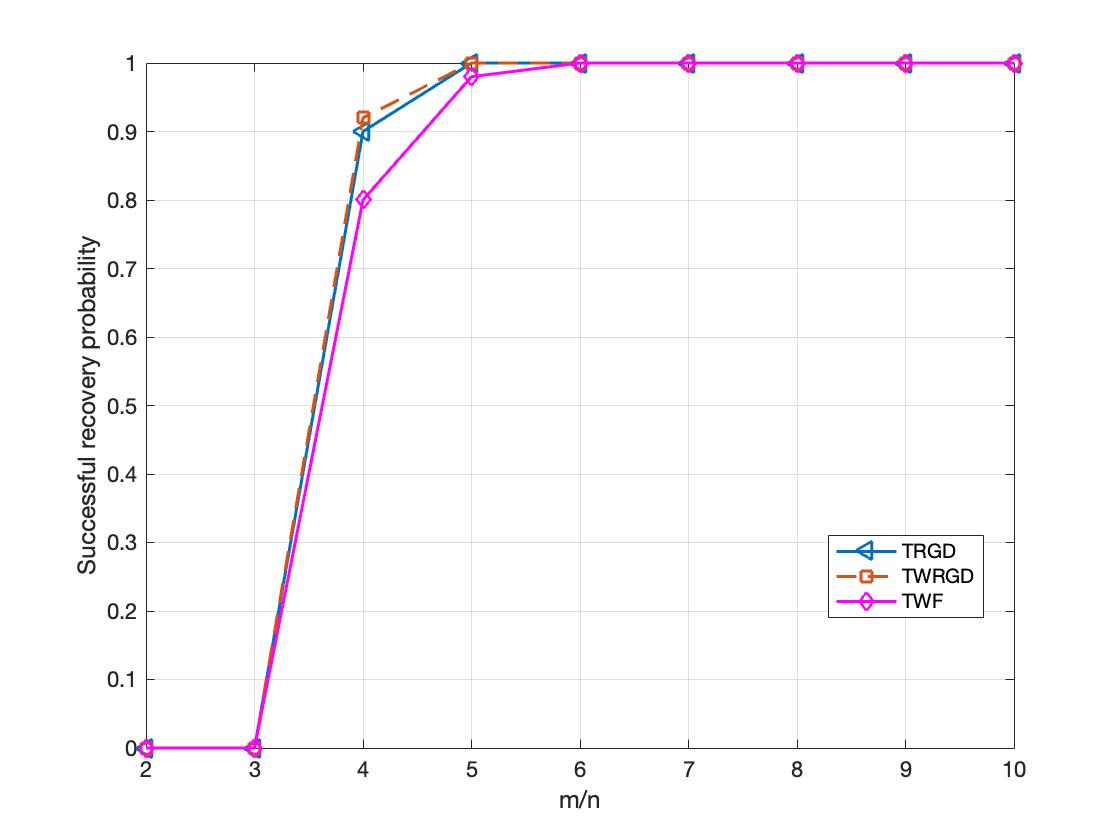}
\caption{Successful probability versus $m/n$ using: i) TWF; ii) TRGD; iii) TWRGD; iv) TAF. TWF and TAF use fixed step sizes.}
   \label{fig:successful_probability 1}
\end{figure}

\section{Conclusions}\label{sec7}
In this paper, we present two Weighted Riemannian Gradient Descent algorithms for solving a system of phaseless equations by defining a novel weighted metric. We have rigorously established an exact recovery guarantee for their truncated versions, demonstrating their capability to achieve successful recovery while maintaining optimal sampling complexity: $O(n)$ for Gaussian measurements. Compared with the canonical RGD and TWF, under the new Riemannian metric, the convergence speed of TWRGD for solving phase retrieval problems has improved and tends towards the optimal rate. Furthermore, empirical evaluations confirm the competitiveness of our algorithm compared to other state-of-the-art first-order methods. 

Finally, we conclude this paper by raising several intriguing questions and potential directions for future research. Firstly, we consider the exploration of alternative loss functions beyond the conventional least-square criterion \eqref{least_square}, such as the \textit{amplitude-based} loss introduced in \cite{wang2017solving}. Secondly, the investigation of sparse phase retrieval, as explored in works like \cite{Lu:2011} and \cite{Babak:2012}, presents an interesting and challenging direction for further studies. Lastly, extending the innovative weighted metric to address other problems, such as completion of the Euclidean distance matrix \cite{Alfakih1999SolvingED} and blind deconvolution \cite{li2016rapid}, holds significant potential and relevance.
\bibliographystyle{plain}
\bibliography{ref}
\section{Appendix}
\subsection{Useful Lemmas}
In this section, we provide some useful lemmas for supporting our results.
\begin{lemma}[Lemma 7, \cite{RefWorks:RefID:33-li2021riemannian}]\label{uuu11}
For any $\bm{x},\bm{ z} \in \mathbb{C}^n$ obeying $\operatorname{Re}(\bm{z}^*\bm{x})\geq0$, we have
$$
 \left\|\bm{z z}^*-\bm{x x}^*\right\|_F^2 \geq \frac{4}{5}\|\bm{z}-\bm{x}\|_2^2 \cdot\|\bm{x}\|_2^2. 
$$

\end{lemma}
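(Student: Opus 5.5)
As stated, the inequality cannot hold under the hypothesis $\operatorname{Re}(\bm{z}^*\bm{x})\geq 0$ alone. Take $\bm{z}=\imath\bm{x}$. Then $\bm{z}^*\bm{x}=-\imath\|\bm{x}\|_2^2$ has real part $0$, and $\bm{z}\bm{z}^*=\bm{x}\bm{x}^*$, so the left-hand side is $0$. The right-hand side, however, is $\frac{4}{5}\cdot 2\|\bm{x}\|_2^4>0$. The hypothesis that actually works (and presumably the one intended, since $\bm{z}$ is always taken phase-aligned with $\bm{x}$) is $\bm{z}^*\bm{x}\in\mathbb{R}_{\geq 0}$. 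My plan is to prove the lemma under that hypothesis, and I would flag this correction where the lemma is invoked.

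\emph{Reduction.} Both sides are homogeneous of degree $4$ in $(\bm{x},\bm{z})$, so I will assume $\|\bm{x}\|_2=1$; the case $\bm{x}=\bm{0}$ is trivial. I decompose $\bm{z}=r\bm{x}+b\bm{v}$, where $\bm{v}\perp\bm{x}$ is a unit vector, $b\geq 0$, and $r=\bm{x}^*\bm{z}\geq 0$ by the corrected hypothesis. Then $\|\bm{z}\|_2^2=r^2+u$ with $u:=b^2\geq 0$, and $|\bm{z}^*\bm{x}|^2=r^2$. Expanding the Frobenius norm via traces gives
\[
\|\bm{z}\bm{z}^*-\bm{x}\bm{x}^*\|_F^2=\|\bm{z}\|_2^4+\|\bm{x}\|_2^4-2|\bm{z}^*\bm{x}|^2=(r^2+u)^2+1-2r^2,
\]
and $\|\bm{z}-\bm{x}\|_2^2=r^2+u+1-2r$.

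\emph{Key step.} It suffices to show that
\[
f(r,u):=(r^2+u)^2+1-2r^2-\tfrac{4}{5}\left(r^2+u+1-2r\right)\geq 0
\]
for all $r,u\geq 0$. For fixed $r$, $f$ is a convex quadratic in $u$ with $\partial_u f=2(r^2+u)-\tfrac45$, so I split into two cases.

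\textbf{Case $r^2\geq \tfrac25$.} The minimum over $u\geq 0$ is at $u=0$, where
\[
f(r,0)=(r^2-1)^2-\tfrac45(r-1)^2=(r-1)^2\left((r+1)^2-\tfrac45\right)\geq 0,
\]
since $(r+1)^2\geq 1$.

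\textbf{Case $r^2<\tfrac25$.} The minimum is at $u=\tfrac25-r^2$, where a direct substitution gives
\[
f=\tfrac{4}{25}+1-2r^2-\tfrac45\left(\tfrac75-2r\right)=\tfrac1{25}+2r\left(\tfrac45-r\right)>0,
\]
because $r<\sqrt{2/5}<\tfrac45$.

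Combining the two cases gives the claim. The only delicate point is the hypothesis: the proof uses $\operatorname{Re}(\bm{x}^*\bm{z})=r=|\bm{x}^*\bm{z}|$ exactly when replacing $-2\operatorname{Re}(\bm{x}^*\bm{z})$ by $-2r$ in $\|\bm{z}-\bm{x}\|_2^2$. The remainder is the routine one-variable minimization above.
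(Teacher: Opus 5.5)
The paper gives no proof of this lemma; it is imported from the cited reference, so there is no in-paper argument to compare with. Your counterexample is correct. For $\bm{z}=\imath\bm{x}$ you have $\operatorname{Re}(\bm{z}^*\bm{x})=0$ and $\bm{z}\bm{z}^*=\bm{x}\bm{x}^*$, while $\|\bm{z}-\bm{x}\|_2^2=2\|\bm{x}\|_2^2$. So the statement as printed is false over $\mathbb{C}$: the hypothesis $\operatorname{Re}(\bm{z}^*\bm{x})\ge 0$ is the real-case condition carried over without the needed phase alignment.

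Your proof of the corrected statement, assuming $\bm{z}^*\bm{x}\in\mathbb{R}_{\ge 0}$, is complete. The trace identity $\|\bm{z}\bm{z}^*-\bm{x}\bm{x}^*\|_F^2=\|\bm{z}\|_2^4+\|\bm{x}\|_2^4-2|\bm{z}^*\bm{x}|^2$ and the decomposition $\bm{z}=r\bm{x}+b\bm{v}$ with $r\ge 0$ are both right. Both branches of the convex minimization in $u$ also check out: $f(r,0)=(r-1)^2\bigl((r+1)^2-\tfrac45\bigr)$, and $f=\tfrac{1}{25}+2r(\tfrac45-r)$ with $0\le r<\sqrt{2/5}$.

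The correction costs the paper nothing where the lemma is used. In Propositions \ref{RIP:Gaussian} and \ref{WRC}, $\bm{Z}=\bm{z}\bm{z}^*$ determines $\bm{z}$ only up to a global phase. The truncation events $\mathcal{E}_1^k(\bm{z})$ and $\mathcal{E}_2^k(\bm{z})$ depend only on $|\bm{a}_k^*\bm{z}|$ and $\|\bm{z}\|_2$, so one may always choose the phase with $\bm{z}^*\bm{x}\ge 0$.

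Your hypothesis is symmetric in $\bm{x}$ and $\bm{z}$. Hence your argument also gives the variant with $\|\bm{z}\|_2$ in place of $\|\bm{x}\|_2$, which is what is actually used in \eqref{square1}.

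Lemma \ref{equivalence} has the same defect. Taking $\bm{z}=e^{\imath\theta}\bm{x}$ with small $\theta\neq 0$ satisfies its hypotheses but violates its lower bound, since the left-hand Frobenius norm is $0$. It is repaired in the same way: impose $\bm{z}^*\bm{x}\ge 0$, or measure the error by $\min_{\phi}\|\bm{z}-e^{\imath\phi}\bm{x}\|_2$.
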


\begin{lemma}[Lemma 3.1, \cite{RefWorks:RefID:11-candes2013phaselift}]\label{lem8.1}
Let $\{\bm{a}_k\}_{k=1}^{m}$ follow \eqref{eq:Gaussian}. For any $\epsilon>0$, define $\delta_1:=4(\epsilon^2+\epsilon)$. If $m \geq 2 \epsilon^{-2} n$, then with probability at least $1-2 e^{-m \epsilon^2 / 2}$,
\begin{equation} \label{gaussian:rip1}
(1-\delta_1) \leq \frac{1}{m}\left\|\mathcal{A}\left(\bm{u} \bm{u}^*\right)\right\|_1 \leq(1+\delta_1)    
\end{equation}
holds for all unit vectors $\bm{u}\in\mathbb{C}^{n}$. 
\end{lemma}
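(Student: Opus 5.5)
This follows the standard argument of Candès--Li (PhaseLift, Lemma 3.1). First, reduce to a quadratic form: since $\bm{a}_k$ are Gaussian and $\bm{u}\bm{u}^*$ is positive semidefinite, $\frac1m\|\mathcal{A}(\bm{u}\bm{u}^*)\|_1 = \frac1m\sum_k |\bm{a}_k^*\bm{u}|^2 = \bm{u}^*\big(\frac1m\sum_k \bm{a}_k\bm{a}_k^*\big)\bm{u}$. Proving the bound \eqref{gaussian:rip1} uniformly over unit $\bm{u}$ is therefore equivalent to the spectral bound
\[
\Big\|\frac1m\sum_{k=1}^m \bm{a}_k\bm{a}_k^* - \bm{I}_n\Big\| \le \delta_1 .
\]

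Second, stack the vectors into $\bm{A}=[\bm{a}_1,\ldots,\bm{a}_m]^*\in\mathbb{C}^{m\times n}$. Each entry is complex Gaussian with $\mathbb{E}|A_{kj}|^2=1$, and $\frac1m\sum_k\bm{a}_k\bm{a}_k^* = \frac1m\bm{A}^*\bm{A}$. The eigenvalues of this matrix are $s_i^2/m$, where $s_i$ are the singular values of $\bm{A}$. I would invoke the Gordon/Davidson--Szarek bound for Gaussian matrices together with Gaussian concentration of Lipschitz functions, since $s_{\max}$ and $s_{\min}$ are $1$-Lipschitz in the entries. In the complex case one writes $\bm{A}$ as a real $2m\times 2n$ structured matrix, or applies the complex version of the bound directly. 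This gives, with probability at least $1-2e^{-t^2/2}$,
\[
\sqrt{m}-\sqrt{n}-t \le s_{\min}(\bm{A}) \le s_{\max}(\bm{A}) \le \sqrt{m}+\sqrt{n}+t .
\]

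Third, pick $t=\epsilon\sqrt m$. The hypothesis $m\ge 2\epsilon^{-2}n$ gives $\sqrt{n/m}\le \epsilon/\sqrt2\le\epsilon$, so all singular values of $\bm{A}/\sqrt m$ lie in $[1-2\epsilon,\,1+2\epsilon]$. Squaring, the eigenvalues of $\frac1m\bm{A}^*\bm{A}$ lie in $[(1-2\epsilon)^2,(1+2\epsilon)^2]\subset[1-4\epsilon-4\epsilon^2,\,1+4\epsilon+4\epsilon^2]$. This matches $1\pm\delta_1$ with $\delta_1=4(\epsilon^2+\epsilon)$, holding with probability at least $1-2e^{-m\epsilon^2/2}$.

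The main obstacle is the second step: obtaining the precise constants, namely the variance-$1$ normalization of complex entries and the exact tail $2e^{-t^2/2}$, in the complex Gaussian setting. If the complex Davidson--Szarek bound is not available in clean form, the fallback is an $\epsilon$-net argument with Bernstein's inequality on $|\bm{a}_k^*\bm{u}|^2-1$, which are sub-exponential (exactly $\mathrm{Exp}(1)$). That route yields the same form of result but with worse absolute constants, which suffices for every use of this lemma in the paper.
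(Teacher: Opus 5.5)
Your argument is correct and is the standard proof of Lemma~3.1 of Cand\`es--Li, which the paper imports by citation without giving a proof of its own. You reduce \eqref{gaussian:rip1} to $\|\frac{1}{m}\sum_{k}\bm{a}_k\bm{a}_k^*-\bm{I}_n\|\le\delta_1$ and then apply Gaussian singular-value bounds with $t=\epsilon\sqrt{m}$ and $\sqrt{n/m}\le\epsilon$. As you point out, the one ingredient that must be cited rather than derived is the complex-Gaussian form of the singular-value bound: the real $2m\times 2n$ representation is structured rather than i.i.d., so the real Gordon theorem does not apply verbatim, but the complex statement is standard, and the Lipschitz constant $1/\sqrt{2}$ in real coordinates only improves the tail.
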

\begin{rem}
    Notice that $\eqref{gaussian:rip1}$ can be rewritten as $$(1-\delta_1)\leq \frac{1}{m}\sum_{k=1}^m\left|\bm{a}_k^{*} \bm{u}\right|^2 \leq(1+\delta_1). $$\end{rem}
\begin{lemma}\label{equivalence}
Let $\bm{z},\bm{x}\in\mathbb{C}^{n}$ obeying $\textrm{Re}(\bm{z}^*\bm{x})\geq0$ and $\|\bm{z}-\bm{x}\|_2\leq\delta_2\|\bm{x}\|_2$ for some $\delta_2\in(0,\frac{1}{2})$. Then we have 
$$
\frac{1}{2}\|\bm{x}\|_2\|\bm{z}-\bm{x}\|_2\leq\|\bm{zz}^*-\bm{xx}^*\|_F\leq(2+\delta_2)\|\bm{x}\|_2\|\bm{z}-\bm{x}\|_2.
$$
\end{lemma}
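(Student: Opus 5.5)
The two sides can be handled separately: the upper bound by a direct algebraic identity, and the lower bound by invoking Lemma \ref{uuu11}, which already contains a stronger constant. Throughout, write $\bm{h}:=\bm{z}-\bm{x}$, so that $\|\bm{h}\|_2\leq\delta_2\|\bm{x}\|_2$.

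For the upper bound, first use the identity
\[
\bm{z}\bm{z}^*-\bm{x}\bm{x}^*=\bm{z}(\bm{z}-\bm{x})^*+(\bm{z}-\bm{x})\bm{x}^*=\bm{z}\bm{h}^*+\bm{h}\bm{x}^*,
\]
which is checked by expanding the right-hand side. Each summand is rank one, so $\|\bm{z}\bm{h}^*\|_F=\|\bm{z}\|_2\|\bm{h}\|_2$ and $\|\bm{h}\bm{x}^*\|_F=\|\bm{h}\|_2\|\bm{x}\|_2$. The triangle inequality then gives
\[
\|\bm{z}\bm{z}^*-\bm{x}\bm{x}^*\|_F\leq(\|\bm{z}\|_2+\|\bm{x}\|_2)\|\bm{h}\|_2 .
\]
Next bound $\|\bm{z}\|_2\leq\|\bm{x}\|_2+\|\bm{h}\|_2\leq(1+\delta_2)\|\bm{x}\|_2$, using the hypothesis on $\|\bm{h}\|_2$. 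Substituting this yields
\[
\|\bm{z}\bm{z}^*-\bm{x}\bm{x}^*\|_F\leq(2+\delta_2)\|\bm{x}\|_2\|\bm{h}\|_2,
\]
which is the claimed upper bound. This step uses neither $\operatorname{Re}(\bm{z}^*\bm{x})\geq0$ nor $\delta_2<\frac12$.

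For the lower bound, the phase-alignment hypothesis $\operatorname{Re}(\bm{z}^*\bm{x})\geq0$ is exactly the condition of Lemma \ref{uuu11}. That lemma gives
\[
\|\bm{z}\bm{z}^*-\bm{x}\bm{x}^*\|_F^2\geq\tfrac45\|\bm{h}\|_2^2\|\bm{x}\|_2^2 .
\]
Taking square roots gives $\|\bm{z}\bm{z}^*-\bm{x}\bm{x}^*\|_F\geq\frac{2}{\sqrt5}\|\bm{x}\|_2\|\bm{h}\|_2$. Since $\frac{2}{\sqrt5}>\frac12$, the stated lower bound follows.

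There is no real obstacle in this argument. The only point requiring care is the lower bound: without phase alignment it would fail, for instance at $\bm{z}=-\bm{x}$, and Lemma \ref{uuu11} is precisely what supplies it. If one preferred not to cite that lemma, the fallback would be to prove the lower bound directly. One would expand $\|\bm{z}\bm{h}^*+\bm{h}\bm{x}^*\|_F^2$ and control the cross term $2\operatorname{Re}\big((\bm{z}^*\bm{h})(\bm{h}^*\bm{x})\big)$ from below. To do this, write $\bm{z}=\bm{x}+\bm{h}$, use $\operatorname{Re}(\bm{x}^*\bm{h})\geq-\|\bm{h}\|_2^2/2$, which follows from $\operatorname{Re}(\bm{z}^*\bm{x})\geq0$, and use $\|\bm{h}\|_2<\frac12\|\bm{x}\|_2$. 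Together these would still give a constant of at least $\frac12$.
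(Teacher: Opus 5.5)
Your upper bound is correct and is the paper's argument in slightly different form. The paper expands $\bm{zz}^*-\bm{xx}^*=\bm{hh}^*+\bm{hx}^*+\bm{xh}^*$ while you use $\bm{zh}^*+\bm{hx}^*$; both give the factor $\|\bm{h}\|_2+2\|\bm{x}\|_2\le(2+\delta_2)\|\bm{x}\|_2$. Your lower bound follows exactly the paper's route through Lemma \ref{uuu11}. That step cannot work, because the lower bound is false over $\mathbb{C}$ under the stated hypotheses; the paper's proof has the same defect. Take $\bm{x}\neq\bm{0}$ and $\bm{z}=e^{\imath\theta}\bm{x}$ with $\theta=0.1$. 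Then $\operatorname{Re}(\bm{z}^*\bm{x})=\cos\theta\,\|\bm{x}\|_2^2>0$ and $\|\bm{z}-\bm{x}\|_2=2\sin(\theta/2)\|\bm{x}\|_2<0.1\|\bm{x}\|_2$. Yet $\bm{zz}^*=\bm{xx}^*$, so the middle quantity is $0$ while $\frac12\|\bm{x}\|_2\|\bm{z}-\bm{x}\|_2>0$. The same example, or $\bm{z}=\imath\bm{x}$, shows that Lemma \ref{uuu11} as stated fails for complex vectors. It holds for real vectors, or under the stronger alignment $\bm{z}^*\bm{x}\in\mathbb{R}_{\geq0}$. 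Your remark about $\bm{z}=-\bm{x}$ reflects real-valued intuition. Over $\mathbb{C}$ the ambiguity is the whole circle $e^{\imath\phi}$, and $\operatorname{Re}(\bm{z}^*\bm{x})\ge0$ is not a phase alignment at all. In fact it already follows from $\|\bm{h}\|_2<\frac12\|\bm{x}\|_2$, since $\operatorname{Re}(\bm{z}^*\bm{x})\ge\|\bm{x}\|_2^2-\|\bm{h}\|_2\|\bm{x}\|_2$.

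Your fallback would have exposed this, but its cross term is conjugated wrongly: $\langle\bm{zh}^*,\bm{hx}^*\rangle=(\bm{z}^*\bm{h})(\bm{x}^*\bm{h})$, not $(\bm{z}^*\bm{h})(\bm{h}^*\bm{x})$. With $a:=\bm{x}^*\bm{h}$ the correct expansion is
\[
\|\bm{zz}^*-\bm{xx}^*\|_F^2=2\|\bm{x}\|_2^2\|\bm{h}\|_2^2+\|\bm{h}\|_2^4+4\|\bm{h}\|_2^2\operatorname{Re}a+2(\operatorname{Re}a)^2-2(\operatorname{Im}a)^2 .
\]
Here $\operatorname{Im}a=\operatorname{Im}(\bm{x}^*\bm{z})$ is not constrained by any hypothesis. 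Since $|\operatorname{Im}a|$ can be as large as $\|\bm{x}\|_2\|\bm{h}\|_2$, the last term can cancel the leading one; in the example above the whole expression vanishes.

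The repair is to assume $\bm{z}^*\bm{x}\in\mathbb{R}_{\geq0}$ instead. This is what the minimizing phase in $\textrm{dist}(\bm{z},\bm{x})$ provides, so the paper's intended use survives after replacing $\bm{x}$ by the optimally rotated $e^{\imath\phi}\bm{x}$. Then $\operatorname{Im}a=0$, and minimizing the expansion over $\operatorname{Re}a$ gives $\|\bm{zz}^*-\bm{xx}^*\|_F^2\ge2\|\bm{x}\|_2^2\|\bm{h}\|_2^2-\|\bm{h}\|_2^4$. Using $\|\bm{h}\|_2\le\frac12\|\bm{x}\|_2$, this is at least $\frac74\|\bm{x}\|_2^2\|\bm{h}\|_2^2$. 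The lower bound then follows directly, without Lemma \ref{uuu11}.
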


\begin{proof}
On the one hand, by Lemma \ref{uuu11}, we have $\|\bm{x}\|_2\|\bm{z}-\bm{x}\|_2\leq2\|\bm{zz}^*-\bm{xx}^*\|_F$ as $\textrm{Re}(\bm{z}^*\bm{x})\geq0$. On the other hand, as $\|\bm{z}-\bm{x}\|_2\leq\delta_2\|\bm{x}\|_2$ with $\delta_2\in(0,\frac{1}{2})$, we can get
$$
\|\bm{zz}^*-\bm{xx}^*\|_F^2\leq \|\bm{z}-\bm{x}\|_2^4+4\|\bm{z}-\bm{x}\|_2^3\|\bm{x}\|_2+4\|\bm{z}-\bm{x}\|_2^2\|\bm{x}\|_2^2=\|\bm{z}-\bm{x}\|_2^2(2\|\bm{z}-\bm{x}\|_2+\|\bm{x}\|_2)^2,
$$
which implies that $\|\bm{zz}^*-\bm{xx}^*\|_F\leq (2+\delta_2)\|\bm{x}\|_2\|\bm{z}-\bm{x}\|_2$. This completes the proof.

\end{proof}

\subsection{Computational Complexity}\label{sec8.2}
In this section, we analyze the computational complexity of Algorithms \ref{alg:1} in detail. As will be shown below, these two algorithms essentially rely on the iterative update of two vectors, with their overall computational complexity concentrated on matrix-vector multiplications. Notably, they do not require the large-scale operations (e.g., SVD decomposition and matrix multiplication) depicted in the statements of Algorithms \ref{alg:1}. This not only demonstrates the efficiency of the proposed algorithms but also provides theoretical support for the experimental simulations presented in Section \ref{sec6}.
For simplicity, we use the non-truncated version of WRGD as an example for illustration. Denote $\bm{A}=[\bm{a}_1^*,\ldots,\bm{a}_m^*]^{\top}$ as the random complex Gaussian measurements. At the $t$-th iteration, set $\bm{w}_t:=\bm{A}\bm{z}_t$ with each entry $w_{t}^{k}:=\bm{a}_k^*\bm{z}_t,(k=1,\ldots,m)$, $\bm{g}_t:=\bm{G}_t\bm{u}_t$ with $\bm{u}_t=\frac{\bm{z}_t}{\|\bm{z}_t\|_2}$, and $\theta_t:=\bm{u}_t^*\bm{G}_t\bm{u}_t$. Then $\bm{q}_t=\bm{g}_t-\theta_t\bm{u}_t$. As is shown in Algorithm \ref{alg:1}, we have 
\[\begin{split}
\mathcal{T}_{\mathbb{T}_t}^{(\mathfrak{o})}(\bm{G}_t )&=  \bm{u}_t\bm{u}_t^* \bm{G}_t+ \bm{G}_t \bm{u}_t\bm{u}_t^* - \tfrac{3}{2} \bm{u}_t\bm{u}_t^* \bm{G}_t\bm{u}_t\bm{u}_t^*\\
&=\bm{u}_t\bm{u}_t^* \bm{G}_t(\bm{I}_n- \bm{u}_t\bm{u}_t^*)+(\bm{I}_n- \bm{u}_t\bm{u}_t^*)\bm{G}_t\bm{u}_t\bm{u}_t^*+\frac{1}{2}\bm{u}_t\bm{u}_t^* \bm{G}_t\bm{u}_t\bm{u}_t^* \\
&=\bm{u}_t\bm{q}_t^*+\bm{q}_t\bm{u}_t^*+\frac{1}{2}\theta_t\bm{u}_t\bm{u}_t^*,
\end{split}
\]
This yields that 
\[
\begin{split}
\bm{W}_t&=\bm{Z}_t+\alpha_t\mathcal{T}_{\mathbb{T}_t}^{(\mathfrak{o})}(\bm{G}_t )\\
&=\bm{z}_t\bm{z}_t^*+\alpha_t\bigg(\bm{u}_t\bm{q}_t^*+\bm{q}_t\bm{u}_t^*+\frac{1}{2}\theta_t\bm{u}_t\bm{u}_t^*\bigg)\\
&=\left[\begin{array}{ll}\bm{u}_t & \frac{\bm{q}_t}{\|\bm{q}_t\|_2}\end{array}\right]\left[\begin{array}{ll}\|\bm{z}_t\|_2^2+\frac{\alpha_t\theta_t}{2}& \alpha_t\|\bm{q}_t\|_2\\ \alpha_t\|\bm{q}_t\|_2& 0\end{array}\right]\left[\begin{array}{l}\bm{u}_t^* \\ \frac{\bm{q}_t^*}{\|\bm{q}_t\|_2}\end{array}\right].
\end{split}
\]
and $\bm{W}_t$ is a semi-definite positive matrix with rank at most $2$.
Thus, 
\[
\begin{split}
\bm{Z}_{t+1}&=\mathcal{H}_1(\bm{W}_t)=\left[\begin{array}{ll}\bm{u}_t & \frac{\bm{q}_t}{\|\bm{q}_t\|_2}\end{array}\right]\mathcal{H}_1\bigg(\left[\begin{array}{ll}\|\bm{z}_t\|_2^2+\frac{\alpha_t\theta_t}{2}& \alpha_t\|\bm{q}_t\|_2\\ \alpha_t\|\bm{q}_t\|_2& 0\end{array}\right]\bigg)\left[\begin{array}{l}\bm{u}_t^* \\ \frac{\bm{q}_t^*}{\|\bm{q}_t\|_2}\end{array}\right]
\end{split}
\]
and we can calculate $\mathcal{H}_1(\bm{W}_t)$ based on the eigen-decomposition of the $2\times2$ matrix $\left[\begin{array}{ll}\|\bm{z}_t\|_2^2+\frac{\alpha_t\theta_t}{2}& \alpha_t\|\bm{q}_t\|_2\\ \alpha_t\|\bm{q}_t\|_2& 0\end{array}\right]$. Then we can derive its largest eigenvalue and the corresponding eigenvector as
\[\begin{split}
&\lambda_t:=\frac{(\|\bm{z}_t\|_2^2+\frac{\alpha_t\theta_t}{2})+\sqrt{(\|\bm{z}_t\|_2^2+\frac{\alpha_t\theta_t}{2})^2+4\alpha_t^2\|\bm{q}_t\|_2^2}}{2}\\
&\bm{\nu}_t=c_t\cdot\bigg[\begin{array}{ll} \alpha_t\|\bm{q}_t\|_2&  \frac{\sqrt{(\|\bm{z}_t\|_2^2+\frac{\alpha_t\theta_t}{2})^2+4\alpha_t^2\|\bm{q}_t\|_2^2}-(\|\bm{z}_t\|_2^2+\frac{\alpha_t\theta_t}{2})}{2}\end{array}\bigg]^{\top},
\end{split}
\]
where $c_t:=\bigg( \alpha_t^2\|\bm{q}_t\|_2^2+( \frac{\sqrt{(\|\bm{z}_t\|_2^2+\frac{\alpha_t\theta_t}{2})^2+4\|\bm{q}_t\|_2^2}-(\|\bm{z}_t\|_2^2+\frac{\alpha_t\theta_t}{2})}{2})^2 \bigg)^{-\frac{1}{2}} $.
This implies that 
\[\begin{split}
\bm{Z}_{t+1}&=\lambda\left[\begin{array}{ll}\bm{u}_t & \frac{\bm{q}_t}{\|\bm{q}_t\|_2}\end{array}\right]\bm{\nu}_{t}\bm{\nu}_{t}^*\left[\begin{array}{l}\bm{u}_t^* \\ \frac{\bm{q}_t^*}{\|\bm{q}_t\|_2}\end{array}\right]\\
&=c_t^2\lambda_t\bigg(\alpha_t^2\|\bm{q}_t\|_2^2\bm{u}_t+\frac{\sqrt{(\|\bm{z}_t\|_2^2+\frac{\alpha_t\theta_t}{2})^2+4\alpha_t^2\|\bm{q}_t\|_2^2}-(\|\bm{z}_t\|_2^2+\frac{\alpha_t\theta_t}{2})}{2\|\bm{q}_t\|_2}\bm{q}_t\bigg)\\
&\quad\cdot\bigg(\alpha_t^2\|\bm{q}_t\|_2^2\bm{u}_t+\frac{\sqrt{(\|\bm{z}_t\|_2^2+\frac{\alpha_t\theta_t}{2})^2+4\alpha_t^2\|\bm{q}_t\|_2^2}-(\|\bm{z}_t\|_2^2+\frac{\alpha_t\theta_t}{2})}{2\|\bm{q}_t\|_2}\bm{q}_t\bigg)^*\\
&:=\bm{z}_{t+1}\bm{z}_{t+1}^*
\end{split}\]
with rank-$1$, where 
$$
\bm{z}_{t+1}:=c\sqrt{\lambda}\bigg(\alpha_t^2\|\bm{q}_t\|_2^2\bm{u}_t+\frac{\sqrt{(\|\bm{z}_t\|_2^2+\frac{\alpha_t\theta_t}{2})^2+4\alpha_t^2\|\bm{q}_t\|_2^2}-(\|\bm{z}_t\|_2^2+\frac{\alpha_t\theta_t}{2})}{2\|\bm{q}_t\|_2}\bm{q}_t\bigg).
$$
Therefore, based on the above analysis, we find that the WRGD-typed algorithm essentially conducts iterative updates on two vectors—a design that significantly enhances its computational efficiency, rendering it even more efficient than vector-space iterative algorithms (e.g., WF algorithm). Moreover, the equivalent form of TWRGD is presented as follows.
\begin{algorithm}[H]
    \renewcommand{\algorithmicrequire}{\textbf{Initialization:}}
    \caption{Equivalent Vector Form of TWRGD Algorithm }
    \label{alg:4}
    \begin{algorithmic}[1]
         \Require $\bm{Z}=\bm{z}_0\bm{z}_0^*$.
          
          \ForAll{$t=0, 1, \ldots, $}
          \State $\bm{w}_t=\bm{A}\bm{z}_t$
          \State $\mathbb{I}_t=\{k|\mathcal{E}_0^k(\bm{y})\cap \mathcal{E}_1^k(\bm{z}_t)\cap \mathcal{E}_2^k(\bm{z}_t)\}$
          \State $\bm{g}_t=\frac{1}{\|\bm{z}_t\|_2} \sum_{k\in \mathbb{I}_t}(y_k-|w_t^{k}|^2 )w_t^k\bm{a}_k$ 
         \State $\theta_t:=\bm{u}_t^*\bm{G}_t\bm{u}_t$
         \State $\bm{q}_t=\bm{g}_t-\frac{\theta_t}{\|\bm{z}_t\|_2}\bm{z}_t$
         \State $a_t=\|\bm{z}\|_2^2+\frac{\alpha_t\theta_t}{2} $
         \State $b_t=2\alpha_t\|\bm{q}_t\|_2$
         \State $\lambda_t=\frac{a_t+\sqrt{b_t^2+a_t^2}}{2}$
         \State $c_t=\bigg(b_t^2+ (\frac{\sqrt{b_t^2+a_t^2}-\alpha_t}{2})^2\bigg)^{-\frac{1}{2}}$
         \State $\bm{z}_{t+1}=\sqrt{\lambda_t}c_t\bigg(\frac{b_t^2}{\|\bm{z}_t\|}\bm{z}_t+\frac{\sqrt{a_t^2+b_t^2}-a_t}{2\|\bm{q}_t\|_2}\bm{q}_t\bigg)$

          \EndFor

\end{algorithmic} 
\end{algorithm}

\end{document}